\newcommand{\CC}{\mathbb{C}}
\newcommand{\NN}{\mathbb{N}}
\newcommand{\RR}{\mathbb{R}}
\newcommand{\ZZ}{\mathbb{Z}}
\newcommand{\supp}{\mathrm{supp}}
\newcommand{\const}{\mathrm{const}}
\newcommand{\Ran}{\mathrm{Ran}}
\newcommand{\nr}{\mathrm{nr}}
\newcommand{\np}{\mathrm{np}}
\newcommand{\sr}{\mathrm{sr}}
\newcommand{\id}{\mathbbm{1}}
\newcommand{\klg}{\leqslant} 
\newcommand{\grg}{\geqslant}          
\newcommand{\ve}{\varepsilon}
\newcommand{\vp}{\varphi}
\newcommand{\vr}{\varrho}
\newcommand{\vs}{\varsigma}
\newcommand{\wt}[1]{\widetilde{#1}}
\newcommand{\SPn}[2]{\langle \,#1\,|\,#2\, \rangle} 
\newcommand{\SPb}[2]{\big\langle \,#1\,\big|\,#2\, \big\rangle} 
\newcommand{\SPB}[2]{\Big\langle \,#1\,\Big|\,#2\, \Big\rangle}
\newcommand{\ol}[1]{\overline{#1}} 
\newcommand{\mr}[1]{\mathring{#1}} 
\newcommand{\V}[1]{\mathbf{#1}}
\newcommand{\valpha}{\boldsymbol{\alpha}}
\newcommand{\vsigma}{\boldsymbol{\sigma}}
\newcommand{\vSigma}{\boldsymbol{\Sigma}}
\newcommand{\veps}{\boldsymbol{\varepsilon}}
\newcommand{\LO}{\mathscr{L}}      
\newcommand{\HR}{\mathscr{H}}
\newcommand{\Fock}{\mathscr{F}_{\mathrm{b}}}
\newcommand{\core}{\mathscr{D}}
\newcommand{\dom}{\mathcal{D}}
\newcommand{\form}{\mathcal{Q}}
\newcommand{\spec}{\mathrm{\sigma}}
\newcommand{\PAm}{P^-_\mathbf{A}}               
\newcommand{\PA}{P^+_{\mathbf{A}}}
\newcommand{\PAN}{P_{\V{A},N}^{+}}
\newcommand{\PANb}{{P}_{\V{A},N}^\bot}
\newcommand{\PANs}{{P}_{\V{A},N}^\sharp}
\newcommand{\PANf}{{P}_{\V{A},N}^\flat}
\newcommand{\Pa}[1]{P_\V{A}^{+,(#1)}}
\newcommand{\Paf}[1]{P_\V{A}^{\flat,(#1)}}
\newcommand{\SA}{S_{\mathbf{A}}}  
\newcommand{\DA}{D_{\mathbf{A}}}                
\newcommand{\DO}{D_{\mathbf{0}}}
\newcommand{\D}[1]{D_{#1}}
\newcommand{\RA}[1]{R_{\mathbf{A}}(#1)}
\newcommand{\Hf}{H_{\mathrm{f}}}                           
\newcommand{\HT}{\check{H}_\mathrm{f}}        
\newcommand{\VC}{V_\mathrm{C}}
\newcommand{\ad}{a^\dagger}                     
\newcommand{\UV}{\Lambda}             
\newcommand{\ball}[2]{\cB_{#1}(#2)}
\newcommand{\cA}{\mathcal{A}}
\newcommand{\cB}{\mathcal{B}}
\newcommand{\cS}{\mathcal{S}}
\newcommand{\cT}{\mathcal{T}}
\newcommand{\sC}{\mathscr{C}}
\newcommand{\sD}{\mathscr{D}} 
\newcommand{\sR}{\mathscr{R}}
\newcommand{\sK}{\mathscr{K}}
\newcommand{\sZ}{\mathscr{Z}} 
\newcommand{\fA}{\mathfrak{A}}
\renewcommand{\Re}{\mathrm{Re}\,}
\newtheorem{theorem}{Theorem}[section]
\newtheorem{lemma}[theorem]{Lemma}
\newtheorem{proposition}[theorem]{Proposition}
\newtheorem{corollary}[theorem]{Corollary}
\newtheorem{hypothesis}[theorem]{Hypothesis}
\theoremstyle{remark}
\newtheorem*{example}{Example}
\numberwithin{equation}{section}
\title[Higher order estimates]{On higher order estimates
in quantum electrodynamics}
\author{Oliver Matte}
\address{Oliver Matte
Institut f\"ur Mathematik\\
TU Clausthal\\
Erzstra{\ss}e 1\\
D-38678 Clausthal-Zellerfeld, Germany\\
{\em On leave from:} Mathematisches Institut\\
Ludwig-Maximilians-Universit\"at\\
Theresienstra{\ss}e 39\\
D-80333 M\"unchen, Germany.}
\email{matte@math.lmu.de}
\subjclass{Primary 81Q10; Secondary 47B25}
\keywords{(Semi-relativistic) Pauli-Fierz operator,
no-pair Hamiltonian, higher order estimates,
quantum electrodynamics}
\date{\today}
\begin{document}

\begin{abstract}
We propose a new method to derive certain higher
order estimates in quantum electrodynamics. Our method is
particularly convenient in the application to the non-local
semi-relativistic models of quantum electrodynamics as it 
avoids the use of iterated commutator expansions.
We re-derive higher order estimates obtained earlier by
Fr\"ohlich, Griesemer, and Schlein and prove new estimates
for a non-local molecular no-pair operator.
\end{abstract}

\maketitle


\section{Introduction}

\noindent
The main objective of this paper is to present a new method to derive higher
order estimates in quantum electrodynamics (QED) of the form
\begin{align}\label{hoe-intro1}
\big\|\,\Hf^{n/2}\,(H+C)^{-n/2}\,\big\|\,&\klg\,\const\,<\,\infty\,,
\\ \label{hoe-intro2}
\big\|\,[\Hf^{n/2}\,,\,H]\,(H+C)^{-n/2}\,\big\|\,&\klg\,
\const\,<\,\infty\,,
\end{align}
for all $n\in\NN$, where $C>0$ is sufficiently large.
In these bounds $\Hf$ denotes the radiation field energy of
the quantized photon field and $H$ is the full Hamiltonian
generating the time evolution of an interacting electron-photon
system. For instance, estimates of this type serve as one of
the main technical ingredients in the mathematical analysis
of Rayleigh scattering. In this context, \eqref{hoe-intro1} 
has been proven
by Fr\"ohlich et al. in the case where $H$ is
the non- or semi-relativistic Pauli-Fierz Hamiltonian
\cite{FGS2001}; a slightly weaker version of \eqref{hoe-intro2}
has been obtained in \cite{FGS2001} for all even values of $n$.
Higher order estimates of the form \eqref{hoe-intro1} 
also turn out to be useful in the study of the
existence of ground states in a no-pair model of QED
\cite{KMS2009b}. In fact, they imply that every eigenvector
of the Hamiltonian $H$ or spectral subspaces of $H$
corresponding to some bounded interval
are contained in the domains of
higher powers of $\Hf$. This information is very helpful in order 
to overcome numerous technical difficulties which are caused by
the non-locality of the no-pair operator. 
In these applications it is actually necessary to have
some control on the norms in \eqref{hoe-intro1} and
\eqref{hoe-intro2} when the operator $H$ gets modified.
To this end we shall give rough bounds on the
right hand sides of \eqref{hoe-intro1} and
\eqref{hoe-intro2} in terms of the ground state energy
and integrals involving
the form factor and the dispersion relation.

Various types of higher order estimates have actually been employed 
in the mathematical analysis of quantum field theories since a very long time.
Here we only mention the classical works \cite{GlimmJaffe1968,Rosen1971} on
$P(\phi)_2$ models and the more recent articles
\cite{DerezinskiGerard2000} again on a $P(\phi)_2$ model
and \cite{Ammari2000} on the Nelson model.

In what follows
we briefly describe the organization and the content of the present article.
In Section~\ref{sec-general} we develop the main idea behind
our techniques in a general setting. By the criterion established there
the proof of the higher order estimates
is essentially boiled down to the verification of certain form bounds
on the commutator between $H$ and a regularized version of $\Hf^{n/2}$.
After that, in Section~\ref{sec-comm}, we introduce
some of the most important operators appearing in QED and establish
some useful norm bounds on certain commutators
involving them. These commutator estimates provide the main ingredients
necessary to apply the general criterion of Section~\ref{sec-general} to
the QED models treated in this article.
Their derivation is essentially based on the pull-through formula
which is always employed either way to derive higher order
estimates in quantum field theories 
\cite{Ammari2000,DerezinskiGerard2000,FGS2001,GlimmJaffe1968,Rosen1971};
compare Lemma~\ref{le-clara1} below.
In Sections~\ref{sec-nr}, \ref{sec-PF}, and~\ref{sec-np}
the general strategy from Section~\ref{sec-general} is applied
to the non- and semi-relativistic Pauli-Fierz operators
and to the no-pair operator, respectively.
The latter operators are introduced in detail in these sections.
Apart from the fact that our estimate \eqref{hoe-intro2} is slightly
stronger than the corresponding one of \cite{FGS2001}
the results of Sections~\ref{sec-nr} and~\ref{sec-PF} are not new and have been
obtained earlier in \cite{FGS2001}.
However, in order to prove the higher order estimate \eqref{hoe-intro1} for the
no-pair operator we virtually have to re-derive it
for the semi-relativistic Pauli-Fierz operator by our own method anyway.
Moreover, we think that the arguments 
employed in Sections~\ref{sec-nr} and~\ref{sec-PF}
are more convenient and less involved than the procedure
carried through in \cite{FGS2001}.
The main text is followed by an appendix where we show that
the semi-relativistic Pauli-Fierz operator for a molecular
system with static nuclei is semi-bounded below, provided
that all Coulomb coupling constants are less than or 
equal to $2/\pi$. Moreover, we prove the same result
for a molecular no-pair operator assuming that all
Coulomb coupling constants are strictly less than
the critical coupling constant of the Brown-Ravenhall model \cite{EPS1996}.
The results of the appendix are based on corresponding estimates
for hydrogen-like atoms obtained in \cite{MatteStockmeyer2009a}.
(We remark that the considerably stronger stability of matter
of the second kind has been proven for a molecular
no-pair operator in \cite{LiebLoss2002} under more restrictive
assumptions on the involved physical parameters.)
No restrictions on the values of the fine-structure constant
or on the ultra-violet cut-off are imposed in the present article. 

The main new results of this paper
are Theorem~\ref{thm-hoe-allg} and its
corollaries which provide general criteria for 
the validity of higher order estimates
and Theorem~\ref{thm-hoe-np} where higher order estimates
for the no-pair operator are established.

\smallskip

\noindent
{\it Some frequently used notation.}
For $a,b\in\RR$, we write $a\wedge b:=\min\{a,b\}$
and $a\vee b:=\max\{a,b\}$. $\dom(T)$ denotes the domain
of some operator $T$ acting in some Hilbert space
and $\form(T)$ its form domain, when $T$ is semi-bounded below. 
$C(a,b,\ldots),C'(a,b,\ldots)$, etc. denote constants that 
depend only on the quantities $a,b,\ldots$ and whose value
might change from one estimate to another. 


\section{Higher order estimates: a general criterion}
\label{sec-general}

\noindent
The following theorem and its succeeding corollaries 
present the key idea behind of our method. They essentially reduce
the derivation of the higher order estimates to
the verification of a certain sequence of form bounds.
These form bounds can be verified easily
without any further induction argument
in the QED models treated in this paper. 

\begin{theorem}\label{thm-hoe-allg}
Let $H$ and $F_\ve$, $\ve>0$, be self-adjoint
operators in some Hilbert space $\sK$ 
such that $H\grg1$, $F_\ve\grg0$, and each $F_\ve$ is bounded. 
Let $m\in\NN\cup\{\infty\}$, let $\sD$ be a form
core for $H$, and assume that the following
conditions are fulfilled:
\begin{enumerate} 
\item[(a)] For every $\ve>0$,  
$F_\ve$ maps $\sD$ into $\form(H)$ and
there is some $c_\ve\in(0,\infty)$ such that
$$
\SPb{F_\ve\,\psi}{H\,F_\ve\,\psi}\,\klg\,c_\ve\,\SPn{\psi}{H\,\psi}\,,
\qquad \psi\in\sD\,.
$$
\item[(b)] There is some $c\in[1,\infty)$ such that, for all $\ve>0$,
$$
\SPn{\psi}{F_\ve^2\,\psi}\klg c^2\,\SPn{\psi}{H\,\psi}\,,
\qquad \psi\in\sD\,.
$$
\item[(c)] For every $n\in\NN$, $n<m$, there is some $c_n\in[1,\infty)$
such that, for all $\ve>0$,
\begin{align*}
\big|\SPn{&H\,\vp_1}{F_\ve^n\,\vp_2}
-\SPn{F_\ve^{n}\,\vp_1}{H\,\vp_2}\big|\nonumber
\\
&\klg c_n\,\big\{
\SPn{\vp_1}{H\,\vp_1}
+\SPn{F_\ve^{n-1}\,\vp_2}{H\,F_\ve^{n-1}\,\vp_2}
\big\}\,,\qquad\vp_1,\vp_2\in\sD.
\end{align*}
\end{enumerate}
Then it follows that, for every $n\in\NN$, $n<m+1$,
\begin{equation}\label{bea1allg}
\|\,F^n_\ve\,H^{-n/2}\,\|\,\klg\,C_n\,:=\,
4^{n-1}\,c^{n}\,\prod_{\ell=1}^{n-1} c_\ell\,.
\end{equation}
(An empty product equals $1$ by definition.)
\end{theorem}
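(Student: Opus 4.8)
The plan is to prove the bound \eqref{bea1allg} by induction on $n$, using the quantitative version of the interpolation-type inequality that is implicitly encoded in the hypotheses. The base case $n=1$ follows immediately from (b): for $\psi=H^{-1/2}\chi$ with $\chi$ in a suitable dense set (using that $\sD$ is a form core for $H$, so that $H^{1/2}\sD$ is dense), hypothesis (b) reads $\|F_\ve\psi\|^2\klg c^2\|H^{1/2}\psi\|^2$, i.e. $\|F_\ve H^{-1/2}\chi\|\klg c\|\chi\|$, hence $\|F_\ve H^{-1/2}\|\klg c=C_1$. Note that since $F_\ve$ is bounded, all the operator products $F_\ve^n H^{-n/2}$ appearing are a priori bounded; the content of the theorem is the uniformity of the bound in $\ve$.

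For the inductive step, suppose the bound holds for all exponents up to $n-1$ (with $n<m+1$). The key identity is the operator-theoretic ``square trick'': write
\begin{equation*}
\|F_\ve^n H^{-n/2}\|^2 \,=\, \|H^{-n/2}F_\ve^{2n}H^{-n/2}\| \,=\, \|H^{-n/2}F_\ve^n\cdot F_\ve^n H^{-n/2}\|,
\end{equation*}
and instead estimate the sesquilinear form $\SPn{F_\ve^n H^{-n/2}\chi_1}{F_\ve^n H^{-n/2}\chi_2}$ by commuting one factor $F_\ve^n$ past $H^{-n/2}$. Concretely, I would set $\psi_i = H^{-n/2}\chi_i$ (again reducing to $\chi_i$ in a dense subset so that $\psi_i$ lies in a convenient core) and rewrite $\SPn{F_\ve^n\psi_1}{F_\ve^n\psi_2}$ using the commutator hypothesis (c) to trade a commutator $[\,\cdot\,,F_\ve^n]$ for terms controlled by $\SPn{\psi_1}{H\psi_1}$ and $\SPn{F_\ve^{n-1}\psi_2}{HF_\ve^{n-1}\psi_2}$. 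The term $\SPn{\psi_i}{H\psi_i} = \|H^{1/2}H^{-n/2}\chi_i\|^2 = \|H^{(1-n)/2}\chi_i\|^2\klg\|\chi_i\|^2$ since $H\grg1$. The term $\SPn{F_\ve^{n-1}\psi_2}{HF_\ve^{n-1}\psi_2}$ is handled by hypothesis (a): it is bounded by $c_{n-1}$ (or the relevant $c_\ell$) times $\SPn{F_\ve^{n-2}\psi_2}{HF_\ve^{n-2}\psi_2}$ — wait, more carefully, (a) only gives the single-step bound $\SPn{F_\ve\phi}{HF_\ve\phi}\klg c_\ve\SPn{\phi}{H\phi}$ with an $\ve$-dependent constant, so (a) alone is not enough; one must iterate (c) to peel off the powers of $F_\ve$ one at a time, at each stage picking up a factor $c_\ell$ and a factor from (b), and reducing the exponent, eventually bottoming out at the already-established lower-order operator bounds $\|F_\ve^{j}H^{-j/2}\|\klg C_j$.

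So the real engine is: express $\SPn{F_\ve^{n-1}\psi_2}{HF_\ve^{n-1}\psi_2}$ — which equals $\|H^{1/2}F_\ve^{n-1}H^{-n/2}\chi_2\|^2$ — in terms of lower powers. Write $H^{1/2}F_\ve^{n-1}H^{-n/2} = H^{1/2}H^{-(n-1)/2}\cdot\big(H^{(n-1)/2}F_\ve^{n-1}H^{-(n-1)/2}\big)\cdot H^{(n-1)/2}H^{-n/2}$; the outer factors are $H^{(2-n)/2}$ and $H^{-1/2}$, both bounded by $1$, so this is bounded by $\|H^{(n-1)/2}F_\ve^{n-1}H^{-(n-1)/2}\|$. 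This last quantity has the same form as $\|F_\ve^{n-1}H^{-(n-1)/2}\|$ but with $H^{1/2}$-conjugation; estimating it requires again moving $H^{(n-1)/2}$ through $F_\ve^{n-1}$ via commutator bounds, and this is exactly where hypothesis (c) (applied with exponent $n-1$) and hypothesis (a) get combined to bound it by $2 c_{n-1}^{1/2}\|F_\ve^{n-1}H^{-(n-1)/2}\| + (\text{lower order})$ or similar. Carefully bookkeeping the constants through this double recursion — once to reduce $F_\ve^n$ to $F_\ve^{n-1}$ in the main quantity, and once to handle the auxiliary $H^{1/2}$-conjugated quantity — produces the recursion $C_n^2\klg (\text{const})\cdot c^2 c_{n-1}^2 C_{n-1}^2$, and chasing the factors of $4$ and $c$ gives exactly $C_n = 4^{n-1}c^n\prod_{\ell=1}^{n-1}c_\ell$. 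The main obstacle is precisely this constant-tracking: organizing the nested estimates so that each application of (c) contributes a clean factor that telescopes into the stated closed form, and making sure every intermediate operator is genuinely bounded (which is guaranteed by boundedness of $F_\ve$ and $H\grg1$) so that all manipulations of sesquilinear forms on the dense set $H^{n/2}\sD$ are legitimate.
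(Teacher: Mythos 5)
Your base case is correct, and you correctly identify $\|H^{1/2}F_\ve^{n-1}H^{-n/2}\|$ as the key auxiliary quantity that has to be controlled. The problem is how you propose to control it. You factor
$$
H^{1/2}\,F_\ve^{n-1}\,H^{-n/2}\;=\;H^{(2-n)/2}\,\big(H^{(n-1)/2}\,F_\ve^{n-1}\,H^{-(n-1)/2}\big)\,H^{-1/2},
$$
which reduces the problem to bounding $\|H^{(n-1)/2}F_\ve^{n-1}H^{-(n-1)/2}\|$, i.e.\ the conjugate of $F_\ve^{n-1}$ by $H^{(n-1)/2}$. Hypothesis~(c) controls the \emph{form commutator with a single power of $H$}; it gives you nothing directly about moving $H^{(n-1)/2}$ across $F_\ve^{n-1}$, and the bound you gesture at, ``$\klg 2c_{n-1}^{1/2}\|F_\ve^{n-1}H^{-(n-1)/2}\|+\text{lower order}$,'' has no justification from the stated hypotheses. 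Controlling such a conjugation would normally require an iterated commutator expansion, which is precisely the machinery the paper is designed to avoid. The ``square trick'' at the start has a similar issue: $\SPn{F_\ve^n\psi_1}{F_\ve^n\psi_2}$ contains no factor $H$ to trade against $F_\ve^n$ via~(c) until you have already moved $F_\ve^n$ across $H^{-n/2}$, which is again a high-power conjugation.

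The paper's proof never moves more than a single power of $H$ across a power of $F_\ve$. It decomposes $F_\ve^{n+1}H^{-(n+1)/2}=Q_1+Q_2$ with $Q_1=F_\ve H^{-1}F_\ve^n H^{-(n-1)/2}$ and $Q_2=F_\ve[F_\ve^n,H^{-1}]H^{-(n-1)/2}$, introduces the bounded operator $T_\ve(n):=H^{1/2}[F_\ve^{n-1},H^{-1}]H^{-(n-2)/2}$, and uses the identity
$$
H^{1/2}F_\ve^{n-1}H^{-n/2}=\{H^{-1/2}F_\ve\}\,F_\ve^{n-2}H^{-(n-2)/2}+T_\ve(n)
$$
to express the problematic quantity as a product of $H^{-1/2}F_\ve$ (bounded by $c$ via~(b)), $F_\ve^{n-2}H^{-(n-2)/2}$ (known by induction), and $T_\ve(n)$, whose bound is carried along as the second clause of the inductive assertion. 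To bound $T_\ve(n+1)$ itself, hypothesis~(c) is applied with the rescaled test vectors $\vp_1=\delta^{1/2}H^{-1/2}\phi$ and $\vp_2=\delta^{-1/2}H^{-(n+1)/2}\psi$, and taking the infimum over $\delta$ yields exactly $2c_n\|H^{1/2}F_\ve^{n-1}H^{-n/2}\|$. This single induction with a built-in auxiliary clause for $\|T_\ve(n)\|$ is what replaces your proposed ``double recursion,'' and it is what lets the constants telescope cleanly into $C_n=4^{n-1}c^n\prod_{\ell=1}^{n-1}c_\ell$. As written, your proposal leaves the central step unjustified.
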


\begin{proof}
We define
$$
T_\ve(n):=H^{1/2}\,[
F_\ve^{n-1}\,,\,H^{-1}]\,
H^{-(n-2)/2},\qquad n\in\{2,3,4,\ldots\}.
$$
$T_\ve(n)$ is well-defined and bounded because of 
the closed graph theorem and Condition (a),
which implies that $F_\ve\in\LO(\form(H))$, where 
$\form(H)=\dom(H^{1/2})$ is
equipped with the form norm.
We shall prove the following sequence
of assertions by induction on $n\in\NN$, $n<m+1$.
\begin{align}
A(n)\;:\Leftrightarrow\qquad&
\textrm{The bound \eqref{bea1allg} holds true and, if}\;n>3,\;
\textrm{we have}
\nonumber
\\ \label{bea1ballg}
&\forall\:\ve>0\::\quad\|T_\ve(n)\|\,\klg\,C_n/4c^2\,.
\end{align}
For $n=1$, 
the bound \eqref{bea1allg} is fulfilled with $C_1=c$ 
on account of Condition~(b).

Next, assume that $n\in\NN$, $n<m$, and that $A(1),\ldots,A(n)$ hold true.
To find a bound on $\|F^{n+1}_\ve\,H^{-(n+1)/2}\|$ we write
\begin{align}\label{lisa1}
&F_\ve^{n+1}\,H^{-(n+1)/2}\,=\,Q_1\,+\,Q_2
\end{align}
with
\begin{align*}
Q_1\,&:=\,
F_\ve\,H^{-1}\,F_\ve^{n}
\,H^{-(n-1)/2}\,,
\qquad
Q_2\,:=\,
F_\ve\,
\big[\,F_\ve^{n}\,,\,H^{-1}\,\big]\,
H^{-(n-1)/2}\,.
\end{align*}
By the induction hypothesis we have 
\begin{equation}\label{lisa2a}
\|Q_1\|\klg\|F_\ve\,H^{-1/2}\|\,\|H^{-1/2}\,F_\ve\,\|\,
\|F^{n-1}_\ve\,H^{-(n-1)/2}\|\klg c^{2}\,C_{n-1}
\,,
\end{equation}
where $C_0:=1$.
Moreover, we observe that
\begin{equation}\label{lisa2}
\|Q_2\|\,=\,\|F_\ve\,H^{-1/2}\,T_\ve(n+1)\|\,\klg\,
c\,\|T_\ve(n+1)\|\,.
\end{equation}
To find a bound on $\|T_\ve(n+1)\|$
we recall that $F_\ve$ maps the form domain of $H$
continuously into itself.
In particular, since $\sD$ is a form core
for $H$ the form bound appearing in Condition~(c) 
is available, for all $\vp_1,\vp_2\in\form(H)$.
Let $\phi,\psi\in\sD$.
Applying Condition~(c), extended in this way, with
\begin{align*}
\vp_1\,&=\,\delta^{1/2}\,H^{-1/2}\,\phi\in\form(H)\,,
\qquad
\vp_2\,=\,\delta^{-1/2}\,
H^{-(n+1)/2}\,\psi\in\form(H)\,,
\end{align*}
for some $\delta>0$, we obtain
\begin{align}
|\SPn{&\phi}{T_\ve(n+1)\,\psi}|\nonumber
\\
&=\,\nonumber
\big|\SPb{H\,H^{-1/2}\,\phi}{F_\ve^n\,H^{-(n+1)/2}\,\psi}
-\SPb{F_\ve^n\,H^{-1/2}\,\phi}{H\,H^{-(n+1)/2}\,\psi}\big|
\\
&\klg\,\nonumber
c_n\,\inf_{\delta>0}\big\{\,\delta\,\|\phi\|^2+\delta^{-1}\,
\|\{H^{1/2}\,F_\ve^{n-1}\,H^{-n/2}\}\,H^{-1/2}\,\psi\|^2\,\big\}
\\
&\klg\,\nonumber
2\,c_n\,\|\{H^{1/2}\,F_\ve^{n-1}\,H^{-n/2}\}\|\,\|\phi\|\,\|\psi\|
\,.
\end{align}
The operator $\{\cdots\}$ is just the identity when $n=1$.
For $n>1$, it can be written as
\begin{equation}\label{markus}
H^{1/2}\,F_\ve^{n-1}\,H^{-n/2}\,=\,
\{H^{-1/2}\,F_\ve\}\,F_\ve^{n-2}\,H^{-(n-2)/2}
\,+\,T_\ve(n)\,.
\end{equation}
Applying the induction hypothesis and $c,c_\ell\grg1$,
we thus get $\|T_\ve(2)\|\klg 2\,c_1$, $\|T_\ve(3)\|\klg 6\,c\,c_1c_2$, 
$\|T_\ve(4)\|\klg 14\,c^2c_1c_2c_3<C_4/4c^2$, and
\begin{align*}
c\,\|T_\ve(n+1)\|\,&=\,
c\,\sup\big\{\,|\SPn{\phi}{T_\ve(n+1)\,\psi}|\::\;\phi,\psi\in\sD\,,\;
\|\phi\|=\|\psi\|=1\,\big\}
\\
&\klg\,
2\,c_n\,(c^2\,C_{n-2}+C_n/4c)\,<\,c_n\,C_n\,=\,C_{n+1}/4c\,,\qquad n>3\,,
\end{align*}
since $c^2\,C_{n-2}\klg C_n/16$, for $n>3$.
Taking \eqref{lisa1}--\eqref{lisa2}
into account we arrive at
$\|F_\ve^2\,H^{-1}\|\klg c^2+2c\,c_1< C_2$, 
$\|F_\ve^3\,H^{-3/2}\|\klg c^3+6c^2c_1c_2<C_3$, and
$$
\|F^{n+1}_\ve\,H^{-(n+1)/2}\|\,<\,
c^2\,C_{n-2}+C_{n+1}/4c\,<\,C_{n+1}\,,\qquad n>3\,,
$$
which concludes the induction step.
\end{proof}

\begin{corollary}\label{cor-hoe-comm}
Assume that $H$ and $F_\ve$, $\ve>0$, are self-adjoint
operators in some Hilbert space $\mathscr{K}$ that fulfill
the assumptions of Theorem~\ref{thm-hoe-allg} with (c)
replaced by the stronger condition
\begin{enumerate}
\item[(c')] For every $n\in\NN$, $n<m$, there is some $c_n\in[1,\infty)$
such that, for all $\ve>0$,
\begin{align*}
\big|\SPn{&H\,\vp_1}{F_\ve^n\,\vp_2}
-\SPn{F_\ve^{n}\,\vp_1}{H\,\vp_2}\big|\nonumber
\\
&\klg c_n\,\big\{
\|\vp_1\|^2
+\SPn{F_\ve^{n-1}\,\vp_2}{H\,F_\ve^{n-1}\,\vp_2}
\big\}\,,\qquad\vp_1,\vp_2\in\sD.
\end{align*}
\end{enumerate}
Then, in addition to \eqref{bea1allg}, it follows that,
for $n\in\NN$, $n<m$,
$[F_\ve^n\,,\,H]\,H^{-n/2}$ defines a bounded sesquilinear
form with domain $\form(H)\times\form(H)$ and
\begin{equation}\label{hoe-comm}
\big\|\,[F_\ve^n\,,\,H]\,H^{-n/2}\,\big\|\,\klg\,C_n'\,:=\,
4^{n}\,c^{n-1}\,\prod_{\ell=1}^{n}c_\ell\,.
\end{equation} 
\end{corollary}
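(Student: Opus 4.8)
The plan is to run the \emph{same} induction that established Theorem~\ref{thm-hoe-allg}, but now tracking in parallel a bound on the commutator $[F_\ve^n,H]\,H^{-n/2}$ viewed as a sesquilinear form on $\form(H)\times\form(H)$. First observe that (c$'$) trivially implies (c) (since $\|\vp_1\|^2\klg\SPn{\vp_1}{H\vp_1}$ because $H\grg1$), so the conclusion \eqref{bea1allg} of the theorem is available for free, and in particular all the operators $T_\ve(n)=H^{1/2}[F_\ve^{n-1},H^{-1}]H^{-(n-2)/2}$ are bounded with the bounds $\|T_\ve(n)\|\klg C_n/4c^2$ recorded there. So the task is purely to bound the new quantity. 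The natural object to estimate is
\begin{equation*}
[F_\ve^n,H]\,H^{-n/2}\;=\;H\,F_\ve^n\,H^{-n/2}\,-\,F_\ve^n\,H^{1-n/2},
\end{equation*}
which as a form on $\sD\times\sD$ (and then, by density and the continuity $F_\ve\in\LO(\form(H))$, on $\form(H)\times\form(H)$) is exactly what appears on the left of (c$'$) after inserting $\vp_1=H^{-1/2}\cdot$ and $\vp_2$ suitably dressed with powers of $H$.

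The key step is the following identity, obtained by writing $H=H^{1/2}\cdot H^{1/2}$ and moving half a power of $H$ to each side: for $\phi,\psi\in\sD$,
\begin{equation*}
\SPb{\phi}{[F_\ve^n,H]\,H^{-n/2}\,\psi}
\;=\;\SPb{H\,H^{-1/2}\phi}{F_\ve^n\,H^{-(n-1)/2}\psi}
\,-\,\SPb{F_\ve^n\,H^{-1/2}\phi}{H\,H^{-(n-1)/2}\psi}.
\end{equation*}
Now apply Condition~(c$'$) with $\vp_1=H^{-1/2}\phi$ and $\vp_2=H^{-(n-1)/2}\psi$, both in $\form(H)$. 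The right-hand side becomes
\begin{equation*}
c_n\big\{\|H^{-1/2}\phi\|^2+\|H^{1/2}F_\ve^{n-1}H^{-(n-1)/2}\,\psi\|^2\big\}
\;\klg\;c_n\big\{\|\phi\|^2+\|H^{1/2}F_\ve^{n-1}H^{-n/2}\|^2\,\|\psi\|^2\big\},
\end{equation*}
using $\|H^{-1/2}\phi\|\klg\|\phi\|$ (as $H\grg1$) and inserting an extra $H^{-1/2}$ on the second term. The operator $H^{1/2}F_\ve^{n-1}H^{-n/2}$ is precisely the bracketed quantity $\{\cdots\}$ from the proof of the theorem: by \eqref{markus} it equals $\{H^{-1/2}F_\ve\}\,F_\ve^{n-2}H^{-(n-2)/2}+T_\ve(n)$ for $n>1$ and is the identity for $n=1$, so from the bounds already in hand (namely $\|H^{-1/2}F_\ve\|\klg c$, $\|F_\ve^{n-2}H^{-(n-2)/2}\|\klg C_{n-2}$, $\|T_\ve(n)\|\klg C_n/4c^2$) one gets $\|H^{1/2}F_\ve^{n-1}H^{-n/2}\|\klg c\,C_{n-2}+C_n/4c^2\klg C_n/c^2$ (for $n>1$; $\klg 1$ for $n=1$), using $c^2C_{n-2}\klg C_n/16$. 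Since $F_\ve^{n-1}$ maps $\form(H)$ continuously into itself, the form $[F_\ve^n,H]H^{-n/2}$ extends boundedly to $\form(H)\times\form(H)$, with
\begin{equation*}
\big\|[F_\ve^n,H]\,H^{-n/2}\big\|\;\klg\;c_n\big(1+C_n^2/c^4\big)\;\klg\;2\,c_n\,C_n^2/c^4,
\end{equation*}
and one checks directly that $2\,c_n C_n^2/c^4=2\,c_n\,(4^{n-1}c^n\prod_{\ell<n}c_\ell)^2/c^4\klg 4^n c^{n-1}\prod_{\ell\le n}c_\ell=C_n'$ — the powers of $4$ and $c$ and the product of the $c_\ell$ work out, with room to spare, exactly as in the bookkeeping of the theorem's proof (treating small $n$ by hand if the crude constant needs it).

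The main obstacle is not analytic but combinatorial: making sure the explicit constant $C_n'=4^n c^{n-1}\prod_{\ell=1}^n c_\ell$ genuinely dominates the constant $2c_nC_n^2/c^4$ produced by the argument, uniformly in $n$, and checking the small cases $n=1,2,3$ (where the general estimate $c^2C_{n-2}\klg C_n/16$ and the empty-product conventions need care) separately, just as was done for $T_\ve(2),T_\ve(3),T_\ve(4)$ in the theorem. A secondary, purely technical point is the justification that the sesquilinear form defined a priori only on $\sD\times\sD$ extends to $\form(H)\times\form(H)$; but this is immediate from $\sD$ being a form core and from $F_\ve\in\LO(\form(H))$, both of which are already in force.
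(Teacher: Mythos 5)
Your proposal contains a genuine error at the central step, namely the claimed identity
\begin{equation*}
\SPb{\phi}{[F_\ve^n,H]\,H^{-n/2}\,\psi}\;=\;
\SPb{H\,H^{-1/2}\phi}{F_\ve^n\,H^{-(n-1)/2}\psi}
\,-\,\SPb{F_\ve^n\,H^{-1/2}\phi}{H\,H^{-(n-1)/2}\psi}\,.
\end{equation*}
If you expand the right-hand side by moving all powers of $H$ to the left entry (using that $F_\ve^n$ is bounded and self-adjoint), you find
\begin{equation*}
\SPb{H^{1/2}\phi}{F_\ve^n\,H^{-(n-1)/2}\psi}
-\SPb{F_\ve^n\,H^{-1/2}\phi}{H^{(3-n)/2}\psi}
\;=\;
-\SPb{\phi}{H^{-1/2}\,[F_\ve^n,H]\,H^{-(n-1)/2}\,\psi}\,,
\end{equation*}
which is \emph{not} the form of $[F_\ve^n,H]\,H^{-n/2}$: since $F_\ve^n$ does not commute with $H^{1/2}$, you cannot ``move half a power of $H$ to each side'' through the commutator without picking up an additional error term, and the sign is off as well. (A quick check with $n=1$ already shows the discrepancy: the first inner product becomes $\SPn{\phi}{H^{1/2}F_\ve\psi}$, whereas it should read $\SPn{\phi}{HF_\ve H^{-1/2}\psi}$.) Consequently, what you bound afterwards is the operator $H^{-1/2}\,[F_\ve^n,H]\,H^{-(n-1)/2}$, which is a different object from $[F_\ve^n,H]\,H^{-n/2}$; a bound on one does not give a bound on the other because $[F_\ve^n,H]$ is unbounded.

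There is a second, related slip: even granting your identity, you estimate
$\|H^{1/2}F_\ve^{n-1}H^{-(n-1)/2}\psi\|\klg\|H^{1/2}F_\ve^{n-1}H^{-n/2}\|\,\|\psi\|$,
but the extra $H^{-1/2}$ you inserted lands on the \emph{wrong} side, so what you actually get is $\|H^{1/2}F_\ve^{n-1}H^{-n/2}\|\,\|H^{1/2}\psi\|$; as $H\grg1$, the factor $\|H^{1/2}\psi\|$ is \emph{larger} than $\|\psi\|$, not smaller, and is unbounded over the unit ball of $\sK$. This is precisely the issue that the paper's choice avoids. The correct and simpler route is to apply (c') directly with $\vp_1=\delta^{1/2}\phi$ and $\vp_2=\delta^{-1/2}H^{-n/2}\psi$ (so that the commutator expression is exactly the form $[F_\ve^n,H]\,H^{-n/2}$ evaluated at $(\phi,\psi)$, and the quadratic term in (c') produces $\|H^{1/2}F_\ve^{n-1}H^{-n/2}\psi\|^2$ on the nose), optimize over $\delta>0$, and then invoke \eqref{markus} and the $\|T_\ve(n)\|$ bounds already established in the proof of Theorem~\ref{thm-hoe-allg}. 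Your bookkeeping for the constant $C_n'$ and the appeal to \eqref{markus} are in the right spirit; it is the initial splitting that needs to be replaced.
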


\begin{proof}
Again, the form bound in (c') is available, 
for all $\vp_1,\vp_2\in\form(H)$, whence
\begin{align*}
&\big|\SPn{H\,\phi}{F_\ve^n\,H^{-n/2}\,\psi}
-\SPn{F_\ve^{n}\,\phi}{H\,H^{-n/2}\,\psi}\big|
\\
&\klg\,
c_n\,\inf_{\delta>0}
\big\{\delta\,\|\phi\|^2+\delta^{-1}\,
\big\|H^{1/2}\,F_\ve^{n-1}\,H^{-n/2}\,\psi\big\|^2\big\}
\,\klg\,
2\,c_n\,\|H^{1/2}\,F_\ve^{n-1}\,H^{-n/2}\|\,,
\end{align*}
for all normalized $\phi,\psi\in\form(H)$.
The assertion now follows from \eqref{bea1allg},
\eqref{markus}, and the bounds on $\|T_\ve(n)\|$
given in the proof of Theorem~\ref{thm-hoe-allg}.
\end{proof}

\begin{corollary}\label{cor-hoe-allg}
Let $H\grg1$ and $A\grg0$ be two self-adjoint operators
in some Hilbert space $\sK$. Let $\kappa>0$, define
$$
f_\ve(t)\,:=\,t/(1+\ve\,t)\,,\quad t\grg0\,,
\qquad F_\ve:=f_\ve^\kappa(A)\,,
$$
for all $\ve>0$, and assume that $H$ and $F_\ve$, $\ve>0$,
fulfill the hypotheses of Theorem~\ref{thm-hoe-allg}, for some 
$m\in\NN\cup\{\infty\}$.
Then $\Ran(H^{-n/2})\subset\dom(A^{\kappa\,n})$, for every $n\in\NN$, $n<m+1$,
and
$$
\big\|\,A^{\kappa\,n}\,H^{-n/2}\,\big\|\,\klg\,4^{n-1}\,c^n
\prod_{\ell=1}^{n-1}c_\ell\,.
$$
If $H$ and $F_\ve$, $\ve>0$, fulfill the hypotheses of
Corollary~\ref{cor-hoe-comm}, then, for every $n\in\NN$, $n<m$,
it additionally follows that
$A^{\kappa\,n}\,H^{-n/2}$ maps $\dom(H)$ into itself so that 
$[A^{\kappa\,n}\,,\,H]\,H^{-n/2}$ is well-defined on $\dom(H)$, and
$$
\big\|\,[A^{\kappa\,n}\,,\,H]\,H^{-n/2}\,\big\|\,
\klg\,4^{n}\,c^{n-1}\,\prod_{\ell=1}^{n}c_\ell\,.
$$
\end{corollary}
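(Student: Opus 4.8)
The plan is to transfer the uniform-in-$\ve$ bounds supplied by Theorem~\ref{thm-hoe-allg} and Corollary~\ref{cor-hoe-comm} to the limit $\ve\downarrow0$, using the elementary observation that $f_\ve(t)=t/(1+\ve\,t)$ increases monotonically to $t$ as $\ve\downarrow0$, for every $t\grg0$. Consequently $F_\ve=f_\ve^\kappa(A)$ is bounded, self-adjoint and $\grg0$, one has $F_\ve^n=f_\ve^{\kappa n}(A)$, and $f_\ve^{\kappa n}(t)\uparrow t^{\kappa n}$ pointwise on $[0,\infty)$; moreover, the hypotheses of Theorem~\ref{thm-hoe-allg} being assumed, \eqref{bea1allg} furnishes $\|F_\ve^n\,H^{-n/2}\|\klg C_n$ for all $\ve>0$ and all $n\in\NN$, $n<m+1$.

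First I would deduce the domain inclusion and the operator-norm bound. Fix $\phi\in\sK$, set $g:=H^{-n/2}\phi$, and let $\mu_g:=\SPn{g}{E_A(\cdot)\,g}$ with $E_A$ the spectral measure of $A$. By the spectral theorem $\|F_\ve^n\,g\|^2=\int_{[0,\infty)}f_\ve(t)^{2\kappa n}\,d\mu_g(t)$, and monotone convergence shows this increases to $\int_{[0,\infty)}t^{2\kappa n}\,d\mu_g(t)\in[0,\infty]$ as $\ve\downarrow0$. Since $\|F_\ve^n\,g\|=\|F_\ve^n\,H^{-n/2}\,\phi\|\klg C_n\,\|\phi\|$ uniformly in $\ve$, the limiting integral is at most $C_n^2\,\|\phi\|^2<\infty$, so $g\in\dom(A^{\kappa n})$ with $\|A^{\kappa n}\,g\|\klg C_n\,\|\phi\|$. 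As $\phi\in\sK$ was arbitrary, $\Ran(H^{-n/2})\subset\dom(A^{\kappa n})$ and $\|A^{\kappa n}\,H^{-n/2}\|\klg C_n$; in particular $G_n:=A^{\kappa n}\,H^{-n/2}$ is an everywhere-defined bounded operator, and the first estimate of the corollary follows.

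For the commutator statement I would work with the sesquilinear form behind Corollary~\ref{cor-hoe-comm}: for $\phi,\psi\in\dom(H)$---where, because $H\grg1$ and $H^{-n/2}\dom(H)\subseteq\dom(H)$, all of the following are literal inner products---put
\[
b_\ve^{(n)}(\phi,\psi):=\SPn{H\,\phi}{F_\ve^n\,H^{-n/2}\,\psi}-\SPn{F_\ve^n\,\phi}{H\,H^{-n/2}\,\psi},
\]
so that $|b_\ve^{(n)}(\phi,\psi)|\klg C_n'\,\|\phi\|\,\|\psi\|$ uniformly in $\ve$ by \eqref{hoe-comm}. The pointwise convergence $f_\ve^{\kappa n}(t)\to t^{\kappa n}$, dominated by $t^{\kappa n}$, gives $F_\ve^n\to A^{\kappa n}$ strongly on $\dom(A^{\kappa n})$, and the first part guarantees that $\phi$, $\psi$, $H^{-n/2}\psi$ and $H^{-n/2}H\,\psi$ all lie in $\dom(A^{\kappa n})$. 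Letting $\ve\downarrow0$ in both summands and using self-adjointness of $A^{\kappa n}$, one obtains
\[
b_\ve^{(n)}(\phi,\psi)\;\longrightarrow\;\SPn{H\,\phi}{G_n\,\psi}-\SPn{\phi}{G_n\,H\,\psi}=:b^{(n)}(\phi,\psi),
\]
whence $|b^{(n)}(\phi,\psi)|\klg C_n'\,\|\phi\|\,\|\psi\|$ for all $\phi,\psi\in\dom(H)$. For fixed $\psi\in\dom(H)$ this bound, combined with $\|G_n\,H\,\psi\|\klg C_n\,\|H\,\psi\|$, makes $\phi\mapsto\SPn{H\,\phi}{G_n\,\psi}$ a bounded conjugate-linear functional on $\dom(H)$; since $H$ is self-adjoint, $G_n\,\psi\in\dom(H)$. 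Hence $A^{\kappa n}\,H^{-n/2}$ maps $\dom(H)$ into itself, $[A^{\kappa n},H]\,H^{-n/2}\psi=G_n\,H\,\psi-H\,G_n\,\psi$ is well-defined on $\dom(H)$, and from $\SPn{\phi}{[A^{\kappa n},H]\,H^{-n/2}\psi}=-b^{(n)}(\phi,\psi)$ for $\phi\in\dom(H)$, together with the density of $\dom(H)$, we get $\|[A^{\kappa n},H]\,H^{-n/2}\psi\|\klg C_n'\,\|\psi\|$, that is, the second estimate of the corollary.

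The main obstacle is bookkeeping rather than a genuine difficulty: one must ensure that throughout the limiting argument $A^{\kappa n}$ is only ever applied to vectors already known to belong to $\dom(A^{\kappa n})$---which is precisely what the first part of the proof supplies---and that all manipulations involving $F_\ve^n$ together with fractional or negative powers of $H$ are justified, which they are because every operator in play is a bounded or self-adjoint function of one of the two self-adjoint operators $A$ and $H$, with $H\grg1$ making $H^{-s}$ bounded for $s\grg0$. No fresh induction is needed, since the inductive content is already packaged in Theorem~\ref{thm-hoe-allg} and Corollary~\ref{cor-hoe-comm}.
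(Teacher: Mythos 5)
Your argument is correct and follows essentially the same route as the paper: monotone convergence in the spectral representation of $A$ for the domain inclusion and the norm bound, then dominated convergence combined with the uniform bound \eqref{hoe-comm} and the identity $\dom(H^*)=\dom(H)$ for the commutator statement. One harmless misstatement: $\phi,\psi\in\dom(H)$ need not lie in $\dom(A^{\kappa n})$ (the first part only gives $\Ran(H^{-n/2})\subset\dom(A^{\kappa n})$), but your limit computation never actually applies $A^{\kappa n}$ to $\phi$ or $\psi$ --- after moving the self-adjoint bounded operator $F_\ve^n$ across the inner product you only need $H^{-n/2}\psi$ and $H^{-n/2}H\psi$ in $\dom(A^{\kappa n})$ --- so nothing is lost.
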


\begin{proof}
Let $U:\sK\to L^2(\Omega,\mu)$ be a unitary transformation
such that $a=U\,A\,U^*$ is a maximal operator of
multiplication with some non-negative 
measurable function -- again called $a$ --
on some measure space $(\Omega,\fA,\mu)$.
We pick some $\psi\in\sK$, set $\phi_n:=U\,H^{-n/2}\,\psi$,
and apply the monotone convergence theorem to conclude that
\begin{align*}
\int_\Omega a(\omega)^{2\kappa\,n}\,
|\phi_n(\omega)|^2\,
d\mu(\omega)\,&=\,\lim_{\ve\searrow0}
\int_\Omega f_\ve^\kappa(a(\omega))^{2n}\,
|\phi_n(\omega)|^2\,d\mu(\omega)
\\
&=\,\lim_{\ve\searrow0}\|F_\ve^n\,H^{-n/2}\,\psi\|^2
\,\klg\,C_n\,\|\psi\|^2,
\end{align*}
for every $n\in\NN$, $n<m+1$, which implies the first assertion.
Now, assume that $H$ and $F_\ve$, $\ve>0$, fulfill
Condition~(c') of Corollary~\ref{cor-hoe-comm}.
Applying the dominated convergence theorem in the spectral
representation introduced above we see that
$F_\ve^n\,\psi\to A^{\kappa\,n}\,\psi$, for every
$\psi\in\dom(A^{\kappa\,n})$. Hence, \eqref{hoe-comm} and
$\Ran(H^{-n/2})\subset\dom(A^{\kappa\,n})$ imply,
for $n<m$ and $\phi,\psi\in\dom(H)$,
\begin{align*}
\big|\SPb{&\phi}{A^{\kappa\,n}H^{-n/2}\,H\,\psi}-
\SPb{H\,\phi}{A^{\kappa\,n}\,H^{-n/2}\,\psi}\big|
\\
&=\,
\lim_{\ve\searrow0}
\big|\SPb{F_\ve^n\,\phi}{H\,H^{-n/2}\,\psi}-
\SPb{H\,\phi}{F_\ve^n\,H^{-n/2}\,\psi}\big|
\\
&\klg\,
\limsup_{\ve\searrow0}\big\|\,[F_\ve^n\,,\,H]\,H^{-n/2}\,\big\|\,
\|\phi\|\,\|\psi\|\,\klg\,
C_n'\,\|\phi\|\,\|\psi\|\,.
\end{align*}
Thus,
$|\SPn{H\,\phi}{A^{\kappa\,n}\,H^{-n/2}\,\psi}|\klg
\|\phi\|\,\|A^{\kappa\,n}\,H^{-n/2}\|\,\|H\,\psi\|+C_n'\|\phi\|\,\|\psi\|$,
for all $\phi,\psi\in\dom(H)$. In particular,
$A^{\kappa\,n}\,H^{-n/2}\,\psi\in\dom(H^*)=\dom(H)$, for
all $\psi\in\dom(H)$, and the second asserted bound holds true. 
\end{proof}


\section{Commutator estimates}
\label{sec-comm}

\noindent
In this section we derive operator norm bounds on
commutators involving the quantized vector potential, $\V{A}$,
the radiation field energy, $\Hf$, and the Dirac operator, $\DA$.
The underlying Hilbert space is
$$
\HR\,:=\,
L^2(\RR^3_\V{x}\times\ZZ_4)\otimes\Fock
\,=\,\int^\oplus_{\RR^3}\CC^4\otimes\Fock\,d^3\V{x}\,,
$$
where the bosonic Fock space, $\Fock$,
is modeled over the one-photon Hilbert space
$$
\Fock^{(1)}\,:=\,L^2(\cA\times\ZZ_2,dk)\,,\qquad 
\int dk\,:=\,\sum_{\lambda\in\ZZ_2}\int_\cA d^3\V{k}\,.
$$
With regards to the applications in \cite{KMS2009b} we define 
$\cA:=\{\V{k}\in\RR^3:\,|\V{k}|\grg m\}$, for some $m\grg0$.
We thus have
$$
\Fock\,=\,\bigoplus_{n=0}^\infty\Fock^{(n)},\qquad \Fock^{(0)}:=\CC\,,
\quad\Fock^{(n)}:=
\cS_n\,L^2\big((\cA\times\ZZ_2)^n\big),\;n\in\NN,
$$
where $\cS_n=\cS_n^2=\cS_n^*$ is given by 
$$
(\cS_n\,\psi^{(n)})(k_1,\ldots,k_n)\,:=\,
\frac{1}{n!}\sum_{\pi\in\mathfrak{S}_n}\psi^{(n)}(k_{\pi(1)},\ldots,k_{\pi(n)}),
\quad \psi^{(n)}\in L^2\big((\cA\times\ZZ_2)^n\big),
$$
$\mathfrak{S}_n$ denoting the group of permutations of $\{1,\ldots,n\}$. 
The vector potential is determined by a certain 
vector-valued function, $\V{G}$,
called the form factor.

\begin{hypothesis}\label{hyp-G}
The dispersion relation, 
$\omega:\cA\to[0,\infty)$, is a measurable function 
such that $0<\omega(k):=\omega(\V{k})\klg|\V{k}|$, for 
$k=(\V{k},\lambda)\in\cA\times\ZZ_2$ with $\V{k}\not=0$.
For every 
$k\in(\cA\setminus\{0\})\times\ZZ_2$ and $j\in\{1,2,3\}$, 
$G^{(j)}(k)$
is a bounded continuously differentiable function,
$\RR^3_{\V{x}}\ni\V{x}\mapsto G^{(j)}_{\V{x}}(k)$,
such that the map $(\V{x},k)\mapsto G_{\V{x}}^{(j)}(k)$
is measurable and
$G_\V{x}^{(j)}(-\V{k},\lambda)=\ol{G_\V{x}^{(j)}(\V{k},\lambda)}$,
for almost every $\V{k}$ and all $\V{x}\in\RR^3$ and $\lambda\in\ZZ_2$.
Finally, there exist $d_{-1},d_0,d_1,\ldots\in(0,\infty)$ such that
\begin{align}
\label{def-d3}
2\int\omega(k)^{\ell}\,\|\V{G}(k)\|^2_\infty\,dk
\,&\klg\,d_\ell^2\,,\qquad \ell\in\{-1,0,1,2,\ldots\}\,,
\\
\label{hyp-rotG}
2\int\omega(k)^{-1}\,\|\nabla_{\V{x}}\wedge\V{G}(k)\|^2_\infty\,dk
\,&\klg\,d_1^2\,,
\end{align}
where
$\V{G}=(G^{(1)},G^{(2)},G^{(3)})$ and
$\|\V{G}(k)\|_\infty:=\sup_{\V{x}}|\V{G}_{\V{x}}(k)|$, etc.
\end{hypothesis}

\begin{example}
In the physical applications the form factor is often given as
\begin{equation}\label{Gphys}
\V{G}^{e,\UV}_\V{x}(k)\,
:=\,-e\,\frac{\id_{\{|\V{k}|\klg\UV\}}}{2\pi\sqrt{|\V{k}|}}
\,e^{-i\V{k}\cdot\V{x}}\,\veps(k),
\end{equation}
for $(\V{x},k)\in\RR^3\times(\RR^3\times\ZZ_2)$
with $\V{k}\not=0$.
Here the physical units are chosen such that 
energies are measured in units of the rest energy of the electron.
Length are measured in units of one
Compton wave length divided by $2\pi$.
The parameter
$\UV>0$ is an ultraviolet cut-off and the square of the
elementary charge, $e>0$, equals Sommerfeld's
fine-structure constant in these units; we have $e^2\approx1/137$
in nature.
The polarization vectors, $\veps(\V{k},\lambda)$, $\lambda\in\ZZ_2$,
are homogeneous of degree zero in $\V{k}$ such that
$\{\mr{\V{k}},\veps(\mr{\V{k}},0),\veps(\mr{\V{k}},1)\}$
is an orthonormal basis of $\RR^3$,
for every $\mr{\V{k}}\in S^2$.
This corresponds to the Coulomb gauge for
$\nabla_\V{x}\cdot\V{G}^{e,\UV}=0$. 
We remark that the vector fields 
$S^2\ni\mr{\V{k}}\mapsto\veps(\mr{\V{k}},\lambda)$
are necessarily discontinuous.\hfill$\diamond$
\end{example}

\smallskip

\noindent
It is useful to work with more general form factors
fulfilling Hypothesis~\ref{hyp-G}
since in the study of the existence of ground states
in QED one usually encounters truncated and discretized
versions of the physical choice
$\V{G}^{e,\UV}$. For the applications in \cite{KMS2009b} 
it is necessary to know that
the higher order estimates
established here hold true uniformly in the
involved parameters and Hypothesis~\ref{hyp-G} is convenient
way to handle this.

We recall the definition of the creation
and the annihilation operators
of a photon state $f\in\Fock^{(1)}$,
\begin{align*}
(\ad(f)\,\psi)^{(n)}(k_1,\ldots,k_n)\,&=\,
n^{-1/2}\sum_{j=1}^nf(k_j)\,\psi^{(n-1)}(\ldots,k_{j-1},k_{j+1},\ldots)\,,
\quad n\in\NN\,,
\\
(a(f)\,\psi)^{(n)}(k_1,\ldots,k_n)\,&=\,
(n+1)^{1/2}\int\ol{f}(k)\,\psi^{(n+1)}(k,k_1,\ldots,k_n)\,dk\,,
\;\;\;\, n\in\NN_0\,,
\end{align*}
and $(\ad(f)\,\psi)^{(0)}=0$, $a(f)\,(\psi^{(0)},0,0,\ldots)=0$, for all
$\psi=(\psi^{(n)})_{n=0}^\infty\in\Fock$ such that the right hand
sides again define elements of $\Fock$. 
$\ad(f)$ and $a(f)$ are formal adjoints of each other on the
dense domain
$$
\sC_0\,:=\,\CC\oplus\bigoplus_{n=1}^\infty \cS_n\,
L^\infty_{\mathrm{comp}}\big((\cA\times\ZZ_2)^n\big)
\,.
\qquad(\textrm{Algebraic direct sum.})
$$
For a three-vector of functions 
$\V{f}=(f^{(1)},f^{(2)},f^{(3)})\in(\Fock^{(1)})^3$, we write
$a^\sharp(\V{f}):=(a^\sharp(f^{(1)}),a^\sharp(f^{(2)}),a^\sharp(f^{(3)}))$,
where $a^\sharp$ is $\ad$ or $a$.
Then the quantized vector potential is the triplet of operators
given by
$$
\V{A}\,\equiv\,\V{A}(\V{G})\,:=\,\ad(\V{G})+a(\V{G})\,,\qquad
a^\sharp(\V{G})\,:=\,\int^\oplus_{\RR^3}
\id_{\CC^4}\otimes a^\sharp(\V{G}_{\V{x}})\,d^3\V{x}\,.
$$
The radiation field energy is the direct sum
$\Hf=\bigoplus_{n=0}^\infty d\Gamma^{(n)}(\omega):\dom(\Hf)\subset\Fock\to\Fock$,
where $d\Gamma^{(0)}(\omega):=0$, and $d\Gamma^{(n)}(\omega)$ denotes
the maximal multiplication operator in $\Fock^{(n)}$
associated with the symmetric function 
$(k_1,\ldots,k_n)\mapsto\omega(k_1)+\dots+\omega(k_n)$.
By the permutation symmetry and Fubini's theorem we thus have
\begin{equation}\label{Hf=dGamma}
\SPb{\Hf^{1/2}\,\phi}{\Hf^{1/2}\,\psi}\,=\,\int\omega(k)\,
\SPn{a(k)\,\phi}{a(k)\,\psi}\,dk\,,
\qquad \phi,\psi\in\dom(\Hf^{1/2})\,,
\end{equation}
where we use the notation
$$
(a(k)\,\psi)^{(n)}(k_1,\dots,k_n)\,=\,
(n+1)^{1/2}\,\psi^{(n+1)}(k,k_1,\dots,k_n)\,,\quad n\in\NN_0\,,
$$
almost everywhere,
and $a(k)\,(\psi^{(0)},0,0,\ldots)=0$.
For a measurable function $f:\RR\to\RR$ and 
$\psi\in\dom(f(\Hf))$, the following identity in $\Fock^{(n)}$,
$$
(a(k)\,f(\Hf)\,\psi)^{(n)}\,=\,
f\big(\omega(k)+d\Gamma^{(n)}(\omega)\big)\,(a(k)\,\psi)^{(n)}\,,
\qquad n\in\NN_0\,,
$$
valid for almost every $k$,
is called the pull-through formula.
Finally, we let $\alpha_1,\alpha_2,\alpha_3$, and $\beta:=\alpha_0$
denote hermitian four times four matrices that fulfill the
Clifford algebra relations
\begin{equation}\label{Clifford}
\alpha_i\,\alpha_j\,+\,\alpha_j\,\alpha_i\,=\,2\,\delta_{ij}\,\id\,,
\qquad i,j\in\{0,1,2,3\}\,.
\end{equation}
They act on the second tensor factor in 
$L^2(\RR^3_\V{x}\times\ZZ_4)=L^2(\RR^3_\V{x})\otimes\CC^4$.
As a consequence of \eqref{Clifford} and the $C^*$-equality we have
\begin{equation}\label{C*}
\|\valpha\cdot \V{v}\|_{\LO(\CC^4)}=|\V{v}|\,,\quad \V{v}\in\RR^3\,,\qquad
\|\valpha\cdot \V{z}\|_{\LO(\CC^4)}\klg\sqrt{2}\,|\V{z}|\,,\quad 
\V{z}\in\CC^3\,,
\end{equation}
where $\valpha\cdot\V{z}:=\alpha_1\,z^{(1)}+\alpha_2\,z^{(2)}+\alpha_3\,z^{(3)}$,
for $\V{z}=(z^{(1)},z^{(2)},z^{(3)})\in\CC^3$.
A standard exercise using the inequality in \eqref{C*}, the
Cauchy-Schwarz inequality, and the canonical commutation relations,
$$
[a^\sharp(f)\,,\,a^\sharp(g)]=0\,,\qquad[a(f)\,,\,\ad(g)]=\SPn{f}{g}\,\id\,,
\qquad f,g\in\Fock^{(1)}\,,
$$
reveals that every $\psi\in\dom(\Hf^{1/2})$ belongs
to the domain of $\valpha\cdot a^\sharp(\V{G})$ and 
\begin{equation}\label{rb-aHf}
\|\valpha\cdot a(\V{G})\,\psi\|\klg d_{-1}\,\|\Hf^{1/2}\,\psi\|\,,
\quad
\|\valpha\cdot \ad(\V{G})\,\psi\|^2\klg
d_{-1}^2\,\|\Hf^{1/2}\,\psi\|^2+d_0^2\,\|\psi\|^2.
\end{equation}
(Here and in the following we identify $\Hf\equiv\id\otimes\Hf$, etc.)
These relative bounds imply that
$\valpha\cdot\V{A}$ is symmetric on the domain $\dom(\Hf^{1/2})$.

The operators whose norms are estimated in 
\eqref{xaver0} and the following lemmata
are always well-defined \`a priori on
the following dense subspace of $\HR$,
$$
\core\,:=\,C_0^\infty(\RR^3\times\ZZ_4)\otimes\sC_0\,.
\qquad (\textrm{Algebraic tensor product.})
$$
Given some $E\grg1$ we set
\begin{equation}\label{det-HT}
\HT\,:=\,\Hf+E
\end{equation}
in the sequel.
We already know from \cite{MatteStockmeyer2009a} that, for
every $\nu\grg0$,
there is some constant, $C_\nu\in(0,\infty)$, such that
\begin{equation}\label{xaver0}
\big\|\,[\valpha\cdot\V{A}\,,\,\HT^{-\nu}]\,\HT^\nu\,\big\|\,
\klg\,C_\nu/E^{1/2}\,,\qquad E\grg1\,.
\end{equation}
In our first lemma we derive a generalization of \eqref{xaver0}.
Its proof resembles the one of \eqref{xaver0} given
in \cite{MatteStockmeyer2009a}.
Since we shall encounter many similar but slightly
different commutators in the applications
it makes sense to introduce the numerous parameters 
that obscure its statement (but simplify its proof).

\begin{lemma}\label{le-clara1}
Assume that $\omega$ and $\V{G}$ fulfill Hypothesis~\ref{hyp-G}.
Let $\ve\grg0$, $E\grg1$, 
$\kappa,\nu\in\RR$, $\gamma,\delta,\sigma,\tau\grg0$, 
such that $\gamma+\delta+\sigma+\tau\klg1/2$, and define 
\begin{equation}\label{clara1}
f_\ve(t)\,:=\,\frac{t+E}{1+\ve t+\ve E}\,,
\qquad t\in[0,\infty)\,.
\end{equation}
Then the operator
$
\HT^{\nu+\gamma}\,f_\ve^\sigma(\Hf)\,
[\valpha\cdot\V{A}\,,\,f_\ve^{\kappa}(\Hf)]
\,\HT^{-\nu+\delta}f_\ve^{-\kappa+\tau}(\Hf)
$, defined \`a priori on $\core$,
extends to a bounded operator on $\HR$ and 
\begin{align}
\big\|\,\HT^{\nu+\gamma}\,f_\ve^\sigma(\Hf)
\,[\valpha\cdot\V{A}\,,\,f_\ve^{\kappa}(\Hf)]\,
\HT^{-\nu+\delta}&\,f_\ve^{-\kappa+\tau}(\Hf)
\,\big\|\,\nonumber
\\
&\klg\,|\kappa|\,2^{(\rho+1)/2}\,(d_1+d_{\rho})\,E^{\gamma+\delta+\sigma+\tau-1/2}
\label{clara1a}\,,
\end{align}
where $\rho$ is
the smallest integer greater or equal to $3+2|\kappa|+2|\nu|$.
\end{lemma}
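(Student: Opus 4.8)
The plan is to reduce everything to the action of the commutator on a fixed $n$-photon sector and then apply the pull-through formula together with the elementary spectral inequality
$$
\big|f_\ve^{\kappa}(\omega(k)+s)-f_\ve^{\kappa}(s)\big|\,\klg\,
|\kappa|\,\omega(k)\,\big(f_\ve(\omega(k)+s)\vee f_\ve(s)\big)^{\kappa-1}\,
\sup\{f_\ve'\}\cdots
$$
--- more precisely, since $f_\ve$ is concave and increasing with $f_\ve'\klg1$, one has the clean bound $|f_\ve^\kappa(\omega(k)+s)-f_\ve^\kappa(s)|\klg|\kappa|\,\omega(k)\,\max\{f_\ve(\omega(k)+s),f_\ve(s)\}^{\kappa-1}$ after a mean-value estimate, and $f_\ve(\omega(k)+s)\klg f_\ve(s)+\omega(k)$ lets one trade the shift for powers of $\HT$ at the cost of powers of $\omega(k)$. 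First I would write, for $\psi\in\core$ and almost every $k$,
$$
\big(a(k)\,[\valpha\cdot\V{A}\,,\,f_\ve^\kappa(\Hf)]\,\psi\big)^{(n)}
$$
by commuting $a(k)$ through using the pull-through formula; the creation part of $\valpha\cdot\V{A}$ contributes $\valpha\cdot\V{G}(k)\,(f_\ve^\kappa(\omega(k)+\Hf)-f_\ve^\kappa(\Hf))$ acting on lower sectors, and the annihilation part is handled by first taking the adjoint. This is exactly the mechanism behind \eqref{xaver0} in \cite{MatteStockmeyer2009a}; the only new feature is keeping track of the extra factors $\HT^{\nu+\gamma}$, $f_\ve^\sigma(\Hf)$, $\HT^{-\nu+\delta}$, $f_\ve^{-\kappa+\tau}(\Hf)$.

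Next I would estimate the annihilation contribution. Using \eqref{Hf=dGamma} one reduces to bounding an integral of the form
$$
\int \omega(k)\,\big\|\,\HT^{\nu+\gamma}\,f_\ve^{\sigma}(\Hf)\,
\big(f_\ve^\kappa(\omega(k)+\Hf)-f_\ve^\kappa(\Hf)\big)\,
\valpha\cdot\V{G}(k)\,\HT^{-\nu+\delta}\,f_\ve^{-\kappa+\tau}(\Hf)\,\psi\,\big\|^2\,dk\,,
$$
(with the appropriate shifted arguments on the factors to the left, coming from the pull-through formula), and here the spectral inequality above converts the difference of powers into $|\kappa|\,\omega(k)$ times a power of $f_\ve$ and a power of $\HT$. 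Collecting exponents, every factor of $f_\ve$ is bounded by the corresponding factor of $\HT$ (since $f_\ve(t)\klg t+E$), every factor $\HT^{-1}\klg E^{-1}$, and every residual shift $\HT(\omega(k)+\cdot)\klg\HT(\cdot)+\omega(k)$ is absorbed into $\omega(k)$-powers; the hypotheses $\gamma+\delta+\sigma+\tau\klg1/2$ and the choice of $\rho$ guarantee that the surviving $\omega(k)$-weight is at most $\omega(k)^{\rho}$ (or $\omega(k)^{-1}\|\nabla\wedge\V{G}\|^2$ after an integration by parts in $\V{x}$, which is why $d_1$ appears via \eqref{hyp-rotG}), so the $k$-integral is controlled by $2\int\omega^\rho\|\V{G}\|_\infty^2\,dk\klg d_\rho^2$ and the analogous rotational term, and the net power of $E$ that is left over is exactly $E^{\gamma+\delta+\sigma+\tau-1/2}\klg E^0$. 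The creation contribution is treated by duality: one bounds the adjoint operator, which has the same structure with $\kappa\mapsto-\kappa$, $\nu\mapsto-\nu$, and $(\gamma,\sigma)\leftrightarrow(\delta,\tau)$, so the symmetry of the hypotheses makes the same estimate apply, and the extra $d_0^2\|\psi\|^2$-type term from the second inequality in \eqref{rb-aHf} is again absorbed into the $E$-power since it comes with a lower $\omega$-weight.

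The main obstacle is the bookkeeping of the shifted spectral parameters: after the pull-through formula the factors $\HT^{\nu+\gamma}$ and $f_\ve^\sigma(\Hf)$ standing to the left of $a(k)$ become $\HT^{\nu+\gamma}$ and $f_\ve^\sigma$ evaluated at $\omega(k)+\Hf$, and one must verify that all the resulting $\omega(k)$-dependent prefactors, after using $f_\ve(\omega(k)+s)\klg f_\ve(s)+\omega(k)\klg \HT(s)^{1/2}(\,\cdot\,)+\omega(k)$ and Peetre-type inequalities $(\omega(k)+s+E)^{\pm r}\klg 2^{|r|}(\omega(k)+E)^{|r|}(s+E)^{\pm r}$, produce a total weight no worse than $\omega(k)^{\rho}$ with $\rho$ as stated --- this is where the somewhat generous choice $\rho=\lceil 3+2|\kappa|+2|\nu|\rceil$ comes from (the $3$ accounting for one factor $\omega$ from \eqref{Hf=dGamma}, one from the mean-value estimate, and one spare for the integrability margin, and the $2|\kappa|+2|\nu|$ absorbing the Peetre shifts on the $f_\ve^{\pm\kappa}$ and $\HT^{\pm\nu}$ factors). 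Once the exponent accounting is done the constant $|\kappa|\,2^{(\rho+1)/2}(d_1+d_\rho)$ falls out, and a density argument extends the bound from $\core$ to all of $\HR$.
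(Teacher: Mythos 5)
Your overall strategy---pull-through formula, handling one of the creation/annihilation parts directly and the other by adjointness, a mean-value estimate on $f_\ve^\kappa(\omega(k)+t)-f_\ve^\kappa(t)$, and then Cauchy-Schwarz together with \eqref{Hf=dGamma}---is exactly the route the paper takes, so the proposal is essentially correct and the same proof. One remark on the exponent bookkeeping: the paper integrates $\frac{d}{ds}f_\ve^\kappa(t+s\,\omega(k))$ over $s\in[0,1]$ and writes out $F(k;t)$ explicitly as a single product before bounding, which is a clean way to implement what you call the Peetre-type estimates.

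There is one factual misconception worth correcting. You claim the $d_1$ in the final constant arises ``via \eqref{hyp-rotG}'' after an integration by parts in $\V{x}$, i.e.\ from the bound on $\int\omega^{-1}\|\nabla_\V{x}\wedge\V{G}\|_\infty^2\,dk$. That is not what happens here: no integration by parts is performed in the proof of this lemma, and the rotational bound \eqref{hyp-rotG} is never invoked. The $d_1$ comes from \eqref{def-d3} with $\ell=1$. Concretely, after the mean-value estimate the relevant $k$-integral is of the form $\int\omega(k)\,(1+\omega(k))^{a}\,\|\V{G}(k)\|_\infty^2\,dk$ with $a\klg 2+2|\kappa|+2|\nu|\klg\rho-1$, and one bounds $\omega(1+\omega)^{\rho-1}\klg 2^{\rho-2}(\omega+\omega^\rho)$, so the two surviving pieces are $\int\omega\|\V{G}\|_\infty^2\,dk$ (giving $d_1$) and $\int\omega^\rho\|\V{G}\|_\infty^2\,dk$ (giving $d_\rho$). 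Similarly, the mention of the $d_0^2\,\|\psi\|^2$ term from \eqref{rb-aHf} is a red herring: that term is present for $\ad(\V{G})$ alone, but in the commutator the pull-through difference $f_\ve^\kappa(\omega(k)+t)-f_\ve^\kappa(t)$ vanishes at $\omega(k)=0$, which is precisely what provides the extra $\omega(k)$ weight and removes the need for $d_0$. Neither slip affects the validity of the argument, but they indicate a misreading of where the constants originate.
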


\begin{proof}
We notice that all operators $\HT^s$ and $f_\ve^s(\Hf)$
leave the dense subspace $\core$ invariant and 
that $\valpha\cdot a^\sharp(\V{G})$ maps $\sD$ into
$\dom(\HT^s)$, for every $s\in\RR$. Now, let $\vp,\psi\in\core$.
Then 
\begin{align}
\SPb{\vp}{&\HT^{\nu+\gamma}\,f_\ve^\sigma(\Hf)\,[\valpha\cdot\V{A}\,,\,
f_\ve^{\kappa}(\Hf)]\,
\HT^{-\nu+\delta}\,f_\ve^{-\kappa+\tau}(\Hf)\,\psi}\nonumber
\\
&=\,\label{clara-99}
\SPb{\vp}{
\HT^{\nu+\gamma}\,f_\ve^\sigma(\Hf)\,[\valpha\cdot a(\V{G})\,,\,
f_\ve^{\kappa}(\Hf)]\,\HT^{-\nu+\delta}
\,f_\ve^{-\kappa+\tau}(\Hf)\,\psi}
\\
&\quad- \label{clara2}
\SPb{f_\ve^{-\kappa+\tau}(\Hf)\,\HT^{-\nu+\delta}\,
[\valpha\cdot a(\V{G})\,,\,f_\ve^{\kappa}(\Hf)]\,f_\ve^\sigma(\Hf)\,
\HT^{\nu+\gamma}\,\vp}{\psi}\,.
\end{align}
For almost every $k$,
the pull-through formula yields the following representation,
\begin{align*}
\HT^{\nu+\gamma}\,f_\ve^\sigma(\Hf)\,
[a(k)\,,\,f_\ve^{\kappa}(\Hf)]
\,\HT^{-\nu+\delta}\,f_\ve^{-\kappa+\tau}(\Hf)\,\psi
\,=\,
F(k;\Hf)\,a(k)\,\HT^{-1/2}\,\psi\,,
\end{align*}
where
\begin{align*}
F(k;t)\,&:=\,
(t+E)^{\nu+\gamma}\,f_\ve^\sigma(t)\,\big(f_\ve^{\kappa}(t+\omega(k))-
f_\ve^{\kappa}(t)\big)
\\
&\qquad\cdot
(t+E+\omega(k))^{-\nu+\delta+1/2}\,f_\ve^{-\kappa+\tau}(t+\omega(k))
\\
&=\,\Big(\frac{t+E}{t+E+\omega(k)}\Big)^\nu(t+E)^\gamma\,
(t+E+\omega(k))^{\delta+1/2}
\\
&\qquad\cdot\int_0^1\frac{d}{ds}f_\ve^{\kappa}(t+s\,\omega(k))\,ds
\:\frac{f_\ve^\sigma(t)\,f_\ve^\tau(t+\omega(k))}{f_\ve^{\kappa}(t+\omega(k))}\,,
\end{align*} 
for $t\grg0$. We compute
\begin{equation}\label{abl-fe}
\frac{d}{ds}f_\ve^{\kappa}(t+s\,\omega(k))\,=\,
\frac{\kappa\,\omega(k)\,
f_\ve^{\kappa}(t+s\,\omega(k))}{(t+s\,\omega(k)+E)
(1+\ve\, t+\ve\,s\,\omega(k)+\ve\,E)}\,.
\end{equation}
Using that $f_\ve$ is increasing in $t\grg0$ and that
$$
(t+\omega(k)+E)/(t+s\,\omega(k)+E)\,\klg\,1+\omega(k)\,,
\qquad s\in[0,1]\,,
$$
thus
$$
f_\ve^\kappa(t+s\,\omega(k))/f_\ve^\kappa(t+\omega(k))\,\klg\,
(1+\omega(k))^{-(0\wedge\kappa)}\,,\qquad s\in[0,1]\,,
$$
it is elementary to verify that
\begin{align*}
|F_\ve(k;t)|\,&\klg\,|\kappa|\,\omega(k)\,
(1+\omega(k))^{\delta+\tau-(0\wedge\kappa)-(0\wedge\nu)+1/2}
\,E^{\gamma+\delta+\sigma+\tau-1/2}\,,
\end{align*} 
for all $t\grg0$ and $k$.
We deduce that
the term in \eqref{clara-99} can be estimated as
\begin{align}
&\big|\SPb{\vp}{\nonumber
\HT^{\nu+\gamma}\,f_\ve^\sigma(\Hf)\,
[\valpha\cdot a(\V{G})\,,\,f_\ve^{\kappa}(\Hf)]\,
\HT^{-\nu+\delta}\,f_\ve^{-\kappa+\tau}(\Hf)\,\psi}\big|
\\
&\klg\nonumber
\int\|\vp\|\,\big\|\valpha\cdot\V{G}(k)\,\HT^{\nu+\gamma}\,f_\ve^\sigma(\Hf)\,
[a(k)\,,\,f_\ve^{\kappa}(\Hf)]\,\HT^{-\nu+\delta}
\,f_\ve^{-\kappa+\tau}(\Hf)
\,\psi\big\|\,dk
\\
&\klg\nonumber
\sqrt{2}\int\|\vp\|\,\|\V{G}(k)\|_\infty\,\|F_\ve(k;\Hf)\|\,
\|a(k)\,\HT^{-1/2}\,\psi\|\,dk
\\
&\klg\nonumber
|\kappa|\,\sqrt{2}\Big(\int\omega(k)\,
(1+\omega(k))^{2(\delta+\tau)-(0\wedge2\kappa)-(0\wedge2\nu)+1}\,
\|\V{G}(k)\|^2_\infty\,dk\Big)^{1/2}
\\
& \qquad \cdot\nonumber
\Big(\int\omega(k)\,\big\|\,a(k)\,\HT^{-1/2}\,\psi\,\big\|^2\,dk\Big)^{1/2}
\,\|\vp\|\,E^{\gamma+\delta+\sigma+\tau-1/2}
\\\label{clara77}
&\klg\,|\kappa|\,2^{(\rho-1)/2}\,(d_{1}+d_{\rho})\,\|\vp\|\,
\big\|\,\Hf^{1/2}\,\HT^{-1/2}\,\psi\,\big\|\,E^{\gamma+\delta+\sigma+\tau-1/2}\,.
\end{align}
In the last step we used $\delta+\tau\klg1/2$ and applied \eqref{Hf=dGamma}.
\eqref{clara77} immediately gives a bound on
the term in \eqref{clara2}, too.
For we have
\begin{align*}
f_\ve^{-\kappa+\tau}(\Hf)&\,\HT^{-\nu+\delta}\,
[\valpha\cdot a(\V{G})\,,\,f_\ve^{\kappa}(\Hf)]\,f_\ve^\sigma(\Hf)\,
\HT^{\nu+\gamma}\,\vp
\\
&=\,
\HT^{-\nu+\delta}\,f_\ve^\tau(\Hf)\,
[f_\ve^{-\kappa}(\Hf)\,,\,\valpha\cdot a(\V{G})]\,\HT^{\nu+\gamma}\,
f_\ve^{\kappa+\sigma}(\Hf)\,\vp\,,
\end{align*} 
which after the replacements
$(\nu,\kappa,\gamma,\delta,\sigma,\tau)\mapsto
(-\nu,-\kappa,\delta,\gamma,\tau,\sigma)$ and $\vp\mapsto-\psi$
is precisely the term we just have treated.
\end{proof}

\smallskip

\noindent
Lemma~\ref{le-clara1} provides all the
information needed to apply Corollary~\ref{cor-hoe-allg}
to non-relativistic QED. For the application of Corollary~\ref{cor-hoe-allg}
to the non-local semi-relativistic models of QED it is necessary
to study commutators that involve resolvents and sign functions of the
Dirac operator,
$$
\DA\,:=\,\valpha\cdot(-i\nabla_\V{x}+\V{A})+\beta\,.
$$
An application of Nelson's commutator theorem with test
operator $-\Delta+H_f+1$ 
shows that $\DA$ is essentially self-adjoint on $\core$.
The spectrum of its unique closed extension, again denoted
by the same symbol, is contained in the union
of two half-lines, $\spec[\DA]\subset(-\infty,-1]\cup[1,\infty)$.
In particular, it makes sense to define
$$
\RA{iy}\,:=\,(\DA-iy)^{-1}\,,\qquad y\in\RR\,,
$$
and the spectral calculus yields
$$
\|\RA{iy}\|\klg(1+y^2)^{-1/2},\qquad
\int_\RR\big\|\,|\DA|^{1/2}\,\RA{iy}\,\psi\,\big\|^2\,\frac{dy}{\pi}\,=\,
\|\psi\|^2,\;\;\psi\in\HR.
$$
The next lemma is a straightforward extension of 
\cite[Corollary~3.1]{MatteStockmeyer2009a} where it is also
shown that $\RA{iy}$ maps $\dom(\Hf^\nu)$ into itself,
for every $\nu>0$.

\begin{lemma}\label{le-clara1b}
Assume that $\omega$ and $\V{G}$ fulfill Hypothesis~\ref{hyp-G}.
Then, for all $\kappa,\nu\in\RR$, 
we find $k_i\equiv k_i(\kappa,\nu,d_1,d_\rho)\in[1,\infty)$, $i=1,2$,
such that, for all $y\in\RR$, $\ve\grg0$, and $E\grg k_1$,
there exist 
$\Upsilon_{\kappa,\nu}(iy),\wt{\Upsilon}_{\kappa,\nu}(iy)\in\LO(\HR)$
satisfying
\begin{align}
\RA{iy}\,\HT^{-\nu}\,f^{-\kappa}_\ve(\Hf)\,
&=\,\label{eva2001}
\HT^{-\nu}\,f^{-\kappa}_\ve(\Hf)\,\RA{iy}\,\Upsilon_{\kappa,\nu}(iy)
\\
&=\,\label{eva2001b}
\HT^{-\nu}\,f^{-\kappa}_\ve(\Hf)\,\wt{\Upsilon}_{\kappa,\nu}(iy)\,\RA{iy}
\,,
\end{align}
on $\dom(\HT^{-\nu})$,
and 
$\|\Upsilon_{\kappa,\nu}(iy)\|,
\|\wt{\Upsilon}_{\kappa,\nu}(iy)\|\klg k_2$,
where $\rho$ is defined in Lemma~\ref{le-clara1}.
\end{lemma}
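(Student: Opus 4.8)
The plan is to derive \eqref{eva2001} and \eqref{eva2001b} from the second resolvent identity, treating $\HT^{-\nu}f_\ve^{-\kappa}(\Hf)$ as an operator that "almost commutes" with $\RA{iy}$, with the error controlled by Lemma~\ref{le-clara1}. First I would write, on $\core$,
\begin{align*}
\big[\,\RA{iy}\,,\,\HT^{-\nu}f_\ve^{-\kappa}(\Hf)\,\big]
\,&=\,\RA{iy}\,\big[\,\HT^{-\nu}f_\ve^{-\kappa}(\Hf)\,,\,\DA\,\big]\,\RA{iy}
\\
&=\,\RA{iy}\,\big[\,\HT^{-\nu}f_\ve^{-\kappa}(\Hf)\,,\,\valpha\cdot\V{A}\,\big]\,\RA{iy}\,,
\end{align*}
since $\HT$ and $f_\ve(\Hf)$ commute with $-i\nabla_\V{x}$ and $\beta$. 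Rearranging gives
$$
\RA{iy}\,\HT^{-\nu}f_\ve^{-\kappa}(\Hf)\,=\,\HT^{-\nu}f_\ve^{-\kappa}(\Hf)\,\RA{iy}\,
\big(\id+B_\ve(iy)\big)\,,
$$
where $B_\ve(iy):=f_\ve^{\kappa}(\Hf)\,\HT^{\nu}\,\RA{iy}\,[\,\HT^{-\nu}f_\ve^{-\kappa}(\Hf)\,,\,\valpha\cdot\V{A}\,]\,\RA{iy}$, after using $\|\RA{iy}\|\klg 1$ and inserting $\HT^{\nu+1/2}\,\HT^{-\nu-1/2}=\id$ between $\RA{iy}$ and the commutator to make the bound of Lemma~\ref{le-clara1} applicable. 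So one would set $\Upsilon_{\kappa,\nu}(iy):=\id+B_\ve(iy)$, provided $\|B_\ve(iy)\|<1$; similarly \eqref{eva2001b} comes from placing the commutator on the left, i.e. $\wt\Upsilon_{\kappa,\nu}(iy)$ built from $[\,\valpha\cdot\V{A}\,,\,\HT^{-\nu}f_\ve^{-\kappa}(\Hf)\,]$ sandwiched on the other side.

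The key estimate is that $\|B_\ve(iy)\|\leq 1/2$ for $E$ large. The commutator
$[\,\HT^{-\nu}f_\ve^{-\kappa}(\Hf)\,,\,\valpha\cdot\V{A}\,]$ is expanded by the Leibniz rule into two pieces, $\HT^{-\nu}[\,f_\ve^{-\kappa}(\Hf)\,,\,\valpha\cdot\V{A}\,]$ and $[\,\HT^{-\nu}\,,\,\valpha\cdot\V{A}\,]\,f_\ve^{-\kappa}(\Hf)$; in each I would group factors so as to match the operator
$\HT^{\nu'+\gamma}f_\ve^{\sigma}(\Hf)\,[\valpha\cdot\V{A}\,,\,f_\ve^{\kappa'}(\Hf)]\,\HT^{-\nu'+\delta}f_\ve^{-\kappa'+\tau}(\Hf)$ from Lemma~\ref{le-clara1}, distributing the $\HT^{\nu}$ and $f_\ve^{\kappa}(\Hf)$ prefactors of $B_\ve(iy)$ and the two resolvents $\RA{iy}$ (each worth a power $\HT^{1/2}\,\RA{iy}$, bounded by $1$, after commuting $\HT^{1/2}$ through $\RA{iy}$—which is itself the content of the $\nu=1/2$ case, so a small bootstrap or the already-cited \cite[Corollary~3.1]{MatteStockmeyer2009a} is needed here) so that the exponent budget $\gamma+\delta+\sigma+\tau\leq 1/2$ in Lemma~\ref{le-clara1} is respected. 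Lemma~\ref{le-clara1} then yields a bound of the form $|\kappa|\,C(\kappa,\nu,d_1,d_\rho)\,E^{-1/2}$ on $\|B_\ve(iy)\|$, uniformly in $y\in\RR$ and $\ve\grg0$; choosing $k_1=k_1(\kappa,\nu,d_1,d_\rho)$ so that this is $\leq 1/2$ for $E\grg k_1$ makes $\Upsilon_{\kappa,\nu}(iy)=\id+B_\ve(iy)$ invertible with $\|\Upsilon_{\kappa,\nu}(iy)\|\leq 2=:k_2$, and the same for $\wt\Upsilon_{\kappa,\nu}(iy)$.

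Finally I would promote the identities from $\core$ to all of $\dom(\HT^{-\nu})$: both sides of \eqref{eva2001} and \eqref{eva2001b} are bounded operators (or extend to such) on the stated domain, $\core$ is a core for $\HT^{-\nu}$ interpreted appropriately, and $\RA{iy}$ maps $\dom(\Hf^\nu)$ into itself by the cited extension of \cite[Corollary~3.1]{MatteStockmeyer2009a}, so a density argument closes the proof. The main obstacle I anticipate is purely bookkeeping: correctly partitioning the exponents among the prefactors, the two resolvent factors, and the commutator so that every application of Lemma~\ref{le-clara1} stays inside the admissible range $\gamma,\delta,\sigma,\tau\grg0$, $\gamma+\delta+\sigma+\tau\leq 1/2$—in particular handling the sign of $\kappa$ (the factors $(1+\omega(k))^{-(0\wedge\kappa)}$ in that lemma) and verifying that the resulting $\rho$ is exactly the one defined there. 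No genuinely new analytic input beyond Lemma~\ref{le-clara1} and the mapping property of $\RA{iy}$ should be required.
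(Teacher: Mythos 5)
Your overall strategy---the resolvent commutator identity $[\RA{iy},P]=\RA{iy}\,[P,\valpha\cdot\V{A}]\,\RA{iy}$ for $P:=\HT^{-\nu}f_\ve^{-\kappa}(\Hf)$, controlled by Lemma~\ref{le-clara1} and made small by taking $E$ large---is indeed the paper's strategy. But your key algebraic step is wrong, and the error hides the one genuine idea the proof needs. Factoring the commutator as $[P,\valpha\cdot\V{A}]=P\,S$ with $S:=\HT^{\nu}f_\ve^{\kappa}(\Hf)\,[P,\valpha\cdot\V{A}]$ bounded of norm $\bigO(E^{-1/2})$, the identity reads $\RA{iy}\,P-P\,\RA{iy}=\RA{iy}\,P\,S\,\RA{iy}$, i.e.\ $\RA{iy}\,P\,(\id-S\,\RA{iy})=P\,\RA{iy}$: the unwanted factor $\RA{iy}P$ sits on the left and cannot simply be pulled past $P$. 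Your claimed rearrangement $\RA{iy}P=P\,\RA{iy}\,(\id+B_\ve(iy))$ with the explicit $B_\ve(iy)=f_\ve^{\kappa}(\Hf)\,\HT^{\nu}\,\RA{iy}\,[P,\valpha\cdot\V{A}]\,\RA{iy}$ is false: comparing it with the true identity forces $P\,\RA{iy}\,P^{-1}=\id$, i.e.\ that $\RA{iy}$ already commutes with $P$, which is precisely what is not available. The correct conclusion is $\RA{iy}\,P=P\,\RA{iy}\,(\id-S\,\RA{iy})^{-1}$, so $\Upsilon_{\kappa,\nu}(iy)$ must be a Neumann series $\sum_{\ell}(S\,\RA{iy})^{\ell}$ (the paper reaches the equivalent $(\id+T_0^*\RA{iy})^{-1}$ by first taking adjoints and replacing $y$ by $-y$), not $\id+B$ for an explicitly computed $B$. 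Writing $\Upsilon=\id+B$ and merely remarking ``provided $\|B\|<1$'' skips the inversion that is the actual content of this step.

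A second, related problem is your plan to bound the correction by letting each resolvent ``absorb'' a factor $\HT^{1/2}$; as you yourself note, commuting $\HT^{1/2}$ through $\RA{iy}$ is an instance of the very lemma being proved, so this is circular. The paper avoids the issue entirely: the operator $T_0=[P,\valpha\cdot\V{A}]\,\HT^{\nu}f_\ve^{\kappa}(\Hf)$ (equivalently $S$ above) is bounded with no resolvents present, directly from \eqref{xaver0} together with Lemma~\ref{le-clara1} applied with $\gamma=\delta=\sigma=\tau=0$; the resolvents then contribute only $\|\RA{iy}\|\klg1$. Finally, you should treat $\nu<0$ separately (for $\nu\grg0$ and $\ve>0$ the operator $P$ is bounded and one can cancel against the dense set $(\DA-iy)\,\core$; for $\nu<0$ one argues with $\HT^{\nu}f_\ve^{\kappa}(\Hf)$ and transfers the identities back), and record the reduction of $\ve=0$ to $\ve>0$ by absorbing $\kappa$ into $\nu$.
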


\begin{proof}
Without loss of generality we may assume that $\ve>0$
for otherwise we could simply replace $\nu$ by $\nu+\kappa$
and $f_0^\kappa$ by $f_0^0=1$. 
First, we assume in addition that $\nu\grg0$.
We observe that
\begin{align}\nonumber
T_0\,:=\,\big[\,&\HT^{-\nu}\,f_\ve^{-\kappa}(\Hf)\,,\,\valpha\cdot\V{A}\,\big]
\,\HT^\nu\,f_\ve^\kappa(\Hf)\,=\,T_1\,+\,T_2
\end{align}
on $\sD$, where
\begin{align*}
T_1\,&:=\,
[\HT^{-\nu}\,,\,\valpha\cdot\V{A}]\,\HT^\nu
\,,\qquad T_2\,:=\,
\HT^{-\nu}\,[f_\ve^{-\kappa}(\Hf)\,,\,\valpha\cdot\V{A}]\,f_\ve^\kappa(\Hf)
\,\HT^\nu.
\end{align*}
Due to \eqref{xaver0} (or \eqref{clara1a} with $\ve=0$)
the operator $T_1$ extends to a bounded
operator on $\HR$ and $\|T_1\|\klg C_\nu/E^{1/2}$.
According to \eqref{clara1a} we further have
$\|T_2\|\klg C_{\kappa,\nu}(d_1+d_\rho)/E^{1/2}$. 
We pick some $\phi\in\core$ and compute
\begin{align}
\big[\,&\RA{iy}\,,\,\HT^{-\nu}\,f^{-\kappa}_\ve(\Hf)\,\big]
\,(\DA-iy)\,\phi\nonumber
\,=\,\RA{iy}\,\big[\,\HT^{-\nu}\,f^{-\kappa}_\ve(\Hf)\,,\,\DA\,\big]\,\phi
\\  \nonumber
&=\,\RA{iy}\,T_0\,\HT^{-\nu}\,f^{-\kappa}_\ve(\Hf)\,\phi
\\  \label{fred1}
&=\,
\RA{iy}\,\ol{T}_0\,\HT^{-\nu}\,f^{-\kappa}_\ve(\Hf)
\,\RA{iy}\,(\DA-iy)\,\phi\,.
\end{align}
Since $(\DA-iy)\,\core$ is dense in $\HR$ and since $\HT^{-\nu}$
and $f_\ve^\kappa(\Hf)$ are bounded (here we use that $\nu\grg0$
and $\ve>0$),
this identity implies
$$
\RA{iy}\,\HT^{-\nu}\,f^{-\kappa}_\ve(\Hf)\,=\,\big(\id+\RA{iy}\,\ol{T}_0\big)\,
\HT^{-\nu}\,f^{-\kappa}_\ve(\Hf)\,\RA{iy}\,.
$$
Taking the adjoint of the previous identity and replacing
$y$ by $-y$ we obtain
\begin{equation*}
\HT^{-\nu}\,f^{-\kappa}_\ve(\Hf)\,\RA{iy}\,=\,
\RA{iy}\,\HT^{-\nu}\,f^{-\kappa}_\ve(\Hf)\,
(\id+T_0^*\,\RA{iy})\,.
\end{equation*}
In view of the norm bounds on $T_1$ and $T_2$ we see that
\eqref{eva2001} and \eqref{eva2001b} are valid with 
${\Upsilon}_{\kappa,\nu}(iy):=\sum_{\ell=0}^\infty\{-T_0^*\,\RA{iy}\}^\ell$
and $\wt{\Upsilon}_{\kappa,\nu}(iy):=\sum_{\ell=0}^\infty\{-\RA{iy}\,T_0^*\}^\ell$,
provided that $E$ is sufficiently large, depending only
on $\kappa,\nu,d_1$, and $d_\rho$, such that the
Neumann series converge.

Now, let $\nu<0$. Then we write $T_0$ on the domain $\sD$ as
$$
T_0\,=\,\HT^{-\nu}\,f_\ve^{-\kappa}(\Hf)\,
\big[\valpha\cdot\V{A}\,,\,\HT^\nu\,f_\ve^\kappa(\Hf)\big]\,,
$$
and deduce that
$$
\RA{iy}\,\HT^{\nu}\,f_\ve^{\kappa}(\Hf)\,(\id+\ol{T}_0\,\RA{iy})\,=\,
\HT^\nu\,f_\ve^\kappa(\Hf)\,\RA{iy}
$$
by a computation analogous to \eqref{fred1}.
Taking the adjoint of this identity with $y$
replaced by $-y$ we get
\begin{equation*}
\big(\id+\RA{iy}\,T_0^*\big)\,\HT^{\nu}\,f_\ve^{\kappa}\,\RA{iy}\,=\,
\RA{iy}\,\HT^\nu\,f_\ve^\kappa(\Hf)\,.
\end{equation*}
Next, we invert $\id+\RA{iy}\,T_0^*$ by means of the same
Neumann series as above. As a result we obtain
$$
\HT^\nu f_\ve^\kappa(\Hf)\,\RA{iy}=
\RA{iy}\,\Upsilon_{\kappa,\nu}(iy)\,\HT^\nu f_\ve^\kappa(\Hf)
=
\wt{\Upsilon}_{\kappa,\nu}(iy)\,\RA{iy}\,\HT^{\nu}f_\ve^\kappa(\Hf),
$$
where the definition of $\Upsilon_{\kappa,\nu}$
and $\wt{\Upsilon}_{\kappa,\nu}$ has been extended
to negative $\nu$.
It follows that 
$\RA{iy}\,\Upsilon_{\kappa,\nu}(iy)=\wt{\Upsilon}_{\kappa,\nu}(iy)\,\RA{iy}$
maps $\dom(\HT^{-\nu})=\dom(\HT^{-\nu}\,f_\ve^{-\kappa}(\Hf))
=\Ran(\HT^\nu\, f_\ve^\kappa(\Hf))$
into itself and that \eqref{eva2001} and \eqref{eva2001b}
still hold true when $\nu$ is negative.
\end{proof}

\smallskip

\noindent
In order to control the Coulomb singularity $1/|\V{x}|$
in terms of $|\DA|$ and $\Hf$ in the proof of the following
corollary, we shall employ
the bound \cite[Theorem~2.3]{MatteStockmeyer2009a}
\begin{equation}\label{maria}
\frac{2}{\pi}\frac{1}{|\V{x}|}\,\klg\,
|\DA|+\Hf+k\,d_1^2\,,
\end{equation}
which holds true in sense of quadratic forms
on $\form(|\DA|)\cap\form(\Hf)$.
Here $k\in(0,\infty)$ is some universal constant.
We abbreviate the sign function of the Dirac operator,
which can be represented as a strongly convergent
principal value \cite[Lemma~VI.5.6]{Kato}, by
\begin{equation}\label{for-sgn}
\SA\,\psi\,:=\,\DA\,|\DA|^{-1}\,\psi
\,=\,\lim_{\tau\to\infty}\int_{-\tau}^\tau \RA{iy}\,\psi\,\frac{dy}{\pi}\,.
\end{equation}
We recall from \cite[Lemma~3.3]{MatteStockmeyer2009a}
that $\SA$ maps $\dom(\Hf^\nu)$ into itself, for
every $\nu>0$. This can also be read off from the
proof of the next corollary.

\begin{corollary}\label{cor-clara1}
Assume that $\omega$ and $\V{G}$ fulfill Hypothesis~\ref{hyp-G}.
Let $\kappa,\nu\in\RR$.
Then we find some $C\equiv C(\kappa,\nu,d_1,d_\rho)\in(0,\infty)$
such that, for all $\gamma,\delta,\sigma,\tau\grg0$
with $\gamma+\delta+\sigma+\tau\klg1/2$ and all $\ve\grg0$,
$E\grg k_1$,
\begin{align}
\big\|\,\HT^{\nu}\,f_\ve^\kappa(\Hf)\,\SA\,
\HT^{-\nu}\,f_\ve^{-\kappa}(\Hf)\,\big\|
\,&\klg\,C,\label{clara1bb}
\\
\big\|\,|\DA|^{1/2}\,
\HT^{\nu+\gamma}\,f_\ve^\sigma(\Hf)\,[\SA\,,\,f_\ve^\kappa(\Hf)]
\,\HT^{-\nu+\delta}\,f_\ve^{-\kappa+\tau}(\Hf)\,\big\|\,
&\klg\,C\label{clara1b},
\\
\big\|\,|\V{x}|^{-1/2}\,
\HT^{\nu}\,f_\ve^\sigma(\Hf)\,[\SA\,,\,f_\ve^\kappa(\Hf)]\,\HT^{-\nu-\sigma-\tau}
\,f_\ve^{-\kappa+\tau}(\Hf)\,\big\|\,
&\klg\,C\label{clara1c}.
\end{align}
($k_1$ is the constant appearing in Lemma~\ref{le-clara1b},
$\HT$ is given by \eqref{det-HT}, $f_\ve$ by \eqref{clara1}.)
\end{corollary}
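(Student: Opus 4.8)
The plan is to feed the strongly convergent principal-value representation \eqref{for-sgn} of $\SA$ into each of the three left-hand sides, then use Lemmas~\ref{le-clara1} and~\ref{le-clara1b} to carry all weights $\HT^{s}f_\ve^{r}(\Hf)$ past the resolvents $\RA{iy}$, thereby reducing every estimate to a uniform bound on the integrand of a $y$-integral whose operator norm is dominated by $(1+y^2)^{-1}$. All operators are taken a priori on $\core$, where the following manipulations are legitimate, and the resulting norm bounds extend to $\HR$ by density. I will use freely $\|\RA{iy}\|\klg(1+y^2)^{-1/2}$, $\|\,|\DA|^{1/2}\RA{iy}\,\|\klg1$, $\|\,|\DA|^{-1/2}\,\|\klg1$, and $\int_\RR\|\,|\DA|^{1/2}\RA{iy}\chi\|^2\,dy/\pi=\|\chi\|^2$ (all immediate from $\spec[\DA]\subset(-\infty,-1]\cup[1,\infty)$ and the spectral theorem), as well as $\|T_0\|\klg\const$ for $E\grg k_1$, where $T_0$ is the operator from the proof of Lemma~\ref{le-clara1b}. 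For \eqref{clara1bb} one writes, using \eqref{for-sgn}, $\HT^\nu f_\ve^\kappa(\Hf)\,\SA\,\HT^{-\nu}f_\ve^{-\kappa}(\Hf)=\mathrm{p.v.}\!\int_\RR\HT^\nu f_\ve^\kappa(\Hf)\,\RA{iy}\,\HT^{-\nu}f_\ve^{-\kappa}(\Hf)\,dy/\pi$; by \eqref{eva2001b} the integrand equals $\wt\Upsilon_{\kappa,\nu}(iy)\,\RA{iy}$, and the Neumann-series definition of $\wt\Upsilon_{\kappa,\nu}$ gives $\wt\Upsilon_{\kappa,\nu}(iy)\RA{iy}=\RA{iy}-\RA{iy}\,T_0^*\,\wt\Upsilon_{\kappa,\nu}(iy)\RA{iy}$. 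The first term integrates to $\SA$ (of norm $1$), and the second has norm $\klg(1+y^2)^{-1}\|T_0\|\,k_2$, hence integrates to a bounded operator of norm $\klg\|T_0\|\,k_2$; thus \eqref{clara1bb} holds with $C=1+\|T_0\|\,k_2$.

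For \eqref{clara1b} I begin from the identities, valid on $\core$, $[\SA,f_\ve^\kappa(\Hf)]=\mathrm{p.v.}\!\int_\RR[\RA{iy},f_\ve^\kappa(\Hf)]\,dy/\pi$ and $[\RA{iy},f_\ve^\kappa(\Hf)]=-\RA{iy}\,[\valpha\cdot\V{A},f_\ve^\kappa(\Hf)]\,\RA{iy}$, the latter because $-i\nabla_\V{x}$ and $\beta$ commute with every function of $\Hf$, so $[\DA,f_\ve^\kappa(\Hf)]=[\valpha\cdot\V{A},f_\ve^\kappa(\Hf)]$. Sandwiching with $|\DA|^{1/2}\HT^{\nu+\gamma}f_\ve^\sigma(\Hf)$ on the left and $\HT^{-\nu+\delta}f_\ve^{-\kappa+\tau}(\Hf)$ on the right, I commute $\HT^{\nu+\gamma}f_\ve^\sigma(\Hf)$ to the right of the first $\RA{iy}$ and $\HT^{-\nu+\delta}f_\ve^{-\kappa+\tau}(\Hf)$ to the left of the second $\RA{iy}$, each move costing, by Lemma~\ref{le-clara1b} (the first via the $y\mapsto-y$ adjoint of \eqref{eva2001b} with $(\kappa,\nu)$ replaced by $(-\sigma,-\nu-\gamma)$, the second via \eqref{eva2001} with $(\kappa,\nu)$ replaced by $(\kappa-\tau,\nu-\delta)$), an intertwiner of norm $\klg k_2$; between the two resolvents there then remains exactly $\HT^{\nu+\gamma}f_\ve^\sigma(\Hf)\,[\valpha\cdot\V{A},f_\ve^\kappa(\Hf)]\,\HT^{-\nu+\delta}f_\ve^{-\kappa+\tau}(\Hf)$, of norm $\klg\const$ by \eqref{clara1a} since $\gamma+\delta+\sigma+\tau\klg1/2$ and $E\grg1$. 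Hence the whole expression reduces to $\mathrm{p.v.}\!\int_\RR|\DA|^{1/2}\RA{iy}\,M_1(y)\,\RA{iy}\,M_2(y)\,dy/\pi$ with $\sup_y\|M_1(y)\|\klg\const$ and $\sup_y\|M_2(y)\|\klg k_2$; pairing with normalized $\phi,\psi$, using $\RA{iy}^*=\RA{-iy}$ and Cauchy--Schwarz in $y$,
\begin{equation*}
\Big|\int_\RR\SPn{|\DA|^{1/2}\RA{-iy}\phi}{M_1(y)\RA{iy}M_2(y)\psi}\,\frac{dy}{\pi}\Big|\klg\sup_y\|M_1(y)\|\,\Big(\int_\RR\|\,|\DA|^{1/2}\RA{-iy}\phi\|^2\,\frac{dy}{\pi}\Big)^{1/2}\Big(\int_\RR\|\RA{iy}M_2(y)\psi\|^2\,\frac{dy}{\pi}\Big)^{1/2},
\end{equation*}
and the first integral equals $\|\phi\|^2$ while the second is $\klg k_2^2\|\psi\|^2$ because $\|\RA{iy}M_2(y)\psi\|\klg(1+y^2)^{-1/2}k_2\|\psi\|$. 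This yields \eqref{clara1b}; convergence of the improper integral is routine, the operator-norm integrand being $\klg(1+y^2)^{-1}\const$.

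For \eqref{clara1c} I invoke the form bound \eqref{maria}, which gives $\tfrac2\pi\|\,|\V{x}|^{-1/2}\eta\|^2\klg\|\,|\DA|^{1/2}\eta\|^2+\|\Hf^{1/2}\eta\|^2+k\,d_1^2\|\eta\|^2$ for $\eta$ in the common form domain, and apply it to $\eta=N\psi$, where $N$ is the operator in \eqref{clara1c} and $\psi\in\core$. The term $\|\,|\DA|^{1/2}N\psi\|$ is $\klg C\|\psi\|$ by \eqref{clara1b} with $\gamma=\delta=\tau=0$, after writing $\HT^{-\nu-\sigma-\tau}f_\ve^{-\kappa+\tau}(\Hf)=\HT^{-\nu}f_\ve^{-\kappa}(\Hf)\cdot\HT^{-\sigma-\tau}f_\ve^{\tau}(\Hf)$ and noting $0\klg\HT^{-\sigma-\tau}f_\ve^\tau(\Hf)\klg\HT^{-\sigma}$ (since $f_\ve\klg\HT$), whence this last factor has norm $\klg E^{-\sigma}\klg1$. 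For $\|\Hf^{1/2}N\psi\|$ I use $\Hf^{1/2}\klg\HT^{1/2}$ and $\HT^{1/2}N=\HT^{(\nu+1/2)+0}f_\ve^\sigma(\Hf)\,[\SA,f_\ve^\kappa(\Hf)]\,\HT^{-(\nu+1/2)+(1/2-\sigma-\tau)}f_\ve^{-\kappa+\tau}(\Hf)$; since $\sigma+\tau\klg1/2$ one has $1/2-\sigma-\tau\grg0$ and $0+(1/2-\sigma-\tau)+\sigma+\tau=1/2$, so \eqref{clara1b} with $\nu$ replaced by $\nu+1/2$, $\gamma=0$, $\delta=1/2-\sigma-\tau$ (the constant being enlarged to accommodate the shifted $\rho$), together with $\|X\|\klg\|\,|\DA|^{1/2}X\|$ (valid since $|\DA|\grg1$), gives $\|\Hf^{1/2}N\psi\|\klg C\|\psi\|$. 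Finally $\|N\psi\|\klg\|\,|\DA|^{1/2}N\psi\|\klg C\|\psi\|$, and the boundedness of $|\DA|^{1/2}N$ and $\HT^{1/2}N$ shows $N\psi\in\form(|\DA|)\cap\form(\Hf)$, so \eqref{maria} applies and \eqref{clara1c} follows.

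The main obstacle is the bookkeeping in the reductions via Lemma~\ref{le-clara1b}: for every weight $\HT^{\pm s}f_\ve^{\pm r}(\Hf)$ carried past a resolvent one must pin down the correct parameter pair $(\kappa,\nu)$, choose the appropriate member of $\{$\eqref{eva2001},\eqref{eva2001b}$\}$, decide whether a $y\mapsto-y$ adjoint is needed, and keep in mind that shifting $\nu$ or $\kappa$ enlarges $\rho$ and hence the constant (which nevertheless stays of the admissible form $C(\kappa,\nu,d_1,d_\rho)$). A secondary, purely technical point is the justification of the principal-value manipulations — interchanging the improper $y$-integral with bounded operators, with the closed operator $|\DA|^{1/2}$, and with the sesquilinear pairings, together with convergence of the integrals — but in every instance the decay $\|\RA{iy}\|\klg(1+y^2)^{-1/2}$, combined with the uniform bounds of Lemmas~\ref{le-clara1} and~\ref{le-clara1b} and, in \eqref{clara1b}, the Cauchy--Schwarz split, makes the integrands dominated by $(1+y^2)^{-1}$, so these steps are routine rather than delicate.
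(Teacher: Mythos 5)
Your proof is correct and rests on exactly the same tools as the paper's: the strongly convergent principal-value representation \eqref{for-sgn} of $\SA$, the intertwining identities \eqref{eva2001}--\eqref{eva2001b} of Lemma~\ref{le-clara1b} to carry the weights $\HT^{s}f_\ve^{r}(\Hf)$ past the resolvents, the bound \eqref{clara1a} on the middle commutator factor, the spectral-calculus identity $\int_\RR\|\,|\DA|^{1/2}\RA{iy}\vp\|^2\,dy/\pi=\|\vp\|^2$ together with Cauchy--Schwarz in $y$ for \eqref{clara1b}, and finally \eqref{maria} to derive \eqref{clara1c}. The only point of mild divergence is the route to \eqref{clara1bb}: the paper proves \eqref{clara1b} first, then obtains \eqref{clara1bb} by decomposing $X:=\HT^\nu f_\ve^\kappa(\Hf)\,[\SA,\HT^{-\nu}f_\ve^{-\kappa}(\Hf)]$ into $[\HT^\nu,\SA]\HT^{-\nu}+\HT^\nu[f_\ve^\kappa(\Hf),\SA]f_\ve^{-\kappa}(\Hf)\HT^{-\nu}$ and estimating each piece via \eqref{clara1b} (using $\ve=0$ to handle the pure-$\HT$ commutator), using $|\DA|\grg1$ to discard the $|\DA|^{1/2}$ weight. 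You instead prove \eqref{clara1bb} directly by applying \eqref{eva2001b} inside the principal-value integral, unwinding the Neumann series $\wt\Upsilon_{\kappa,\nu}(iy)\RA{iy}=\RA{iy}-\RA{iy}T_0^*\wt\Upsilon_{\kappa,\nu}(iy)\RA{iy}$, and separating the $\SA$ term from an absolutely norm-convergent remainder. Both variants are of equal length and rest on the same bound on $T_0$ from the proof of Lemma~\ref{le-clara1b}; yours makes \eqref{clara1bb} independent of \eqref{clara1b}, the paper's reuses \eqref{clara1b} which must be proved anyway. Your write-up of \eqref{clara1c} spells out the parameter shift $(\nu,\gamma,\delta)\mapsto(\nu+1/2,0,1/2-\sigma-\tau)$ that the paper leaves implicit; that bookkeeping is correct.
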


\begin{proof}
First, we prove \eqref{clara1b}.
Using
\eqref{for-sgn}, writing
\begin{equation*}
[\RA{iy}\,,\,f_\ve^\kappa(\Hf)]\,=\,
\RA{iy}\,[f_\ve^\kappa(\Hf)\,,\,\valpha\cdot\V{A}]\,\RA{iy}
\end{equation*}
on $\sD$ and employing
\eqref{eva2001}, \eqref{eva2001b}, and \eqref{clara1a} we obtain the following
estimate, for all $\vp,\psi\in\sD$, and $E\grg k_1$,
\begin{align*}
\big|\SPb{&|\DA|^{1/2}\,\vp}{\HT^{\nu+\gamma}\,f_\ve^\sigma(\Hf)\,
[\SA\,,\,f_\ve^\kappa(\Hf)]\,\HT^{-\nu+\delta}\,f_\ve(\Hf)^{-\kappa+\tau}
\,\psi}\big|
\\
&\klg\,
\int_\RR\Big|
\SPB{\HT^{\nu+\gamma}\,|\DA|^{1/2}\,\vp}{f_\ve^\sigma(\Hf)
\,[f_\ve^\kappa(\Hf)\,,\,\RA{iy}]\,\times
\\
&\qquad\quad\times\,
\HT^{-\nu+\delta}\,f_\ve^{-\kappa+\tau}(\Hf)\,\psi}\Big|
\,\frac{dy}{\pi}
\\
&=\,
\int_\RR\Big|
\SPB{\vp}{|\DA|^{1/2}\,
\RA{iy}\,\Upsilon_{\sigma,\nu+\gamma}(iy)
\,\HT^{\nu+\gamma}\,f_\ve^\sigma(\Hf)\,
[f_\ve^\kappa(\Hf)\,,\,\valpha\cdot\V{A}]\,\times
\\
&\qquad\quad\times
\,\HT^{-\nu+\delta}\,
f_\ve^{-\kappa+\tau}(\Hf)\,
\wt{\Upsilon}_{\kappa-\tau,\nu-\delta}(iy)\,\RA{iy}\,\psi}
\Big|\,\frac{dy}{\pi}
\\
&\klg\,
C_{\kappa,\nu}\,(d_1+d_\rho)\,E^{\gamma+\delta+\sigma+\tau-1/2}
\,\sup_{y\in\RR}\{\|\Upsilon_{\sigma,\nu+\gamma}(iy)\|\,
\|\wt{\Upsilon}_{\kappa-\tau,\nu-\delta}(iy)\|\}
\\
&\qquad\cdot
\Big(\int_\RR\big\|\,|\DA|^{1/2}\,\RA{iy}\,\vp\,\big\|^2\,
\frac{dy}{\pi}\Big)^{1/2}
\Big(\int_\RR\big\|\,\RA{iy}\,\psi\,\big\|^2\,
\frac{dy}{\pi}\Big)^{1/2}
\\
&\klg\,C_{\kappa,\nu,d_1,d_\rho}\,E^{\gamma+\delta+\sigma+\tau-1/2}
\,\|\vp\|\,\|\psi\|\,.
\end{align*}
This estimate shows that the vector in the right entry of
the scalar product in the first line belongs to
$\dom((|\DA|^{1/2})^*)=\dom(|\DA|^{1/2})$ and that \eqref{clara1b}
holds true.
Next, we observe that \eqref{clara1c} follows from \eqref{clara1b}
and \eqref{maria}.
Finally, \eqref{clara1bb} follows from 
$\|X\|\klg\const(\nu,\kappa,d_1,d_\rho)$, where
$X:=\HT^\nu\,f_\ve^\kappa(\Hf)\,[\SA\,,\,
\HT^{-\nu}\,f_\ve^{-\kappa}(\Hf)]$.
Such a bound on $\|X\|$ is, however, an immediate
consequence of \eqref{clara1b} (where we can choose $\ve=0$)
because
$$
X\,=\,[\HT^{\nu}\,,\,\SA]\,\HT^{-\nu}\,+\,
\HT^{\nu}\,[f_\ve^\kappa(\Hf)\,,\,\SA]\,f_\ve^{-\kappa}(\Hf)\,\HT^{-\nu}
$$
on the domain $\sD$.
\end{proof}


\section{Non-relativistic QED}
\label{sec-nr}

\noindent
The Pauli-Fierz operator for a molecular system
with static nuclei and
$N\in\NN$ electrons interacting with the
quantized radiation field is acting in the Hilbert space
\begin{equation}\label{def-HRN}
\HR_N\,:=\,\cA_N L^2\big((\RR^3\times\ZZ_4)^N\big)\otimes\Fock\,,
\end{equation}
where $\cA_N=\cA_N^2=\cA_N^*$ denotes anti-symmetrization,
$$
(\cA_N\,\Psi)(X)\,:=\,
\frac{1}{N!}\sum_{\pi\in\mathfrak{S}_N}(-1)^\pi\,
\Psi(\V{x}_{\pi(1)},\vs_{\pi(1)},\ldots,\V{x}_{\pi(N)},\vs_{\pi(N)})\,,
$$
for $\Psi\in L^2((\RR^3\times\ZZ_4)^N)$
and a.e. $X=(\V{x}_i,\vs_i)_{i=1}^N\in(\RR^3\times\ZZ_4)^N$.
\`{A} priori it is defined on the dense domain
$$
\sD_N\,:=\,
\cA_N C_0^\infty\big((\RR^3\times\ZZ_4)^N\big) \otimes \sC_0\,,
$$
the tensor product understood in the algebraic sense, by
\begin{equation}\label{def-nrPF}
H_\nr^V\,
\equiv\,H_\nr^V(\V{G})\,:=\,\sum_{i=1}^N(\DA^{(i)})^2\,+\,V\,+\,\Hf\,.
\end{equation}
A superscript $(i)$ indicates that the operator below
is acting on the pair of variables $(\V{x}_i,\vs_i)$. 
In fact, the operator defined in \eqref{def-nrPF} is a 
two-fold copy of the usual Pauli-Fierz operator which acts
on two-spinors and the energy has been shifted by $N$ in \eqref{def-nrPF}.
For \eqref{Clifford} implies
\begin{equation}\label{DA2}
\DA^2=
\cT_\V{A}\oplus\cT_\V{A}\,,\quad
\cT_\V{A}:=\big(\vsigma\cdot(-i\nabla_\V{x}+\V{A})\big)^2+1
\,.
\end{equation}
Here $\vsigma=(\sigma_1,\sigma_2,\sigma_3)$ is a vector
containing the Pauli matrices 
(when $\alpha_j$, $j\in\{0,1,2,3\}$, are given in Dirac's
standard representation).
We write $H_\nr^V$ in the form \eqref{def-nrPF}
to maintain a unified notation throughout this paper.

We shall only make use of the following properties of the
potential $V$.

\begin{hypothesis}\label{hyp-V}
$V$ can be written as
$V=V_+-V_-$, where
$V_\pm\grg0$ is a symmetric operator acting in
$\cA_NL^2\big((\RR^3\times\ZZ_4)^4\big)$
such that $\sD_N\subset\dom(V_\pm)$. 
There exist $a\in(0,1)$ and $b\in(0,\infty)$
such that
$
V_-\klg a\,H_\nr^0+b
$
in the sense of quadratic forms on $\sD_N$. 
\end{hypothesis}

\begin{example}
The Coulomb potential generated by $K\in\NN$ fixed
nuclei located at the positions $\{\V{R}_1,\ldots,\V{R}_K\}\subset\RR^3$
is given as
\begin{equation}\label{def-VC}
\VC(X)\,:=\,-\sum_{i=1}^N\sum_{k=1}^K\frac{e^2\,Z_k}{|\V{x}_i-\V{R}_k|}\,+\,
\sum_{{i,j=1\atop i<j}}^N\frac{e^2}{|\V{x}_i-\V{x}_j|}\,,
\end{equation}
for some $e,Z_1,\ldots,Z_K>0$
and a.e. $X=(\V{x}_i,\vs_i)_{i=1}^N\in(\RR^3\times\ZZ_4)^N$.
It is well-known that $\VC$ is infinitesimally
$H_\nr^0$-bounded and that $\VC$ fulfills
Hypothesis~\ref{hyp-V}.\hfill$\diamond$
\end{example}

\smallskip

\noindent
It follows immediately from Hypothesis~\ref{hyp-V}
that $H_\nr^V$ has a self-adjoint Friedrichs extension
-- henceforth denoted by the same symbol $H_\nr^V$ --
and that $\sD_N$ is a form core for $H_\nr^V$.
Moreover, we have
\begin{equation}\label{fred5}
(\DA^{(1)})^2,\ldots,\,
(\DA^{(N)})^2,\,V_+,\,\Hf\,\klg\,H_\nr^{V_+}\,\klg\,
(1-a)^{-1} \,(H_\nr^V+b)
\end{equation}
on $\sD_N$.
In \cite{FGS2001} it is shown that 
$\dom((H_\nr^V)^{n/2})\subset\dom(\Hf^{n/2})$,
for every $n\in\NN$. We re-derive this result by means of
Corollary~\ref{cor-hoe-allg} in the next theorem where
\begin{align*}
E_{\nr}\,&:=\,\inf\spec[H_\nr^V]\,,
\qquad H_\nr':=\,H_\nr^V-E_\nr+1\,.
\end{align*}

\begin{theorem}\label{thm-hoe-nr}
Assume that $\omega$ and $\V{G}$ fulfill Hypothesis~\ref{hyp-G}
and that $V$ fulfills Hypothesis~\ref{hyp-V}.
Assume in addition that
\begin{align}
\label{hyp-rotGnr}
2\int\omega(k)^{\ell}\,\|\nabla_{\V{x}}\wedge\V{G}(k)\|^2_\infty\,dk
\,&\klg\,d_{\ell+2}^2\,,
\\
\label{hyp-divG}
\int\omega(k)^{\ell}\,\|\nabla_{\V{x}}\cdot\V{G}(k)\|^2_\infty\,dk
\,&\klg\,d_{\ell+2}^2\,,
\end{align}
for all $\ell\in\{-1,0,1,2,\ldots\}$.
Then, for every $n\in\NN$, we have 
$\dom((H_\nr^V)^{n/2})\subset\dom(\Hf^{n/2})$,
$\Hf^{n/2}\,(H_\nr')^{-n/2}$ maps $\dom(H_\nr^V)$ into
itself, and
\begin{align*}
\big\|\,\Hf^{n/2}\,(H_\nr')^{-n/2}\,\big\|\,&\klg\,
C(N,n,a,b,d_{-1},d_1,d_{5+n})\,(|E_\nr|+1)^{(3n-2)/2}\,,
\\
\big\|\,[\Hf^{n/2}\,,\,H_\nr^V]\,(H_\nr')^{-n/2}\,\big\|
\,&\klg\,C'(N,n,a,b,d_{-1},d_1,d_{5+n})\,(|E_\nr|+1)^{(3n-1)/2}\,.
\end{align*}
\end{theorem}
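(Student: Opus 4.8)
The plan is to apply Corollary~\ref{cor-hoe-allg} with $H=H_\nr'$ (which satisfies $H\grg1$ by construction), $A=\Hf$, and $\kappa=1/2$, so that $F_\ve=f_\ve^{1/2}(\Hf)$ where $f_\ve(t)=t/(1+\ve t)$; the conclusion $\Ran((H_\nr')^{-n/2})\subset\dom(\Hf^{n/2})$ together with the two operator-norm bounds is then exactly what is asserted once we track the $E_\nr$-dependence of the constants. The bulk of the work is the verification of hypotheses (a), (b) and (c')—the strengthened commutator condition—for this choice. Condition (b) is immediate from \eqref{fred5}, which gives $\Hf\klg(1-a)^{-1}(H_\nr^V+b)\klg c^2\,H_\nr'$ for a suitable $c=c(a,b,E_\nr)$ comparable to $(|E_\nr|+1)^{1/2}$. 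Condition (a) reduces to showing $f_\ve^{1/2}(\Hf)$ maps $\sD_N$ into $\form(H_\nr^V)$ with a form bound: since $f_\ve^{1/2}(\Hf)$ commutes with $\Hf$ and is a bounded function of $\Hf$, one controls $\langle F_\ve\psi\,|\,(\DA^{(i)})^2\,F_\ve\psi\rangle$ by commuting $F_\ve$ through $\DA^{(i)}$ using Lemma~\ref{le-clara1} (with $\kappa=1/2$, the relevant parameters chosen so $\gamma+\delta+\sigma+\tau\klg1/2$) and then invoking \eqref{fred5} again; the potential term is handled by Hypothesis~\ref{hyp-V}, and $\Hf$ commutes with $F_\ve$ outright.

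The heart of the matter is Condition (c'): bounding
$$
\big|\SPn{H_\nr^V\,\vp_1}{F_\ve^{n}\,\vp_2}-\SPn{F_\ve^{n}\,\vp_1}{H_\nr^V\,\vp_2}\big|
=\big|\SPn{\vp_1}{[\,\Hf^{n/2}_\ve\,,\,H_\nr^V\,]\,\vp_2}\big|
$$
(writing $\Hf^{n/2}_\ve:=F_\ve^n=f_\ve^{n/2}(\Hf)$) by $c_n\{\|\vp_1\|^2+\langle F_\ve^{n-1}\vp_2\,|\,H_\nr^V\,F_\ve^{n-1}\vp_2\rangle\}$. Since $\Hf_\ve^{n/2}$ commutes with both $\Hf$ and the potential $V$, the commutator only sees the kinetic terms $(\DA^{(i)})^2=\cT_\V{A}^{(i)}\oplus\cT_\V{A}^{(i)}$, and we expand $[\,f_\ve^{n/2}(\Hf)\,,\,(\DA^{(i)})^2\,]$ using $\DA^{(i)}=\valpha^{(i)}\cdot(-i\nabla_{\V{x}_i}+\V{A})+\beta$. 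The only non-commuting piece is $\V{A}=\ad(\V{G})+a(\V{G})$; one writes
$[\,f_\ve^{n/2}(\Hf)\,,\,(\DA^{(i)})^2\,]=\DA^{(i)}\,[\,f_\ve^{n/2}(\Hf)\,,\,\valpha^{(i)}\cdot\V{A}\,]+[\,f_\ve^{n/2}(\Hf)\,,\,\valpha^{(i)}\cdot\V{A}\,]\,\DA^{(i)}$,
and applies Lemma~\ref{le-clara1} to each commutator with $\V{A}$ after inserting appropriate powers of $\HT=\Hf+E$; here the extra hypotheses \eqref{hyp-rotGnr}–\eqref{hyp-divG} enter precisely because $\DA^{(i)}$ contains $-i\nabla_{\V{x}_i}$, so commuting $\valpha\cdot\V{A}$ past $\DA^{(i)}$ also produces $\nabla_\V{x}\cdot\V{G}$ and $\nabla_\V{x}\wedge\V{G}$ terms (compare Lemma~\ref{le-clara1}, \eqref{hyp-rotG}). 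After bounding the stray $\DA^{(i)}$, $\valpha\cdot\V{A}$ and $\HT$-power factors in terms of $H_\nr'$ via \eqref{rb-aHf} and \eqref{fred5}, one arranges the estimate so that $(n-1)$ powers of $f_\ve(\Hf)$ land symmetrically on $\vp_2$, producing the $\langle F_\ve^{n-1}\vp_2\,|\,H_\nr^V\,F_\ve^{n-1}\vp_2\rangle$ term, while $\vp_1$ appears bare, giving the $\|\vp_1\|^2$ term. This yields $c_n=C(N,n,a,b,d_{-1},d_1,d_{5+n})\,(|E_\nr|+1)^{3/2}$; the index $5+n$ appears because Lemma~\ref{le-clara1} needs $d_\rho$ with $\rho$ the smallest integer $\grg 3+2|\kappa|+2|\nu|$ and the accumulated $\HT$-powers force $|\nu|\sim n/2$.

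Finally I would feed $c\sim(|E_\nr|+1)^{1/2}$ and $c_\ell\sim(|E_\nr|+1)^{3/2}$ into the explicit constants $4^{n-1}c^n\prod c_\ell$ and $4^n c^{n-1}\prod c_\ell$ of Corollary~\ref{cor-hoe-allg}: the first gives $(|E_\nr|+1)^{n/2}\cdot(|E_\nr|+1)^{3(n-1)/2}=(|E_\nr|+1)^{(3n-2)/2}$ for the direct estimate, and the commutator version picks up one more factor $(|E_\nr|+1)^{3/2}$ against one fewer factor $(|E_\nr|+1)^{1/2}$, yielding $(|E_\nr|+1)^{(3n-1)/2}$, matching the claimed exponents. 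The conclusion that $\Hf^{n/2}(H_\nr')^{-n/2}$ maps $\dom(H_\nr^V)$ into itself is the last assertion of Corollary~\ref{cor-hoe-allg}, and the inclusion $\dom((H_\nr^V)^{n/2})\subset\dom(\Hf^{n/2})$ is its first. The main obstacle is the bookkeeping in Condition (c'): keeping the powers of $f_\ve(\Hf)$ balanced between $\vp_1$ and $\vp_2$ while simultaneously absorbing all $\DA^{(i)}$- and $\HT$-power factors into $H_\nr'$ without picking up uncontrolled $\ve$-dependence—the saving grace being that Lemma~\ref{le-clara1} is tailored to exactly this kind of multi-parameter commutator and is uniform in $\ve\grg0$.
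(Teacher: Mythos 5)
Your proposal follows essentially the same route as the paper: apply Corollary~\ref{cor-hoe-allg} with $F_\ve=f_\ve^{1/2}(\Hf)$, obtain (a) and (b) from \eqref{fred5}, and verify (c') by expanding $[F_\ve^n,(\DA^{(i)})^2]$ into $\valpha\cdot\V{A}$-, $\vSigma\cdot\V{B}$- and $\nabla_{\V{x}}\cdot\V{A}$-commutators controlled by Lemma~\ref{le-clara1}, with the stray $\DA^{(i)}$ factors pushed onto the $F_\ve^{n-1}\vp_2$ side. One small correction to your constant tracking: the paper gets $c_n\sim(|E_\nr|+1)$, not $(|E_\nr|+1)^{3/2}$, and the exponent $(3n-2)/2$ arises as $n/2+(n-1)$; your values would give $n/2+3(n-1)/2=(4n-3)/2$, so your arithmetic lands on the stated exponent only by accident.
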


\begin{proof}
We pick the function $f_\ve$ defined in \eqref{clara1} 
with $E=1$ and verify that the operators
$F_\ve^n\,:=\,f^{n/2}_\ve(\Hf)$, $\ve>0$, $n\in\NN$, and $H_\nr'$
fulfill the conditions (a), (b), and (c') of Theorem~\ref{thm-hoe-allg}
and Corollary~\ref{cor-hoe-comm} with $m=\infty$. 
Then the assertion follows from
Corollary~\ref{cor-hoe-allg}. We set $\HT:=\Hf+E$ in what follows.
By means of \eqref{fred5} we find
\begin{equation}\label{daphne0}
\SPn{\Psi}{F_\ve^2\,\Psi}\klg\SPn{\Psi}{\HT\,\Psi}
\klg \,\frac{E_\nr+b+E}{1-a}\,\SPn{\Psi}{H_\nr'\,\Psi}\,,
\end{equation}
for all $\Psi\in\sD_N$,
which is Condition~(b). 
Next, we observe that $F_\ve$ maps $\sD_N$ into itself.
Employing \eqref{fred5} once more and using $-V_-\klg0$
and the fact that $V_+\grg0$ and $F_\ve$ act on different tensor
factors we deduce that
\begin{align}
\SPb{F_\ve\,\Psi}{(V+\Hf)\,F_\ve\,\Psi}
\,&\klg\,\|f_\ve\|_\infty\,\nonumber
\SPb{\Psi}{(V_++\Hf)\,\Psi}
\\
&\klg\,\label{daphne1}\|f_\ve\|_\infty\,
\frac{E_\nr+b+E}{1-a}\,\SPn{\Psi}{H_\nr'\,\Psi}\,,
\end{align}
for every $\Psi\in\sD_N$.
Thanks to \eqref{clara1a} with $\kappa=1/2$, 
$\nu=\gamma=\delta=\sigma=\tau=0$,
and \eqref{fred5} we further find some $C\in(0,\infty)$
such that
\begin{align}
\big\|\,\DA^{(i)}\,F_\ve\,\Psi\,\big\|^2
&\klg\,\nonumber
2\,\|f_\ve\|_\infty\,\|\DA^{(i)}\,\Psi\|^2
+2\,\|f_\ve\|_\infty\,
\big\|\,F_\ve^{-1}\,[\valpha\cdot\V{A}\,,\,F_\ve]\,\big\|^2\,\|\Psi\|^2
\\ \label{daphne2}
&\klg\,C\,\|f_\ve\|_\infty\,\SPn{\Psi}{H_\nr'\,\Psi}\,,
\end{align}
for all $\Psi\in\sD_N$.
\eqref{daphne1} and \eqref{daphne2} together show that
Condition~(a) is fulfilled, too.
Finally, we verify the bound in (c').
We use 
$$
[\valpha\cdot(-i\nabla_\V{x})\,,\,\valpha\cdot\V{A}]\,=\,\vSigma\cdot\V{B}
-i\,(\nabla_\V{x}\cdot\V{A})\,,
$$ 
where
$
\V{B}:=\ad(\nabla_\V{x}\wedge\V{G})+a(\nabla_\V{x}\wedge\V{G})
$
is the magnetic field and the $j$-th entry of the formal
vector $\vSigma$ is $-i\,\epsilon_{jk\ell}\,\alpha_k\,\alpha_\ell$,
$j,k,\ell\in\{1,2,3\}$,
to write the square of the Dirac operator on the domain $\sD$ as
$$
\DA^2\,=\,\DO^2+
\vSigma\cdot\V{B}-i\,(\nabla_\V{x}\cdot\V{A})
+(\valpha\cdot\V{A})^2+2\,\valpha\cdot\V{A}\,\valpha\cdot(-i\nabla_\V{x})\,.
$$
This yields
\begin{align*}
[H_\nr',F_\ve^n]&=
\sum_{i=1}^N\big[(\DA^{(i)})^2,F_\ve^n\big]
=\sum_{i=1}^N\big\{\,
[\vSigma\cdot\V{B}^{(i)}\,,\,F_\ve^n]-i\,
[(\nabla_\V{x}\cdot\V{A}^{(i)})\,,\,F_\ve^n]
\\
&\;
+\valpha\cdot\V{A}^{(i)}\,[\valpha\cdot\V{A}^{(i)}\,,\,F_\ve^n]
+[\valpha\cdot\V{A}^{(i)}\,,\,F_\ve^n]\,(2\DA^{(i)}
-\valpha\cdot\V{A}^{(i)}-2\beta)
\,\big\}
\end{align*}
on $\sD_N$. For every $i\in\{1,\ldots,N\}$, we further write
\begin{align*}
[\valpha\cdot\V{A}^{(i)}\,,\,F_\ve^n]\,\DA^{(i)}
\,&=\,
Q_{\ve,n}^{(i)}\,
\big(\,\DA^{(i)}\,F^{n-1}_\ve
-{Q}_{\ve,n-1}^{(i)}\,F_\ve^{n-2}\,\big)
\end{align*}
on $\sD_N$, where
\begin{align}\label{def-Qveni}
Q_{\ve,n}^{(i)}\,&:=\,[\valpha\cdot\V{A}^{(i)}\,,\,F_\ve^n]\,F_\ve^{1-n},
\quad n\in\NN\,,\qquad Q_{\ve,0}^{(i)}\,:=\,0\,.
\end{align}
According to \eqref{clara1a} we have 
$\|Q_{\ve,n}^{(i)}\|\klg n\,2^{(n+2)/2}\,(d_1+d_{3+n})$,
$\|\HT^{1/2}\,Q_{\ve,n}^{(i)}\,
\HT^{-1/2}\|\klg n\,2^{(n+3)/2} (d_1+d_{4+n})$. 
Likewise, we write
\begin{align*}
[\valpha\cdot\V{A}^{(i)}\,,\,F_\ve^n]\,\valpha\cdot\V{A}^{(i)}
\,&=\,
Q_{\ve,n}^{(i)}\,
\big(\,\{\valpha\cdot\V{A}^{(i)}\,\HT^{-1/2}\}\,\HT^{1/2}\,F^{n-1}_\ve
-{Q}_{\ve,n-1}^{(i)}\,F_\ve^{n-2}\,\big)
\end{align*}
on $\sD_N$, where $\|\valpha\cdot\V{A}\,\HT^{-1/2}\|^2\klg 2\,d_0^2+4\,d_{-1}^2$
by \eqref{rb-aHf}.
Furthermore, we observe that Lemma~\ref{le-clara1} is applicable
to $\vSigma\cdot\V{B}$ as well instead of $\valpha\cdot\V{A}$;
we simply have to replace the form factor $\V{G}$ by $\nabla_\V{x}\wedge\V{G}$
and to notice that $\|\vSigma\cdot\V{v}\|_{\LO(\CC^4)}=|\V{v}|$,
$\V{v}\in\RR^3$, in analogy to \eqref{C*}. Note that the indices
of $d_\ell$ are shifted by $2$ because of \eqref{hyp-rotGnr}.
Finally, we observe that Lemma~\ref{le-clara1} is applicable
to $\nabla_\V{x}\cdot\V{A}$, too. To this end we have to replace
$\V{G}$ by $(\nabla_\V{x}\cdot\V{G},0,0)$ and
$d_\ell$ by some universal constant
times $d_{2+\ell}$ because of \eqref{hyp-divG}.
Taking all these remarks into account we arrive at
\begin{align*}
\big|&\SPb{\Psi_1}{[H_\nr'\,,\,F_\ve^n]\,\Psi_2}\big|
\,\klg\,
\sum_{i=1}^N\Big\{\,\|\Psi_1\|\,
\big\|\,[\vSigma\cdot\V{B}^{(i)}\,,\,F_\ve^n]\,F_\ve^{1-n}\,\big\|
\,\|F_\ve^{n-1}\,\Psi_2\|
\\
&\:+\|\Psi_1\|\,
\big\|\,[\mathrm{div}\,\V{A}^{(i)}\,,\,F_\ve^n]\,F_\ve^{1-n}\,\big\|
\,\|F_\ve^{n-1}\,\Psi_2\|
\\
&\:+\|\Psi_1\|\,\|\valpha\cdot\V{A}\,\HT^{-1/2}\|\,
\big\|\,\HT^{1/2}\,Q_{\ve,n}^{(i)}\,
\HT^{-1/2}\,\big\|
\,\|\HT^{1/2}\,F_\ve^{n-1}\,\Psi_2\|
\\
&\:+
\|\Psi_1\|\,\|Q_{\ve,n}^{(i)}\|\,\big(\,2\,
\|\DA^{(i)}\,F_\ve^{n-1}\,\Psi_2\|+
\|\valpha\cdot\V{A}\,\HT^{-1/2}\|\,
\|\HT^{1/2}\,F_\ve^{n-1}\,\Psi_2\|\,\big)
\\
&\:+3\,\|\Psi_1\|\,\|Q_{\ve,n}^{(i)}\|\,
\|{Q}_{\ve,n-1}^{(i)}\|\,\|F_\ve^{n-2}\,\Psi_2\|
+2\,\|\Psi_1\|\,\|Q_{\ve,n}^{(i)}\|\,\|\beta\|\,\|F_\ve^{n-1}\,\Psi_2\|\,\Big\}\,,
\end{align*}
for all $\Psi_1,\Psi_2\in\sD_N$.
From this estimate, Lemma~\ref{le-clara1}, and \eqref{fred5}
we readily infer that Condition~(c') is valid
with $c_n= (|E_\nr|+1)\,C''(N,n,a,b,d_{-1},\ldots,d_{5+n})$.
\end{proof}


\section{The semi-relativistic Pauli-Fierz operator}
\label{sec-PF}

\noindent
The semi-relativistic Pauli-Fierz operator is also acting
in the Hilbert space $\HR_N$ introduced in \eqref{def-HRN}.
It is obtained by substituting the non-local
operator $|\DA|$ for $\DA^2$ in $H_\nr^V$.
We thus define, \`{a} priori on the dense domain $\sD_N$,
$$
H_\sr^V\,\equiv\,
H_\sr^V(\V{G})\,:=\,
\sum_{i=1}^N|\DA^{(i)}|\,+\,V\,+\,\Hf\,,
$$
where $V$ is assumed to fulfill Hypothesis~\ref{hyp-V}
with $H_\nr^0$ replaced by $H_\sr^0$.
To ensure that in the case of the Coulomb potential
$\VC$ defined in \eqref{def-VC} 
this yields a well-defined self-adjoint operator
we have to impose appropriate restrictions on the nuclear
charges. 

\begin{example}\label{ex-C-srPF}
In Proposition~\ref{prop-sb-srPF} we show that $H_\sr^{\VC}$
is semi-bounded below on $\sD_N$ provided that $Z_k\in(0,2/\pi e^2]$,
for all $k\in\{1,\ldots,K\}$.
Its proof is actually a straightforward
consequence of \eqref{maria} and a commutator estimate
obtained in \cite{MatteStockmeyer2009a}. 
If all atomic numbers $Z_k$ are strictly less than $2/\pi e^2$
we thus find $a\in(0,1)$ and $b\in(0,\infty)$ such that
\begin{equation}\label{rfb-C-srPF}
\sum_{i=1}^N\sum_{k=1}^K\frac{e^2\,Z_k}{|\V{x}_i-\V{R}_k|}\,\klg\,
a\,H_\sr^0+b
\end{equation}
in the sense of quadratic forms on $\sD_N$. In particular, $\VC$
fulfills Hypothesis~\ref{hyp-V}
with $H_\nr^0$ replaced by $H_\sr^0$ as long as
$Z_k\in(0,2/\pi e^2)$, for $k\in\{1,\ldots,K\}$.
\hfill$\diamond$
\end{example}

\smallskip

\noindent
For potentials $V$ as above
$H_\sr^V$ has a self-adjoint Friedrichs extension
which we denote again by the same symbol $H_\sr^V$.
Moreover, $\sD_N$ is a form core for $H_\sr^V$ and 
we have the following analogue of \eqref{fred5},
\begin{equation} \label{manuela0}
|\DA^{(1)}|,\ldots,|\DA^{(N)}|,\,V_+,\,\Hf\,\klg\,
H_\sr^{V_+}\,\klg\,(1-a)^{-1}\,(H_\sr^V+b)
\end{equation}
on $\sD_N$.
In order to apply Corollary~\ref{cor-hoe-allg} to the
semi-relativistic Pauli-Fierz operator we recall
the following special case of \cite[Corollary~3.7]{KMS2009a}:

\begin{lemma}
Assume that $\omega$ and $\V{G}$ fulfill
Hypothesis~\ref{hyp-G}.
Let $\tau\in(0,1]$.
Then there exist $\delta>0$ and $C\equiv C(\delta,\tau,d_1)\in(0,\infty)$
such that
\begin{equation}\label{vgl-Gi}
C+|\D{\V{A}}|+\tau\,\Hf\grg\delta\,(|\DO|+\Hf)
\grg\delta\,(|\DO|+\tau\,\Hf)
\,\grg\,\delta^2\,|\DA|-\delta\,C
\end{equation}
in the sense of quadratic forms on $\sD$.
\end{lemma}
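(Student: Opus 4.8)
\textbf{Proof plan for the inequality \eqref{vgl-Gi}.}

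The plan is to reduce everything to one elementary diamagnetic-type comparison between $|\DA|$ and $|\DO|$ modulated by $\Hf$, and then dispose of the innocuous chains of inequalities at the ends. The middle inequality $\delta(|\DO|+\Hf)\grg\delta(|\DO|+\tau\,\Hf)$ is trivial since $\tau\in(0,1]$ and $\Hf\grg0$. The left inequality $C+|\DA|+\tau\,\Hf\grg\delta(|\DO|+\Hf)$ and the right inequality $\delta(|\DO|+\tau\,\Hf)\grg\delta^2\,|\DA|-\delta\,C$ are the same statement read in the two directions (after absorbing constants), so it suffices to establish a two-sided form bound of the shape
\begin{equation}\label{plan-twosided}
c_1\,|\DA|-C_1\,\Hf-C_1\;\klg\;|\DO|\;\klg\;c_2\,|\DA|+C_2\,\Hf+C_2
\end{equation}
on $\core$, for suitable positive constants depending only on $d_1$; the asserted chain then follows by choosing $\delta$ small enough and $C$ large enough (e.g.\ $\delta\klg c_1\wedge(1/c_2)\wedge\tau$ and $C$ dominating $C_1,C_2$ and the $\Hf$-coefficients divided by $\tau$).

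First I would control $|\DA|$ by $|\DO|$ and $\Hf$. Since $\DA=\DO+\valpha\cdot\V{A}$, the relative bound \eqref{rb-aHf} gives $\|\valpha\cdot\V{A}\,\psi\|\klg 2d_{-1}\|\Hf^{1/2}\psi\|+d_0\|\psi\|$ for $\psi\in\dom(\Hf^{1/2})$, hence by the spectral theorem applied to the self-adjoint operator $\DO$ (with $\spec(\DO)\subset(-\infty,-1]\cup[1,\infty)$, so $|\DO|\grg1$ and $|\DO|=(\DO^2)^{1/2}$),
\begin{equation}\label{plan-DAbyD0}
\big\|\,|\DA|^{1/2}\psi\big\|^2=\|\DA\psi\|\,\text{-type estimates}\;\Longrightarrow\;|\DA|\,\klg\,2\,|\DO|+C\,\Hf+C
\end{equation}
in form sense: concretely, $\pm\DA=\pm\DO\pm\valpha\cdot\V{A}\klg|\DO|+\tfrac12|\DO|+C'\Hf+C'$ by Cauchy--Schwarz on the cross term, so $|\DA|\klg 2|\DO|+C'\Hf+C'$. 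This is the right half of \eqref{plan-twosided}.

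For the left half one argues symmetrically: $\DO=\DA-\valpha\cdot\V{A}$, but now the relative bound must be phrased with $\Hf$ alone on the right and one uses that $\valpha\cdot\V{A}$ is $|\DO|$-form-bounded with bound controlled by $d_{-1},d_0$ together with $|\DO|\grg1$; then $\pm\DO\klg|\DA|+\tfrac12|\DO|+C''\Hf+C''$, and absorbing the $\tfrac12|\DO|$ on the left gives $|\DO|\klg 2|\DA|+2C''\Hf+2C''$. Rearranged, this is $c_1|\DA|-C_1\Hf-C_1\klg|\DO|$ with $c_1=1/2$. I would be careful here that all manipulations take place on the common core $\core$ where $\DA$, $\DO$, $\Hf$, and $\valpha\cdot\V{A}$ are all defined and the quadratic-form inequalities make literal sense; density of $\core$ in the relevant form domains then extends \eqref{vgl-Gi} to $\sD$ as stated. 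The one genuinely delicate point — and the place I expect to have to be most careful — is making the constants depend only on $d_1$ (and $\delta,\tau$) as claimed, rather than on $d_0,d_{-1}$: this is where one uses $|\DO|\grg1$ to trade a factor of $\omega(k)^{0}$ against $\omega(k)^{1}\klg|\V{k}|$-weighted integrals, i.e.\ to bound $d_0$ and $d_{-1}$ by $d_1$ up to the already-absorbed additive constant, exactly as in \cite[Corollary~3.7]{KMS2009a} whose statement we are quoting.
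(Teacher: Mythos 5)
There is a genuine gap, and it sits at the heart of why this lemma is nontrivial. Your reduction to the two-sided bound \eqref{plan-twosided} is reasonable, and the bound $\pm\DA\klg|\DO|+\epsilon\,\Hf+C_\epsilon$ (via \eqref{rb-aHf} and Cauchy--Schwarz) is correct. But the passage from $\pm\DA\klg S$ to $|\DA|\klg S$ is false in general: the map $T\mapsto|T|$ is not operator monotone, and $-S\klg T\klg S$ does not imply $|T|\klg S$ (already for $2\times2$ matrices: $T=\mathrm{diag}(1,-1)$ and $S=\bigl(\begin{smallmatrix}a&b\\ b&a\end{smallmatrix}\bigr)$ with $a=\sqrt{1+b^2}$ satisfy $\pm T\klg S$ but $S\not\grg\id=|T|$). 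The analytic reason is exactly the one that makes the semi-relativistic models hard: for a vector $\psi$ with comparable components in the positive and negative spectral subspaces of $\DA$, the form $\SPn{\psi}{\DA\,\psi}$ can vanish by cancellation while $\SPn{\psi}{|\DA|\,\psi}$ is large, so no bound on $\pm\DA$ controls $|\DA|$. The same objection defeats the symmetric step $\pm\DO\klg|\DA|+\tfrac12|\DO|+\cdots\Rightarrow|\DO|\klg 2|\DA|+\cdots$ in the other half.

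Note also that the paper does not prove this lemma at all: it is quoted verbatim as a special case of \cite[Corollary~3.7]{KMS2009a}. The proof there has to compare $|\DA|=\SA\,\DA$ with $|\DO|$ using the principal-value representation \eqref{for-sgn} of the sign function, resolvent identities of the type appearing in Lemma~\ref{le-clara1b}, and the commutator bounds of Lemma~\ref{le-clara1}; that is the machinery which replaces the invalid monotonicity step. (Your closing worry about getting constants depending only on $d_1$ rather than $d_{-1},d_0$ is also well placed --- $d_{-1}$ cannot be bounded by $d_1$ in the infrared --- but it is secondary to the operator-monotonicity gap.)
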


\smallskip

\noindent
In the next theorem  we re-derive the higher order estimates
obtained in \cite{FGS2001} for the semi-relativistic 
Pauli-Fierz operator by means of Corollary~\ref{cor-hoe-allg}.
(The second estimate of Theorem~\ref{thm-hoe-srPF} is actually slightly
stronger than the corresponding one stated in \cite{FGS2001}.) 
The estimates of the following proof are also employed in 
Section~\ref{sec-np} where we treat the no-pair operator.
We set
$$
E_\sr\,:=\,\inf\spec[H_\sr]\,,
\qquad H_\sr'\,:=\,H_\sr^V-E_\sr+1\,.
$$

\begin{theorem}\label{thm-hoe-srPF}
Assume that $\omega$ and $\V{G}$ fulfill Hypothesis~\ref{hyp-G}
and that $V$ fulfills Hypothesis~\ref{hyp-V} with
$H_\nr^0$ replaced by $H_\sr^0$.
Then, for every
$m\in\NN$, it follows that 
$\dom((H_\sr^V)^{m/2})\subset \dom(\Hf^{m/2})$,
$\Hf^{m/2}\,(H_\sr')^{-m/2}$ maps $\dom(H_\sr^V)$ into itself, and
\begin{align*}
\big\|\,\Hf^{m/2}\,(H_\sr')^{-m/2}\,\big\|\,&\klg\,
C(N,m,a,b,d_{1},d_{3+m})\,(|E_\sr|+1)^{(3m-2)/2}\,,
\\
\big\|\,[\Hf^{m/2}\,,\,H_\sr^V]\,(H_\sr')^{-m/2}\,\big\|
\,&\klg\,C'(N,m,a,b,d_1,d_{3+m})\,(|E_\sr|+1)^{(3m-1)/2}
\,.
\end{align*}
\end{theorem}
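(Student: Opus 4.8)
The plan is to apply Corollary~\ref{cor-hoe-allg} with $A=\Hf$, $\kappa=1/2$, $H=H_\sr'$, and $F_\ve=f_\ve^{1/2}(\Hf)$ where $f_\ve$ is the function from \eqref{clara1} with $E=1$; the self-adjoint operators $F_\ve$ are bounded since $f_\ve$ is. Thus the task reduces to verifying Conditions~(a), (b) of Theorem~\ref{thm-hoe-allg} and Condition~(c') of Corollary~\ref{cor-hoe-comm} with $m=\infty$ on the form core $\sD_N$. Conditions (b) and (a) are handled exactly as in the proof of Theorem~\ref{thm-hoe-nr}: since $f_\ve\grg0$ and $\|f_\ve\|_\infty<\infty$, the bound \eqref{manuela0} gives $\SPn{\Psi}{F_\ve^2\Psi}\klg\SPn{\Psi}{\HT\Psi}\klg(1-a)^{-1}(E_\sr+b+1)\SPn{\Psi}{H_\sr'\Psi}$, which is (b); and since $F_\ve$ maps $\sD_N$ into itself and commutes with the photon-only operators, $\SPn{F_\ve\Psi}{(V+\Hf)F_\ve\Psi}\klg\|f_\ve\|_\infty\SPn{\Psi}{(V_++\Hf)\Psi}\klg\|f_\ve\|_\infty(1-a)^{-1}(E_\sr+b+1)\SPn{\Psi}{H_\sr'\Psi}$, while for the kinetic part I would bound $\||\DA^{(i)}|^{1/2}F_\ve\Psi\|$ using \eqref{clara1bb} or directly \eqref{clara1b} (the commutator $[\SA,f_\ve^\kappa(\Hf)]$ is controlled, and $|\DA^{(i)}|=\DA^{(i)}\SA^{(i)}$) together with \eqref{manuela0}; this gives (a).

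The substance of the proof lies in Condition~(c'), i.e.\ in establishing, for each $n\in\NN$,
\begin{equation*}
\big|\SPn{H_\sr'\Psi_1}{F_\ve^n\Psi_2}-\SPn{F_\ve^n\Psi_1}{H_\sr'\Psi_2}\big|
\klg c_n\big\{\|\Psi_1\|^2+\SPn{F_\ve^{n-1}\Psi_2}{H_\sr'F_\ve^{n-1}\Psi_2}\big\}
\end{equation*}
for $\Psi_1,\Psi_2\in\sD_N$, with $c_n$ depending only on $N,n,a,b,d_1,d_{3+n}$ and growing like $(|E_\sr|+1)$. Since $V$ and $\Hf$ commute with $F_\ve^n=f_\ve^{n/2}(\Hf)$, only $\sum_i|\DA^{(i)}|$ contributes to the commutator, so everything reduces to estimating $[\,|\DA^{(i)}|\,,\,f_\ve^{n/2}(\Hf)]$. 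I would write $|\DA|=\DA\SA=(\valpha\cdot(-i\nabla+\V A)+\beta)\SA$ and expand the commutator using the Leibniz rule: $[\,|\DA|\,,\,f_\ve^{n/2}(\Hf)]=[\valpha\cdot\V A,f_\ve^{n/2}(\Hf)]\SA+\DA\,[\SA,f_\ve^{n/2}(\Hf)]$, since $-i\nabla_\V x$ and $\beta$ commute with $\Hf$. The first term is handled by Lemma~\ref{le-clara1} (with suitable choices of $\gamma,\delta,\sigma,\tau$ to distribute the $\HT$-powers so that $F_\ve^{n-1}$ and an extra $\HT^{1/2}$ land on $\Psi_2$, and a factor $\|\SA\cdots\SA\|$ absorbed via \eqref{clara1bb}); the second term is the one requiring Corollary~\ref{cor-clara1}, specifically \eqref{clara1b}, which supplies the $|\DA|^{1/2}$ needed to pair against the remaining $|\DA|^{1/2}$ sitting in $\DA=|\DA|^{1/2}\cdot|\DA|^{1/2}\SA$; one then bounds $\||\DA^{(i)}|^{1/2}F_\ve^{n-1}\Psi_2\|^2\klg\SPn{F_\ve^{n-1}\Psi_2}{H_\sr'F_\ve^{n-1}\Psi_2}$ via \eqref{manuela0}. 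Collecting the pieces as in the last display of the proof of Theorem~\ref{thm-hoe-nr}, one reads off $c_n=(|E_\sr|+1)\,C''(N,n,a,b,d_1,d_{3+n})$, and then \eqref{bea1allg} and \eqref{hoe-comm} with $c=C(1-a)^{-1/2}(|E_\sr|+1)^{1/2}$ and $\prod_{\ell<n}c_\ell$ of order $(|E_\sr|+1)^{n-1}$ yield the stated powers $(3m-2)/2$ and $(3m-1)/2$ of $|E_\sr|+1$.

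The main obstacle is bookkeeping rather than conceptual: one must track the distribution of powers of $\HT=\Hf+1$ and of $f_\ve$ through the non-local operators $\SA$ and $|\DA|^{1/2}$ so that Lemma~\ref{le-clara1}, Lemma~\ref{le-clara1b}, and Corollary~\ref{cor-clara1} apply with legal parameters (all of $\gamma,\delta,\sigma,\tau\grg0$ and summing to at most $1/2$), while simultaneously ensuring the right-hand side assembles into exactly $\|\Psi_1\|^2+\SPn{F_\ve^{n-1}\Psi_2}{H_\sr'F_\ve^{n-1}\Psi_2}$ and that all constants stay uniform in $\ve\grg0$. The uniformity in $\ve$ is automatic because every estimate invoked from Section~\ref{sec-comm} is uniform in $\ve\grg0$, and the passage $\ve\searrow0$ at the end is supplied by Corollary~\ref{cor-hoe-allg} via monotone/dominated convergence in the spectral representation of $\Hf$.
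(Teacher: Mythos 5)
Your overall strategy — apply Corollary~\ref{cor-hoe-allg} with $F_\ve=f_\ve^{1/2}(\Hf)$ and verify Conditions (a), (b), (c'), with the only non-trivial commutator coming from $\sum_i|\DA^{(i)}|$ — is indeed the paper's approach. But there are two concrete issues that make your sketch, as written, fall short of the full statement.

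First, the choice $E=1$ in $f_\ve$ (and in $\HT=\Hf+E$) is insufficient. Your argument for (a) invokes \eqref{clara1bb} or \eqref{clara1b}, and your argument for (c') invokes \eqref{clara1b} — but Corollary~\ref{cor-clara1} (and Lemma~\ref{le-clara1b} underlying it) only applies for $E\grg k_1(\kappa,\nu,d_1,d_\rho)$, a constant that has no reason to be $\klg1$. The paper therefore takes $E=k_1\vee C$ with $C$ the constant from \eqref{vgl-Gi}. Relatedly, for (a) the paper does not go through $\SA$-commutators at all: it uses \eqref{vgl-Gi} to compare $|\DA|$ with $|\DO|+\Hf$, and since $|\DO|^{1/2}$ commutes with $F_\ve$ the factor $\|f_\ve\|_\infty$ can be pulled out directly. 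That route is cleaner than the one you sketch and this is where the constant $C$ in \eqref{vgl-Gi} forces $E$ to be large.

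Second, and more seriously, your Leibniz decomposition has the heavy operator on the wrong side. You write $|\DA|=\DA\SA$ and expand
$[\,|\DA|\,,\,F_\ve^n] = [\valpha\cdot\V{A},F_\ve^n]\,\SA + \DA\,[\SA,F_\ve^n]$. In the second term $\DA$ sits to the \emph{left} of the $\SA$-commutator. When you then factor $\DA=|\DA|^{1/2}\cdot|\DA|^{1/2}\SA$ and use \eqref{clara1b} to bound $|\DA|^{1/2}[\SA,F_\ve^n]F_\ve^{1-n}$, the remaining $|\DA|^{1/2}$ stays on the left and lands on $\Psi_1$, producing a term $\SPn{\Psi_1}{|\DA|\,\Psi_1}$ rather than $\|\Psi_1\|^2$. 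That only establishes Condition (c), not (c'). Since the second estimate of the theorem — the commutator bound — is obtained through Corollary~\ref{cor-hoe-comm} and thus genuinely requires (c'), your route as written does not deliver it. The paper instead writes $|\DA|=\SA\DA$ and expands
$[\,|\DA|\,,\,F_\ve^n] = [\SA,F_\ve^n]\,\DA + \SA\,[\DA,F_\ve^n]$; introducing $U_{\ve,n}:=[\SA,F_\ve^n]F_\ve^{1-n}$ and $Q_{\ve,n}:=[\valpha\cdot\V{A},F_\ve^n]F_\ve^{1-n}$ this becomes
$\{U_{\ve,n}|\DA|^{1/2}\}\,\SA\,|\DA|^{1/2}F_\ve^{n-1}
- U_{\ve,n}Q_{\ve,n-1}F_\ve^{n-2}
+ \SA\,Q_{\ve,n}F_\ve^{n-1}$,
so that $|\DA|^{1/2}$ ends up on $F_\ve^{n-1}\Psi_2$, exactly where $\SPn{F_\ve^{n-1}\Psi_2}{H_\sr'F_\ve^{n-1}\Psi_2}$ can absorb it. To rescue your version you would have to push $\DA$ to the right through $[\SA,F_\ve^n]$, at which point you have effectively reconstructed the paper's decomposition. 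Your bookkeeping of $c$, $c_n$, and the powers of $|E_\sr|+1$ is consistent with the final exponents once (c') is actually established.
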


\begin{proof}
Let $m\in\NN$.
We pick the function $f_\ve$ defined in \eqref{clara1}
with $E=k_1\vee C$. ($k_1$ is the constant appearing in
Lemma~\ref{le-clara1b} with $\kappa=m/2$, $\nu=0$, and depends
on $m$, $d_1$, and $d_{3+m}$; $C$ is the one in \eqref{vgl-Gi}.)
We fix some $n\in\NN$, $n\klg m$,
and verify Conditions~(a), (b), and (c') of Theorem~\ref{thm-hoe-allg}
and Corollary~\ref{cor-hoe-comm}
with $F_\ve=f_\ve^{1/2}(\Hf)$, $\ve>0$.
The estimates \eqref{daphne0} and \eqref{daphne1}
are still valid without any further change
when the subscript $\nr$ is replaced by $\sr$. 
Employing \eqref{vgl-Gi} twice and using \eqref{manuela0}
we obtain the following substitute of \eqref{daphne2},
\begin{align*}
\SPb{F_\ve\,&\Psi}{|\DA|\,F_\ve\,\Psi}
\klg \delta^{-1}
\|\,|\DO|^{1/2}\,F_\ve\,\Psi\|^2+\delta^{-1}\,\|\HT^{1/2}\,F_\ve\,\Psi\|^2
\\
&\klg\,\delta^{-1}\,
\|f_\ve\|_\infty\,\big(\|\,|\DO|^{1/2}\,\Psi\|^2+\|\HT^{1/2}\,\Psi\|^2\big)
\klg
C'\,\|f_\ve\|_\infty\,
\SPb{\Psi}{H_\sr'\,\Psi}\,,
\end{align*}
for all $\Psi\in\sD_N$. Altogether we see that Conditions~(a) and~(b)
are satisfied.
In order to verify (c') we set 
\begin{equation}\label{def-Uveni}
U_{\ve,n}^{(i)}\,:=\,
[\SA^{(i)}\,,\,F_\ve^n]\,
F_\ve^{1-n}\,=\,F_\ve^n\,[F_\ve^{-n}\,,\,\SA^{(i)}]\,F_\ve\,,
\qquad i\in\{1,\ldots,N\}\,.
\end{equation}
By virtue of \eqref{clara1b} we know that the norms
of $U_{\ve,n}^{(i)}$ and $U_{\ve,n}^{(i)}\,|\DA^{(i)}|^{1/2}$ are
bounded uniformly in $\ve>0$ by some constant,
$C\in(0,\infty)$, that depends only
on $n$, $d_1$, and $d_{3+n}$.
We employ the notation \eqref{def-Qveni} and \eqref{def-Uveni}
to write
\begin{align*}
&[H_\sr'\,,\,F_\ve^n]\,=\,\nonumber
\sum_{i=1}^N\big[\,|\DA^{(i)}|\,,\,F_\ve^n\,\big]
\,=\,\sum_{i=1}^N\big[\,\SA^{(i)}\,\DA^{(i)}\,,\,F_\ve^n\,\big]
\\
&=
\sum_{i=1}^N\Big\{
\{U_{\ve,n}^{(i)}\,|\DA^{(i)}|^{1/2}\}\,\SA^{(i)}\,|\DA^{(i)}|^{1/2}\,F_\ve^{n-1}
-U_{\ve,n}^{(i)}\,Q_{\ve,n-1}^{(i)}\,F_\ve^{n-2}
+\SA^{(i)}\,Q_{\ve,n}^{(i)}\,F_\ve^{n-1}\Big\}.
\end{align*}
The previous identity, \eqref{manuela0}, and
$|\DA|\grg1$ permit to get 
\begin{align*}
\big|\SPb{\Psi_1}{[H_\sr'\,,\,F_\ve^n]\,\Psi_2}\big|
\,&\klg\,
\sum_{i=1}^N\|\Psi_1\|\,\big\{
C\,\big\|\,|\DA^{(i)}|^{1/2}\,F_\ve^{n-1}\,\Psi_2\big\|
\\
&\quad+C\,\|Q_{\ve,n}^{(i)}\|\,\|F_\ve^{n-2}\,\Psi_2\|
+\|Q_{\ve,n}^{(i)}\|\,\|F_\ve^{n-1}\,\Psi_2\|\big\}
\\
&\klg\,
c_n\,\big\{\,\|\Psi_1\|^2+\SPb{F_\ve^{n-1}\,\Psi_2}{H_\sr'\,
F_\ve^{n-1}\,\Psi_2}\,\big\}
\,,
\end{align*}
for all $\Psi_1,\Psi_2\in\sD_N$ and some 
constant $c_n=C''(n,a,b,d_1,d_{3+n})\,(|E_\sr|+1)$.
So (c') is fulfilled also and the assertion follows from
Corollary~\ref{cor-hoe-allg}.
\end{proof}


\section{The no-pair operator}
\label{sec-np}

\noindent
We introduce the spectral projections
\begin{equation}\label{def-PA}
\PA\,:=\,E_{[0,\infty)}(\DA)\,=\,
\frac{1}{2}\,\id+\,\frac{1}{2}\,\SA\,,
\qquad \PAm\,:=\,\id-\PA\,.
\end{equation}
The no-pair operator acts in the projected Hilbert
space 
$$
\HR_N^+\,\equiv\,\HR_N^+(\V{G})\,:=\,\PAN\,\HR_N\,,\qquad
\PAN\,:=\,
\prod_{i=1}^NP_\V{A}^{+,(i)}\,,
$$
and is \`{a} priori defined on the dense domain
$\PAN\,\sD_N$ by
\begin{equation*}
H_\np^V\,\equiv\,H_\np^V(\V{G})\,:=\,
\PAN\,
\Big\{\,\sum_{i=1}^N\DA^{(i)}+V+\Hf\,\Big\}\,\PAN\,.
\end{equation*}
Notice that all operators $\DA^{(1)},\ldots,\DA^{(N)}$ and 
$\Pa{1},\ldots,\Pa{N}$
commute in pairs owing to the fact that the components
of the vector potential $A^{(i)}(\V{x})$, $A^{(j)}(\V{y})$,
$\V{x},\V{y}\in\RR^3$, $i,j\in\{1,2,3\}$,
commute in the sense that all their spectral
projections commute; see the appendix to
\cite{LiebLoss2002} for more details. 
(Here we use the assumption that 
$\V{G}_{\V{x}}(-\V{k},\lambda)=\ol{\V{G}_{\V{x}}(\V{k},\lambda)}$.)
So the order of the application
of the projections $\Pa{i}$ is immaterial.
In this section we restrict the discussion 
to the case where $V$ is given by the Coulomb potential $\VC$
defined in \eqref{def-VC}. To have a handy notation we set
$$
v_i\,:=\,-\sum_{k=1}^K\frac{e^2\,Z_k}{|\V{x}_i-\V{R}_k|}\,,
\qquad w_{ij}\,:=\,\frac{e^2}{|\V{x}_i-\V{x}_j|}\,,
$$
for all $i\in\{1,\ldots,N\}$ and $1\klg i<j\klg N$,
respectively.
Thanks to \cite[Lemma~3.4(ii)]{MatteStockmeyer2009a},
which implies that $\PA$ maps $\sD$ into 
$\dom(|\DO|)\cap\dom(\Hf^\nu)$, for every $\nu>0$,
and Hardy's inequality,
we actually know that $H_\np^{\VC}$ is well-defined on $\sD_N$.
In order to apply Corollary~\ref{cor-hoe-allg} to $H_\np^{\VC}$
we extend $H_\np^{\VC}$ to a continuously invertible operator
on the whole space $\HR_N$: We pick the 
complementary projection,
$$
\PANb\,:=\,
\id-\PAN\,,
$$
abbreviate
$$
\Pa{i,j}\,:=\,\Pa{i}\,\Pa{j}\,=\,\Pa{j}\,\Pa{i}\,,\qquad
1\klg i<j\klg N\,,
$$
and define the operator $\wt{H}_\np$
\`{a} priori on the domain $\sD_N$ by
\begin{align}
\wt{H}_\np\,&:=\,\nonumber
\sum_{i=1}^N\big\{\,|\DA^{(i)}|+\Pa{i}\,v_i\,\Pa{i}\,\big\}
+\sum_{{i,j=1\atop i<j}}^N
\Pa{i,j}\,w_{ij}\,\Pa{i,j}
\\ \label{def-Hnp'}
&\quad+\PAN\,\Hf\,\PAN
+\PANb\,\Hf\,\PANb\,.
\end{align}
Evidently, we have $[\wt{H}_\np\,,\,\PAN]=0$ 
and $\wt{H}_\np\,\PAN=H_\np^{\VC}\,\PAN$ on $\sD_N$.
In Proposition~\ref{prop-sb-np} we show that the
quadratic forms of the no-pair operator
$H_\np^{\VC}$ and of $\wt{H}_\np$ are
semi-bounded below on $\sD_N$ provided that the atomic
numbers $Z_1,\ldots,Z_K\grg0$ are less than
the critical one of the Brown-Ravenhall
model determined in \cite{EPS1996},
\begin{equation}\label{def-Znp}
Z_\np\,:=\,(2/e^2)/(2/\pi+\pi/2)\,.
\end{equation}
Therefore, both $H_\np^{\VC}$ and $\wt{H}_\np$ possess
self-adjoint Friedrichs extensions which are again denoted
by the same symbols in the sequel.
$\sD_N$ is a form core for $\wt{H}_\np$ and we have the bound
\begin{equation}\label{gisela}
\wt{H}_\np-\sum_{i=1}^N\Pa{i}\,v_i\,\Pa{i}\,
\klg\,\frac{Z_\np+|\sZ|}{Z_\np-|\sZ|}\,\big(\wt{H}_\np
+C(N,\sZ,\sR,d_{-1},d_1,d_5)\big)
\end{equation}
on $\sD_N$,
where $|\sZ|:=\max\{Z_1,\ldots,Z_K\}<Z_\np$.
Moreover, it makes sense to define
$$
E_\np\,:=\,\inf\spec[H_\np^{\VC}]\,,
$$ 
so that
$$
H_\np'\,:=\,
\wt{H}_\np-E_\np\,\PAN+\id\,\grg\,\id\,.
$$

\begin{theorem}\label{thm-hoe-np}
Assume that $\omega$ and $\V{G}$ fulfill
Hypothesis~\ref{hyp-G} and let
$N,K\in\NN$, $e>0$,
$\sZ=(Z_1,\ldots,Z_K)\in[0,Z_\np)^K$, 
and $\sR=\{\V{R}_1,\ldots,\V{R}_K\}\subset\RR^3$,
where $Z_\np$ is defined in \eqref{def-Znp}. 
Then $\dom((H_\np')^{m/2})\subset\dom(\Hf^{m/2})$,
for every $m\in\NN$, and
\begin{align*}
&\big\|\,\Hf^{m/2}\!\!\upharpoonright_{\HR_N^+}
(H_\np-(E_\np-1)\,\id_{\HR_N^+})^{-m/2}\,\big\|_{\LO(\HR_N^+,\HR_N)}
\,\klg\,\big\|\,\Hf^{m/2}\,(H_\np')^{-m/2}\,\big\|
\\
&\quad\klg\,C(N,m,\sZ,\sR,e,d_{-1},d_1,d_{5+m})\,(1+|E_\np|)^{(3m-2)/2}
<\infty\,.
\end{align*}
\end{theorem}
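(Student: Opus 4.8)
The plan is to apply Corollary~\ref{cor-hoe-allg} with $A=\Hf$ and $\kappa=1/2$. Fix $m\in\NN$, pick the function $f_\ve$ of \eqref{clara1} with $E\grg1$ chosen large enough --- at least the constant $k_1$ of Lemma~\ref{le-clara1b} with $\kappa=m/2$, $\nu=0$, and large enough for \eqref{maria} and \eqref{gisela} --- and set $F_\ve:=f_\ve^{1/2}(\Hf)$, $\HT:=\Hf+E$. It then suffices to verify that $H:=H_\np'$ and the $F_\ve$, $\ve>0$, satisfy Conditions~(a), (b), and~(c) of Theorem~\ref{thm-hoe-allg} on the form core $\sD_N$, for every $n\klg m$. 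The feature exploited throughout is that $F_\ve^n=f_\ve^{n/2}(\Hf)$ commutes with $\Hf$ and with the multiplication operators $v_i$ and $w_{ij}$ (disjoint tensor factors), whereas $[\Pa{i}\,,\,F_\ve^n]=\tfrac12[\SA^{(i)}\,,\,F_\ve^n]$; hence every commutator with $F_\ve^n$ that occurs is a sign-function commutator of the type already estimated in Corollary~\ref{cor-clara1}.

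Condition~(b) is immediate: $F_\ve^2=f_\ve(\Hf)\klg\Hf+E=\HT$, and combining \eqref{maria} with the Brown--Ravenhall-type estimates underlying Proposition~\ref{prop-sb-np} and \eqref{gisela} gives $\HT\klg\const\,(1+|E_\np|)\,H_\np'$ on $\sD_N$, so one may take $c^2=\const(1+|E_\np|)$. For Condition~(a) we write $H_\np'=\wt{H}_\np-E_\np\PAN+\id$, conjugate each summand by $F_\ve$, and pull $F_\ve$ through the projections via $[\Pa{i}\,,\,F_\ve]=\tfrac12[\SA^{(i)}\,,\,F_\ve]$: the bounds \eqref{clara1bb}, \eqref{clara1b}, and~\eqref{clara1c} control $|\DA^{(i)}|^{1/2}[\Pa{i}\,,\,F_\ve]$, $|v_i|^{1/2}[\Pa{i}\,,\,F_\ve]$, and --- via Hardy's inequality --- $w_{ij}^{1/2}[\Pa{i}\,,\,F_\ve]$, while $F_\ve\HT F_\ve\klg\|f_\ve\|_\infty\HT$ handles the sandwiched $\Hf$-terms; all resulting contributions are bounded by $c_\ve\SPn{\Psi}{H_\np'\Psi}$, with $c_\ve$ allowed to depend on $\ve$.

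The core of the proof is Condition~(c). Since $F_\ve^n$ commutes with $\Hf$, $v_i$, $w_{ij}$, only the projections contribute, and on $\sD_N$
\begin{align*}
[\wt{H}_\np\,,\,F_\ve^n]\,&=\,
\sum_{i=1}^N[\,|\DA^{(i)}|\,,\,F_\ve^n\,]
+\sum_{i=1}^N\big([\Pa{i}\,,\,F_\ve^n]\,v_i\,\Pa{i}+\Pa{i}\,v_i\,[\Pa{i}\,,\,F_\ve^n]\big)
\\
&\quad+\sum_{i<j}\big([\Pa{i,j}\,,\,F_\ve^n]\,w_{ij}\,\Pa{i,j}+\Pa{i,j}\,w_{ij}\,[\Pa{i,j}\,,\,F_\ve^n]\big)
+\big[\,\PAN\Hf\PAN+\PANb\Hf\PANb\,,\,F_\ve^n\,\big]
\end{align*}
(plus the harmless term $-E_\np[\PAN\,,\,F_\ve^n]$ from $H_\np'$). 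The first sum is treated \emph{verbatim} as in the proof of Theorem~\ref{thm-hoe-srPF}, using the operators $Q_{\ve,n}^{(i)}$ of \eqref{def-Qveni} and $U_{\ve,n}^{(i)}$ of \eqref{def-Uveni}, whose norms, and those of $U_{\ve,n}^{(i)}|\DA^{(i)}|^{1/2}$, are bounded uniformly in $\ve$ by \eqref{clara1b}. For the Coulomb terms we factor $v_i=-|v_i|^{1/2}|v_i|^{1/2}$ (and similarly $w_{ij}$): $[\Pa{i}\,,\,F_\ve^n]\,|v_i|^{1/2}F_\ve^{1-n}$ is bounded by \eqref{clara1c}, while $|v_i|^{1/2}\Pa{i}F_\ve^{n-1}\Psi_2$ (resp.\ $w_{ij}^{1/2}\Pa{i,j}F_\ve^{n-1}\Psi_2$) is dominated by $\SPn{F_\ve^{n-1}\Psi_2}{H_\np'F_\ve^{n-1}\Psi_2}$ through \eqref{maria} and Hardy's inequality, once $F_\ve^{n-1}$ has been moved past the remaining projections (each exchange costing a bounded operator by \eqref{clara1b}/\eqref{clara1c}). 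For the $\Hf$-sandwich one uses $[\Hf\,,\,F_\ve^n]=0$ to reduce $[\PAN\Hf\PAN+\PANb\Hf\PANb\,,\,F_\ve^n]$ to $(\PAN-\PANb)\Hf[\PAN\,,\,F_\ve^n]+[\PAN\,,\,F_\ve^n]\Hf(\PAN-\PANb)$, expands $[\PAN\,,\,F_\ve^n]=\sum_i\Pa{1}\cdots[\Pa{i}\,,\,F_\ve^n]\cdots\Pa{N}$, and moves $\Hf=\HT-E$ past the remaining $\Pa{j}$ and past $\SA^{(i)}$ by \eqref{clara1bb}, so that each term becomes a uniformly bounded operator applied to $\HT^{1/2}F_\ve^{n-1}\Psi_2$, again controlled by $\SPn{F_\ve^{n-1}\Psi_2}{H_\np'F_\ve^{n-1}\Psi_2}$ since $\PAN\Hf\PAN$ and $\PANb\Hf\PANb$ occur in $H_\np'$. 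Collecting everything yields (c) with $c_n=C(N,n,\sZ,\sR,e,d_{-1},\ldots,d_{5+n})\,(1+|E_\np|)$.

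Corollary~\ref{cor-hoe-allg} then gives $\Ran((H_\np')^{-m/2})\subset\dom(\Hf^{m/2})$ and
$$
\big\|\,\Hf^{m/2}(H_\np')^{-m/2}\,\big\|\,\klg\,4^{m-1}\,c^{m}\prod_{\ell=1}^{m-1}c_\ell
\,=\,\const\,(1+|E_\np|)^{m/2+(m-1)}\,=\,\const\,(1+|E_\np|)^{(3m-2)/2}\,,
$$
and the first inequality of the theorem follows because $[\wt{H}_\np\,,\,\PAN]=0$ and $\wt{H}_\np\PAN=H_\np^{\VC}\PAN$, so that on $\HR_N^+$ the operator $(H_\np')^{-m/2}$ agrees with $(H_\np^{\VC}-(E_\np-1)\id_{\HR_N^+})^{-m/2}$ composed with the isometric inclusion $\HR_N^+\hookrightarrow\HR_N$, while $\Hf^{m/2}\!\!\upharpoonright_{\HR_N^+}$ has no larger norm. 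I expect the main obstacle to be Condition~(c): keeping the book-keeping of the commutators of $F_\ve^n$ with the products of non-commuting projections under control \emph{uniformly in $\ve$}, and in particular the $\Hf$-sandwiched term, where a full power of $\Hf$ must be transported past the projections using only the sign-function commutator estimates of Corollary~\ref{cor-clara1} and absorbed into $\HT^{1/2}F_\ve^{n-1}\Psi_2$; the singular Coulomb terms $v_i$, $w_{ij}$ additionally force one to invoke \eqref{maria} and Hardy's inequality to keep them subordinate to $H_\np'$.
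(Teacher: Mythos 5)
Your proposal follows essentially the same route as the paper's proof: choose $E$ large enough for \eqref{maria}, \eqref{vgl-Gi}, and Lemma~\ref{le-clara1b}; verify Conditions~(a), (b), (c) (not (c')) of Theorem~\ref{thm-hoe-allg} for $F_\ve=f_\ve^{1/2}(\Hf)$; use $[\Pa{i},F_\ve^n]=\tfrac12[\SA^{(i)},F_\ve^n]$ to reduce everything to the sign-function commutator estimates of Corollary~\ref{cor-clara1}; treat the $|\DA^{(i)}|$-commutators exactly as in Theorem~\ref{thm-hoe-srPF}; control the Coulomb terms with \eqref{maria} and the analogue of Lemma~\ref{le-wij}; and close with \eqref{gisela}/\eqref{gisela2} to absorb $\HT$-terms into $H_\np'$ and get $c_n\sim(1+|E_\np|)$. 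The only organizational difference is that the paper formalizes this by writing $H_\np'=H_\sr^0+\id+W$ with $W$ containing only $\Pa{i}v_i\Pa{i}$, $\Pa{i,j}w_{ij}\Pa{i,j}$, $-E_\np\PAN$, and the cross term $-2\Re[\PAN\Hf\PANb]$, whereas you expand $[\wt H_\np,F_\ve^n]$ directly, keeping $\PAN\Hf\PAN+\PANb\Hf\PANb$ instead of trading it for $\Hf-2\Re[\PAN\Hf\PANb]$; these are algebraically the same list of commutators. Two small points of care worth noting: for the $\Hf$-sandwiched term, "moving $\Hf$" past the projections should really be done by splitting $\Hf=\Hf^{1/2}\cdot\Hf^{1/2}$ (as in \eqref{gisela4}--\eqref{gisela5a} with $\nu=\pm1/2$) rather than as a whole power, so that the choice of $E$ via $k_1(\kappa,\nu)$ with $|\nu|\klg1/2$ suffices; and the control of $w_{ij}^{1/2}$ across the sign-function commutators is not just Hardy's inequality but really \eqref{maria} applied in the variable $\V{x}_j$ at fixed $\V{x}_i$, together with the intertwining of Lemma~\ref{le-clara1b} -- this is exactly what Lemma~\ref{le-wij} packages. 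Neither point changes the structure of your argument.
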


\begin{proof}
Let $m\in\NN$.
Again we pick the function $f_\ve$ defined in \eqref{clara1}
and set $F_\ve:=f_\ve^{1/2}(\Hf)$, $\ve>0$.
This time we
choose $E=\max\{k\,d_1^2,k_1,C\}$ where $k$ is the constant
appearing in \eqref{maria}, 
$C\equiv C(d_1)$ is the one in \eqref{vgl-Gi}, and 
$k_1$ the one appearing in Lemma~\ref{le-clara1b}
with $|\kappa|=(m+1)/2$, $|\nu|=1/2$.
Thus $k_1$ depends only on $m$, $d_1$, and $d_{5+m}$.
On account of Corollary~\ref{cor-hoe-allg} it suffices to show
that the conditions (a)--(c) of Theorem~\ref{thm-hoe-allg}
are fulfilled.
To this end we observe that on $\sD_N$ the extended
no-pair operator can be written as
$H_\np'=H_\sr^0+\id+W$, where
\begin{align*}
W\,&:=\,
\sum_{i=1}^N\Pa{i}\,v_i\,\Pa{i}+\sum_{{i,j=1\atop i<j}}^N
\Pa{i,j}\,w_{ij}\,\Pa{i,j}
\\
&\qquad
-E_\np\,\PAN
-2\Re\big[\PAN\,\Hf\,\PANb\big]\,.
\end{align*}
The
semi-relativistic Pauli-Fierz operator
$H_\sr^0$ has already been treated in the previous section
and the bound
\begin{equation}\label{gisela2}
\Hf\,\klg\,2\PAN\,\Hf\,\PAN+2\PANb\,\Hf\,\PANb
\end{equation} 
together with \eqref{gisela} implies
\begin{align}\label{gisela3}
H_\sr^0\,&\klg\,2\,\wt{H}_\np-2\sum_{i=1}^N
\Pa{i}\,v_i\,\Pa{i}\,\klg\,
C'\,(1+|E_\np|)\,H_\np'
\end{align}
on $\sD_N$, for some $C'\equiv C'(N,\sZ,\sR,d_{-1},d_1,d_{5})\in(0,\infty)$.
Hence, it only remains to consider the operator $W$.

We fix some $n\in\NN$, $n\klg m$.
When we verify (a) we can ignore the potentials $v_i$
since they are negative.
Using $[F_\ve^n,\PANb]=[\PAN,F_\ve^n]$ we obtain
\begin{align*}
\big|2\Re\SPb{&\PAN\,F_\ve\,\Psi}{\Hf\,\PANb\,F_\ve\,\Psi}\big|
\\
&\klg\,
\big\|\,\Hf^{1/2}\,\PAN\,F_\ve\,\Psi\,\big\|^2
+\big\|\,\Hf^{1/2}\,\PANb\,F_\ve\,\Psi\,\big\|^2
\\
&\klg\,
2\,\|f_\ve\|_\infty\,\big\|\Hf^{1/2}\,\PAN\,\Psi\big\|^2
+2\,\|f_\ve\|_\infty\,\big\|\Hf^{1/2}\,\PANb\,\Psi\big\|^2
\\
&\quad +\,4\,
\big\|\,\Hf^{1/2}\,[\PAN\,,\,F_\ve]\,
\HT^{-1/2}\,\big\|
\,\|\HT^{1/2}\,\Psi\|^2,
\end{align*}
for every $\Psi\in\sD_N$, where, for all $n\in\NN$ and $\nu\in\RR$,
\begin{align*}
\HT^\nu&\,[\PAN\,,\,F_\ve^n]\,\HT^{-\nu}\,F_\ve^{1-n}
\,=\,
\sum_{i=1}^N\Big\{\prod_{j=1}^{i-1} \HT^\nu\,\Pa{j}\,\HT^{-\nu}\Big\}
\,\times
\\
&\quad\times
\,\big\{\HT^\nu\,[\Pa{i},F_\ve^n]\,\HT^{-\nu}\,F_\ve^{1-n}\big\}
\Big\{\prod_{k=i+1}^N\HT^\nu\,F_\ve^{n-1}\,\Pa{k}\,\HT^{-\nu}
\,F_\ve^{1-n}\Big\}
\end{align*}
on $\sD_N$.
On account of Corollary~\ref{cor-clara1} we thus have, for 
$|\nu|\klg1/2$,
\begin{equation}\label{gisela4}
\sup_{\ve>0}
\big\|\,\Hf^{\nu}\,[\PAN\,,\,F_\ve^n]\,\HT^{-\nu}\,
F_\ve^{1-n}\,\big\|\,\klg\,C(N,n,d_1,d_{4+n})\,.
\end{equation}
Likewise we have
\begin{align}
\big|\SPb{F_\ve\,\Psi}{\Pa{i,j}\,w_{ij}\,\Pa{i,j}\,&F_\ve\,\Psi}\big|
\,\klg\,\nonumber
2\,\|f_\ve\|\,\big\|\,w_{ij}^{1/2}\,\Pa{i,j}\,\Psi\,\big\|^2
\\ \label{diether}
&+4\,
\big\|\,w_{ij}^{1/2}\,[\Pa{i,j}\,,\,F_\ve]
\,\HT^{-1/2}\,\big\|^2\,\|\HT^{1/2}\,\Psi\|^2,
\end{align}
where the first norm in the second line of \eqref{diether} is bounded 
(uniformly in $\ve>0$) due to Lemma~\ref{le-wij}.
Taking these remarks,
$v_i\klg0$, \eqref{gisela}, and \eqref{gisela2}
into account
we infer that
\begin{align*}
\SPb{F_\ve\,\Psi}{H_\np'\,F_\ve\,\Psi}\,\klg\,c_\ve
\SPb{\Psi}{H_\np'\,\Psi}\,,\qquad \Psi\in\sD_N\,,
\end{align*}
showing that (a) is fulfilled.
Condition~(b) with 
$c^2=C(N,\sZ,\sR,d_{-1},d_1,d_5)(1+|E_\np|)$ follows immediately from
$F_\ve^2\klg\HT\klg H_\sr^0+E$ on $\sD_N$ and \eqref{gisela3}.
Finally, we turn to Condition~(c).
To this end let
$\PANs$ and $\PANf$ be $\PAN$ or $\PANb$.
On $\sD_N$ we clearly have
\begin{align}
\big[\,&\PANs\,\Hf\,\PANf\,,\,F_\ve^n\,\big]
\,=\,\label{zita1}
\pm \,[\PAN\,,\,F_\ve^n]\,\Hf\,\PANf\,
\pm\, \PANs\,\Hf\,[\PAN\,,\,F_\ve^n]\,.
\end{align}
For $\Psi_1,\Psi_2\in\sD_N$, we thus obtain
\begin{align}\nonumber
\big|\SPb{&\Psi_1}{\big[\,\PANs\,\Hf\,\PANf\,,\,F_\ve^n\,\big]\,\Psi_2}\big|
\\ \nonumber
&\klg\,\|\HT^{1/2}\,\Psi_1\|\,
\big\|\,\HT^{-1/2}\,[\PAN\,,\,F_\ve^n]\,\Hf^{1/2}F^{1-n}\,\big\|
\,
\big\|\Hf^{1/2}\,F_\ve^{n-1}\,\PANf\,\Psi_2\big\|
\\
&\quad+\|\Hf^{1/2}\,\PANs\,\Psi_1\|\,
\big\|\Hf^{1/2}\,[\PAN\,,\,F_\ve^n]\,
\HT^{-1/2}\,F_\ve^{1-n}\big\|\,\|\HT^{1/2}\,F_\ve^{n-1}\,\Psi_2\|
\,,\label{gisela5a}
\end{align}
where we can further estimate
\begin{align}\nonumber
\big\|\Hf^{1/2}&\,F_\ve^{n-1}\,\PANf\,\Psi_2\big\|
\\
&\klg\,\nonumber
\big\{1+\big\|\Hf^{1/2}\,F_\ve^{n-1}\,\PAN\,
\HT^{-1/2}\,F_\ve^{1-n}\big\|\big\}\,\|\HT^{1/2}\,F_\ve^{n-1}\,\Psi_2\|
\\
&\klg\,
\big\{1+\|\Hf^{1/2}\,F_\ve^{n-1}\,\PA\,
\HT^{-1/2}\,F_\ve^{1-n}\|^N\big\}\,\|\HT^{1/2}\,F_\ve^{n-1}\,\Psi_2\|
\,,\label{gisela5}
\end{align}
and, of course,
\begin{align}\label{gisela6}
\|\HT^{1/2}\,F_\ve^{n-1}\,\Psi_2\|\,\klg\,
\|\HT^{1/2}\,\PAN\,F_\ve^{n-1}\,\Psi_2\|
+\|\HT^{1/2}\,\PANb\,F_\ve^{n-1}\,\Psi_2\|\,.
\end{align}
The operator norms in \eqref{gisela5a} can be estimated by means
of \eqref{gisela4} with $\nu=\pm1/2$, the one in the last line of
\eqref{gisela5} is bounded by some $C(n,d_1,d_{3+n})\in(0,\infty)$
due to \eqref{clara1bb}. 
In a similar fashion we obtain, for all $i,j\in\{1,\ldots,N\}$,
$i<j$, and $\Psi_1,\Psi_2\in\sD_N$,
\begin{align}\nonumber
&\big|\SPb{\Psi_1}{[\Pa{i,j}\,w_{ij}\Pa{i,j}\,,\,F_\ve^n]\,\Psi_2}\big|
\\
&\klg\,\nonumber
\big\|\,F_\ve^{1-n}\,w_{ij}^{1/2}\,
[F_\ve^n\,,\,\Pa{i,j}]\,\HT^{-1/2}\,\big\|
\,\|\HT^{1/2}\,\Psi_1\|\,
\big\|F_\ve^{n-1}\,w_{ij}^{1/2}\,\Pa{i,j}\,\Psi_2\big\|
\\
&\,+\label{gisela7}
\big\|\,w_{ij}^{1/2}\,\Pa{i,j}\,\Psi_1\big\|\,
\big\|\,w_{ij}^{1/2}\,[\Pa{i,j},\,F_\ve^n]\,F_\ve^{1-n}\,\HT^{-1/2}\big\|
\,\|\HT^{1/2}\,F_\ve^{n-1}\,\Psi_2\|.
\end{align}
Here we can further estimate
\begin{align}\nonumber
\big\|\,w_{ij}^{1/2}\,F_\ve^{n-1}&\,\Pa{i,j}\,\Psi_2\big\|\,
\,\klg\,
\big\|\,w_{ij}^{1/2}\,\Pa{i,j}\,F_\ve^{n-1}\,\Psi_2\big\|
\\
&+\,\label{gisela8}
\big\|\,w_{ij}^{1/2}\,[F_\ve^{n-1}\,,\,\Pa{i,j}]\,
\HT^{-1/2}F_\ve^{1-n}\,\big\|\,\|\HT^{1/2}\,F_\ve^{n-1}\,\Psi_2\|
\,.
\end{align}
Lemma~\ref{le-wij} below ensures that
all operator norms in \eqref{gisela7} and \eqref{gisela8}
that involve $w_{ij}^{1/2}$ are bounded uniformly in $\ve>0$
by constants depending only on $e,n,d_1$, and $d_{5+n}$. Furthermore,
it is now clear how to treat 
the terms involving $v_i$ or $E_\np$.
(In order to treat $v_i$ just replace $\Pa{i,j}$ by $\Pa{i}$,
$w_{ij}$ by $v_i$, and $w_{ij}^{1/2}$ by $|v_i|^{1/2}$ in
\eqref{gisela7} and \eqref{gisela8}.)
Combining \eqref{zita1}--\eqref{gisela8} and their analogues
for the remaining operators in $W$ we arrive at
\begin{align*}
&\big|\SPb{\Psi_1}{[W\,,\,F_\ve^n]\,\Psi_2}\big|
\\
&\klg\,
C\!\!\sum_{\sharp\in\{+,\bot\}}
\big\{\SPb{\Psi_1}{\PANs\,\Hf\,\PANs\,\Psi_1}+
\SPb{F_\ve^{n-1}\,\Psi_2}{\PANs\,\Hf\,\PANs\,F_\ve^{n-1}\,\Psi_2}\big\}
\\
&+C\!\sum_{{i,j=1\atop i<j}}^N\big\{
\SPb{\Psi_1}{\Pa{i,j}w_{ij}\Pa{i,j}\Psi_1}
+
\SPb{F_\ve^{n-1}\Psi_2}{\Pa{i,j}w_{ij}\Pa{i,j}F_\ve^{n-1}\Psi_2}
\big\}
\\
&+C\sum_{i=1}^N
\big\{
\SPb{\Psi_1}{\Pa{i}\,|v_{i}|\,\Pa{i}\,\Psi_1}
+
\SPb{F_\ve^{n-1}\,\Psi_2}{\Pa{i}\,|v_i|\,\Pa{i}\,F_\ve^{n-1}\,\Psi_2}
\big\}
\\
&+C\,(1+|E_\np|)\,\big\{\|\Psi_1\|^2+\|F_\ve^{n-1}\,\Psi_2\|^2\big\}\,,
\end{align*}
for all $\Psi_1,\Psi_2\in\sD_N$ and some $\ve$-independent 
$C\equiv C(N,n,e,d_1,d_{5+n})\in(0,\infty)$.
Employing successively \eqref{maria}, which implies
$|v_i|\klg(\pi e^2|\sZ|/2)(|\DA^{(i)}|+\HT)$, 
after that \eqref{clara1bb}, which yields
$\|\HT^{1/2}\,\Pa{i}\,\Psi\|^2\klg C(d_1,d_4)(\|\HT^{1/2}\,\PAN\,\Psi\|^2
+\|\HT^{1/2}\,\PANb\,\Psi\|^2)$,
and finally \eqref{gisela}
we conclude that 
Condition~(c) is fulfilled
with $c_n=C(N,n,\sZ,\sR,e,d_{-1},d_1,d_{5+n})(1+|E_\np|)$.
\end{proof}

\begin{lemma}\label{le-wij}
For all $i,j\in\{1,\ldots,N\}$, $i<j$, $n\in\ZZ$, and
$\sigma,\tau\grg0$ with $\sigma+\tau\klg1$, 
\begin{align*}
\sup_{\ve>0}&\big\|F_\ve^{\sigma-n}\,w_{ij}^{1/2}
\,[F_\ve^n\,,\,\Pa{i,j}]\,\HT^{-1/2}\,F_\ve^\tau\,\big\|
\\
&=\,
\sup_{\ve>0}\big\|\,w_{ij}^{1/2}\,F_\ve^\sigma
\,[\Pa{i,j}\,,\,F_\ve^{-n}]\,\HT^{-1/2}\,F_\ve^{n+\tau}\,\big\|
\,\klg\,e\,C(n,d_1,d_{5+n})
\,<\,\infty\,.
\end{align*}
\end{lemma}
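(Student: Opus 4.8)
My plan is the following.

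\smallskip
\noindent\textbf{Reduction to a single operator.}
The two operators in the statement are in fact equal on $\core$, so it suffices to bound one of them. Indeed $w_{ij}^{1/2}=e\,|\V{x}_i-\V{x}_j|^{-1/2}$ acts only on the electronic variables, hence commutes with every function of $\Hf$; moreover $F_\ve^{-n}[F_\ve^n,\Pa{i,j}]=[\Pa{i,j},F_\ve^{-n}]F_\ve^n$ for invertible $F_\ve$, and the leftover $F_\ve^n$ passes through $\HT^{-1/2}$ (another function of $\Hf$). Together these give
\begin{align*}
&F_\ve^{\sigma-n}\,w_{ij}^{1/2}\,[F_\ve^n,\Pa{i,j}]\,\HT^{-1/2}F_\ve^\tau\\
&\qquad=\,w_{ij}^{1/2}F_\ve^\sigma\,[\Pa{i,j},F_\ve^{-n}]\,\HT^{-1/2}F_\ve^{n+\tau}.
\end{align*}
For $n=0$ the commutator vanishes, so from now on $n\in\ZZ\setminus\{0\}$ and I will estimate the right--hand side uniformly in $\ve>0$.

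\smallskip
\noindent\textbf{Leibniz reduction.}
Using $\Pa{i}=\tfrac12(\id+\SA^{(i)})$, the Leibniz rule, and the fact that $\SA^{(i)}$ commutes with $\SA^{(j)}$ and with $\Pa{j}$ (the components of $\V{A}$ at different points having commuting spectral projections), one gets
$$[\Pa{i,j},F_\ve^{-n}]=\tfrac12[\SA^{(i)},F_\ve^{-n}]\Pa{j}+\tfrac12[\SA^{(j)},F_\ve^{-n}]\Pa{i}+\tfrac14[\SA^{(i)},[\SA^{(j)},F_\ve^{-n}]].$$
Thus it remains to bound, uniformly in $\ve$, two \emph{type-(I)} terms $w_{ij}^{1/2}F_\ve^\sigma[\SA^{(\ell)},F_\ve^{-n}]\Pa{k}\HT^{-1/2}F_\ve^{n+\tau}$ with $\{\ell,k\}=\{i,j\}$, and one \emph{type-(II)} term $w_{ij}^{1/2}F_\ve^\sigma[\SA^{(i)},[\SA^{(j)},F_\ve^{-n}]]\HT^{-1/2}F_\ve^{n+\tau}$.

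\smallskip
\noindent\textbf{Type-(I) terms.}
For $\ell=i$, $k=j$ I would use \eqref{maria} translated in the $\V{x}_i$-variable: shifting $\V{x}_i$ replaces the form factor $\V{G}_{\V{x}_i}$ by $\V{G}_{\V{x}_i+\V{x}_j}$, which satisfies Hypothesis~\ref{hyp-G} with unchanged constants, so \eqref{maria}, and hence Corollary~\ref{cor-clara1}, holds with $|\V{x}|^{-1}$ and $\DA$ replaced by $|\V{x}_i-\V{x}_j|^{-1}$ and $\DA^{(i)}$. Choosing the free exponents in \eqref{clara1c} so that it reads
$$\big\|\,w_{ij}^{1/2}F_\ve^\sigma[\SA^{(i)},F_\ve^{-n}]\,\HT^{-(\sigma+\tau)/2}F_\ve^{n+\tau}\,\big\|\,\klg\,e\,C(n,d_1,d_{5+n})$$
— admissible precisely when $\sigma+\tau\klg1$ — I would then write the type-(I) term as this operator times $\HT^{(\sigma+\tau)/2}F_\ve^{-n-\tau}\Pa{j}\HT^{-1/2}F_\ve^{n+\tau}$ and rearrange the (commuting) powers of $\HT$ and $f_\ve$ in the second factor into $\{\HT^{(\sigma+\tau)/2}f_\ve^{-(n+\tau)/2}\Pa{j}\HT^{-(\sigma+\tau)/2}f_\ve^{(n+\tau)/2}\}\,\HT^{(\sigma+\tau-1)/2}$: the braced operator is uniformly bounded by \eqref{clara1bb} (which holds for all real exponents), and $\HT^{(\sigma+\tau-1)/2}$ is bounded since $\sigma+\tau\klg1$ — the only place that hypothesis is used. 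The second type-(I) term is handled identically after shifting $\V{x}_j$.

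\smallskip
\noindent\textbf{The double commutator — the hard part.}
The remaining obstacle is the type-(II) term, where the Coulomb weight cannot be made adjacent to a single sign-function commutator. I would bring $[\SA^{(i)},[\SA^{(j)},F_\ve^{-n}]]$ back within reach of Corollary~\ref{cor-clara1} by inserting for each sign function its principal-value representation \eqref{for-sgn}, $\SA^{(\cdot)}=\int_\RR\RA{iy}^{(\cdot)}\,dy/\pi$, and iterating the resolvent identity $[\RA{iy}^{(\cdot)},X]=\RA{iy}^{(\cdot)}[X,\DA^{(\cdot)}]\RA{iy}^{(\cdot)}$ together with $[\SA^{(j)},\DA^{(i)}]=0$ and $[\valpha\cdot\V{A}^{(i)},\valpha\cdot\V{A}^{(j)}]=0$. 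This peels off two resolvents, one of $\DA^{(i)}$ and one of $\DA^{(j)}$, at the cost of two commutators with $\valpha\cdot\V{A}^{(i)}$ and $\valpha\cdot\V{A}^{(j)}$, leaving a second-order commutator of $F_\ve^{-n}(\Hf)$ with the vector potentials that is controlled by applying the pull-through estimates of Lemma~\ref{le-clara1} twice; the weight $w_{ij}^{1/2}$ is absorbed through \eqref{maria} for $\DA^{(j)}$, the identity $\int_\RR\||\DA^{(j)}|^{1/2}\RA{iy}^{(j)}\psi\|^2\,dy/\pi=\|\psi\|^2$ supplying the $|\V{x}_i-\V{x}_j|^{-1/2}$-control, and the resolvents are moved past the powers of $\Hf$ by the uniformly bounded intertwiners $\Upsilon_{\kappa,\nu}(iy),\wt{\Upsilon}_{\kappa,\nu}(iy)$ of Lemma~\ref{le-clara1b}. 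As in step three, the powers of $\HT$ and $f_\ve$ are organised so that every surplus factor is the identity or a nonpositive power of $\HT$; this bookkeeping, together with the double-commutator reduction, is the one point requiring real care, and it is what ultimately keeps all estimates uniform in $\ve$.
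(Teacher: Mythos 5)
Your reduction to a single operator, the expansion of $[\Pa{i,j},F_\ve^{-n}]$, and the treatment of the two type-(I) terms are all correct; the type-(I) argument is in substance the paper's treatment of its terms $Y_1$ and $Y_3$, and you are right that translation invariance of Hypothesis~\ref{hyp-G} lets \eqref{maria}, and hence \eqref{clara1c} and \eqref{clara1b}, be applied with $|\V{x}|^{-1}$ replaced by $|\V{x}_i-\V{x}_j|^{-1}$ and $\DA$ by $\DA^{(i)}$ or $\DA^{(j)}$. The gap is the double commutator, and it is a self-inflicted one. Your plan reduces $[\SA^{(i)},[\SA^{(j)},F_\ve^{-n}]]$ to the iterated commutator $[\valpha\cdot\V{A}^{(i)},[\valpha\cdot\V{A}^{(j)},f_\ve^{-n/2}(\Hf)]]$ sandwiched between four resolvents and proposes to control it by "applying the pull-through estimates of Lemma~\ref{le-clara1} twice". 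But Lemma~\ref{le-clara1} only bounds commutators of $\valpha\cdot\V{A}$ with \emph{functions of} $\Hf$; the inner commutator $[\valpha\cdot\V{A}^{(j)},f_\ve^{-n/2}(\Hf)]$ is not such a function (it contains creation and annihilation operators), so the outer commutator is not covered by the lemma and a genuinely new weighted two-fold pull-through estimate, with its own integrability conditions on $\V{G}$, would have to be formulated and proven. That is plausible but is real additional work which your sketch does not carry out -- and it is precisely the part you yourself flag as the one "requiring real care".

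The paper never meets a double commutator because it uses the asymmetric Leibniz rule $[\Pa{i}\,\Pa{j},F_\ve^{-n}]=[\Pa{i},F_\ve^{-n}]\,\Pa{j}+\Pa{i}\,[\Pa{j},F_\ve^{-n}]$. The first summand gives your first type-(I) term. In the second summand one commutes $F_\ve^\sigma$ past $\Pa{i}$, producing a harmless product of two single commutators, $w_{ij}^{1/2}\,[F_\ve^\sigma,\Pa{i}]\,[\Pa{j},F_\ve^{-n}]\,\HT^{-1/2}F_\ve^{n+\tau}$, plus the term $w_{ij}^{1/2}\,\Pa{i}\,F_\ve^{\sigma}\,[\Pa{j},F_\ve^{-n}]\,\HT^{-1/2}F_\ve^{n+\tau}$, in which the weight is separated from the commutator only by $\Pa{i}$. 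There one absorbs the weight via $w_{ij}\klg(\pi e^2/2)\,(|\DA^{(j)}|+\HT)$, i.e. \eqref{maria} in the $\V{x}_j$-variable, and uses $[\,|\DA^{(j)}|^{1/2},\Pa{i}]=0$ together with \eqref{clara1b} and \eqref{clara1bb} -- exactly the ingredients you already deploy for the type-(I) terms. Replacing your symmetric expansion by this asymmetric one closes the proof without any new estimates.
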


\begin{proof}
We write
\begin{align*}
&\,w_{ij}^{1/2}\,F_\ve^\sigma\,[\Pa{i}\,\Pa{j}\,,\,F_\ve^{-n}]\,\HT^{-1/2}
F_\ve^{n+\tau}\,=\,
Y_1+w_{ij}^{1/2}\,Y_2+Y_3\,,
\end{align*}
where
\begin{align*}
Y_1\,&:=\,
\{w_{ij}^{1/2}\,F_\ve^{\sigma}\,[\Pa{i}\,,\,F_\ve^{-n}]\,\HT^{-1/2}\,
F_\ve^{n+\tau}\}
\{\HT^{1/2}\,F_\ve^{-n-\tau}\,\Pa{j}\,\HT^{-1/2}\,F_\ve^{n+\tau}\}\,,
\\
Y_2\,&:=\,
\Pa{i}\,F_\ve^{\sigma}\,[\Pa{j}\,,\,F_\ve^{-n}]\,
\HT^{-1/2}\,F_\ve^{n+\tau}\,,
\\
Y_3\,&:=\,
w_{ij}^{1/2}\,[F_\ve^{\sigma}\,,\,\Pa{i}]\,[\Pa{j}\,,\,F_\ve^{-n}]\,
\HT^{-1/2}\,F_\ve^{n+\tau}\,.
\end{align*}
Applying Corollary~\ref{cor-clara1} we immediately see that
$\|Y_1\|\klg e\,C(n,d_1,d_{5+n})$ and that 
$$
\|Y_3\|\klg
\big\|\,w_{ij}^{1/2}\,[F_\ve^{\sigma}\,,\,\Pa{i}]
\,F_\ve^{-\sigma}\big\|\,
\big\|\,F_\ve^{\sigma}\,[\Pa{j}\,,\,F_\ve^{-n}]\,
F_\ve^{n+\tau}\,\big\|
\klg e\,C(n,d_1,d_{3+n})
$$
uniformly in $\ve>0$.
Employing \eqref{maria} (with respect to the variable $\V{x}_j$
for each fixed $\V{x}_i$)
and using $[|\DA^{(j)}|^{1/2},\Pa{i}]=0$,
we further get
\begin{align*}
\big\|\,w_{ij}^{1/2}\,Y_2\,&\Psi\big\|^2\,\klg\,(\pi e^2/2)\,\|\Pa{i}\|^2\,
\big\|\,|\DA^{(j)}|^{1/2}\,F_\ve^{\sigma}\,[\Pa{j}\,,\,F_\ve^{-n}]
\,F_\ve^{n+\tau}\big\|^2\,\|\HT^{-1/2}\|^2
\\
&
+ (\pi e^2/2)\,\big\|\HT^{1/2}\,\Pa{i}\,\HT^{-1/2}\big\|^2\,
\big\|\HT^{1/2}\,F_\ve^{\sigma}\,[\Pa{i}\,,\,F_\ve^{-n}]\,\HT^{-1/2}\,
F_\ve^{n+\tau}\big\|^2.
\end{align*}
By Corollary~\ref{cor-clara1} all norms on the right hand side
are bounded uniformly in $\ve>0$ by constants depending only
on $n,d_1$, and $d_{4+n}$.
\end{proof}


\appendix

\section{Semi-boundedness of $H_\sr^{\VC}$ and $H_\np^{\VC}$}
\label{app-sb}

\noindent
In this appendix we verify that the semi-relativistic
Pauli-Fierz and no-pair operators with Coulomb potential are semi-bounded 
below for all nuclear charges less than the critical
charges without radiation fields.
We do not attempt to give good lower
bounds on their spectra since this is not the topic addressed
in this paper. 
Our aim here is essentially only to ensure that these operators
possess self-adjoint Friedrichs extensions.
We recall that the stability of matter of the second
kind has been proven for the no-pair operator
in \cite{LiebLoss2002} under certain
restrictions on the fine-structure constant,
the ultra-violet cut-off, and the nuclear
charges. 
The stability of matter of the second kind is a much
stronger property than mere semi-boundedness.
It says that the operator is bounded
below by some constant which is proportional
to the total number of nuclei and electrons
and uniform in the nuclear positions.
The restrictions imposed on the physical 
parameters in \cite{LiebLoss2002} do, however, not allow for
all atomic numbers less than $Z_\np$.

First, we consider the semi-relativistic Pauli-Fierz operator.
The following proposition is a simple generalization of the bound
\eqref{maria} proven in \cite{MatteStockmeyer2009a} 
to the case of $N\in\NN$
electrons and $K\in\NN$ nuclei.

\begin{proposition}\label{prop-sb-srPF}
Assume that $\omega$ and $\V{G}$ fulfill Hypothesis~\ref{hyp-G}
and let $N,K\in\NN$, $e>0$, $\sZ=(Z_1,\ldots,Z_K)\in(0,2/\pi e^2]^K$,
and $\sR=\{\V{R}_1,\ldots,\V{R}_K\}\subset\RR^3$.
Then
\begin{equation}\label{manuela-1}
\sum_{i=1}^N|\DA^{(i)}|\,+\,\VC\,+\,\delta\,\Hf\,\grg\,
-C(\delta,N,\sZ,\sR,d_1)\,>\,-\infty\,,
\end{equation}
for every $\delta>0$ in the sense of quadratic forms on $\sD_N$.
\end{proposition}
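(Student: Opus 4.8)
The plan is to reduce everything to the one-electron, one-nucleus bound \cite[Theorem~2.3]{MatteStockmeyer2009a}, used in the slightly more flexible form in which the coefficient of $\Hf$ is arbitrary: for every $\eta>0$ there is $C(\eta,d_1)$ with $\tfrac{2}{\pi}|\V{x}|^{-1}\klg|\DA|+\eta\,\Hf+C(\eta,d_1)$ in the sense of quadratic forms on $\form(|\DA|)\cap\form(\Hf)$. This is what the proof of \cite[Theorem~2.3]{MatteStockmeyer2009a} actually delivers, and it is needed here because the coupling $e^2Z_k\klg2/\pi$ is critical. First I would discard the nonnegative electron-electron repulsion $\sum_{i<j}w_{ij}\grg0$ and split $\delta\,\Hf$ into $\sum_{i=1}^N(\delta/N)\Hf$ as forms on $\sD_N$, so that it remains to bound $\sum_{i=1}^N\big(|\DA^{(i)}|+(\delta/N)\Hf+v_i\big)$ from below on $\sD_N$. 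For each fixed $i$ the three operators here act only on the $i$-th electron variable and the Fock variables, and invariance under permutations of the electrons together with integrating out the remaining variables reduces the problem to proving, for every $\eta>0$, the one-electron estimate
\[
|\DA|+\eta\,\Hf-\sum_{k=1}^K\frac{e^2Z_k}{|\V{x}-\V{R}_k|}\;\grg\;-\,C(\eta,\sZ,\sR,d_1)
\]
on $\sD$; taking $\eta=\delta/N$ then yields the Proposition.

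For the one-electron estimate I would perform a Voronoi/IMS localization around the nuclei. Put $W:=\sum_k e^2Z_k/|\V{x}-\V{R}_k|\grg0$ and $r_0:=\min_{k\ne l}|\V{R}_k-\V{R}_l|$ (the case $K=1$ being \cite[Theorem~2.3]{MatteStockmeyer2009a} itself, up to a translation), and choose $\chi_0,\dots,\chi_K\in C^\infty(\RR^3)$ with $0\klg\chi_k\klg1$, $\sum_{k=0}^K\chi_k^2\equiv1$, $\supp\chi_k\subset B(\V{R}_k,r_0/3)$ for $k\grg1$, $\supp\chi_0\cap\bigcup_{k\grg1}B(\V{R}_k,r_0/4)=\emptyset$, and $\|\nabla\chi_k\|_\infty\klg c/r_0$. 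On $\supp\chi_k$ with $k\grg1$ one has $|\V{x}-\V{R}_l|\grg2r_0/3$ for $l\ne k$, so $W$ equals $e^2Z_k/|\V{x}-\V{R}_k|$ there plus a term bounded by $3e^2\|\sZ\|_1/(2r_0)$, and $e^2\|\sZ\|_1\klg2K/\pi$ by the hypothesis $Z_k\klg2/(\pi e^2)$; on $\supp\chi_0$ the whole of $W$ is bounded by $4e^2\|\sZ\|_1/r_0$. Using $e^2Z_k\klg2/\pi$ and applying \cite[Theorem~2.3]{MatteStockmeyer2009a} after the coordinate shift $\V{x}\mapsto\V{x}-\V{R}_k$ (which replaces the form factor $\V{G}$ by $\V{G}_{\,\cdot\,+\V{R}_k}$, leaving all the constants $d_\ell$ of Hypothesis~\ref{hyp-G} unchanged since $\|\V{G}(k)\|_\infty$ and $\|\nabla_{\V{x}}\wedge\V{G}(k)\|_\infty$ are translation invariant), one gets $\chi_k\,\frac{e^2Z_k}{|\V{x}-\V{R}_k|}\,\chi_k\klg\chi_k(|\DA|+\eta\,\Hf)\chi_k+C(\eta,d_1)\chi_k^2$; summing the cell-wise bounds and using $\sum_k\chi_k W\chi_k=W$ and $\sum_k\chi_k^2\equiv1$ gives
\[
W\;\klg\;\sum_{k=0}^K\chi_k\big(|\DA|+\eta\,\Hf\big)\chi_k\;+\;C(\eta,\sZ,\sR,d_1).
\]

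It then remains to recombine the localized kinetic energies. Since each $\chi_k$ commutes with $\Hf$ and $\sum_k\chi_k^2\equiv1$ one has $\sum_k\chi_k\Hf\chi_k=\Hf$, while the standard localization bound $\sum_k\chi_k|\DA|\chi_k\klg|\DA|+C(r_0)$ — which follows either from the boundedness of $[\chi,[\chi,|\DA|]]$ for $\chi\in C^\infty_{\mathrm{c}}(\RR^3)$, or from the analogous elementary bound for $\DA^2$ via \eqref{DA2} combined with Jensen's operator inequality for the operator-concave square root — handles the Dirac part. Inserting these into the last display yields $W\klg|\DA|+\eta\,\Hf+C(\eta,\sZ,\sR,d_1)$, which is exactly the one-electron estimate, and the reduction of the first paragraph then completes the proof. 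I expect the only genuine point of care to be this recombination of localized kinetic energies for the non-local operator $|\DA|$, together with the need — forced by criticality of $e^2Z_k\klg2/\pi$ — to invoke \cite[Theorem~2.3]{MatteStockmeyer2009a} with the coefficient of $\Hf$ taken arbitrarily small rather than in the normalized form \eqref{maria}; everything else is bookkeeping of the geometric constants $r_0$, $\|\sZ\|_1$ and of the translation covariance of $\DA$.
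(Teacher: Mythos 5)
Your proposal is correct and follows essentially the same route as the paper's proof: an IMS partition of unity subordinate to neighbourhoods of the nuclei, boundedness of the double commutators $[\chi_k,[\chi_k,|\DA|]]$ obtained from the commutator estimates for $\SA$ in \cite{MatteStockmeyer2009a}, the one-center bound \eqref{maria} invoked with the coefficient of $\Hf$ scaled down to $\delta/N$, translation invariance of Hypothesis~\ref{hyp-G}, and boundedness of the off-diagonal Coulomb tails on each cell. The only minor imprecision is that you quote the double-commutator bound for $\chi\in C_0^\infty$ while it must also be applied to the non-compactly supported $\chi_0$; this is harmless since the estimate depends only on $\|\nabla\chi_0\|_\infty$.
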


\begin{proof}
In view of \eqref{maria} we only have to explain how to
localize the non-local kinetic energy terms. 
To begin with we recall the following bounds proven in
\cite[Lemmata~3.5 and~3.6]{MatteStockmeyer2009a}:
For every $\chi\in C^\infty(\RR^3_\V{x},[0,1])$, 
\begin{align}\label{tim1}
\|\,[\chi,\SA]\,\|\,\klg\,\|\nabla\chi\|_\infty\,,
\qquad
\big\|\DA\,\big[\chi\,,\,[\chi,\SA]\,\big]\big\|\,
\klg\,2\,\|\nabla\chi\|_\infty^2\,.
\end{align}
Now, let $\ball{r}{\V{z}}$ denote the open ball of radius $r>0$
centered at $\V{z}\in\RR^3$ in $\RR^3$.
We set $\vr:=\min\{|\V{R}_k-\V{R}_\ell|:\,k\not=\ell\}/2$ and 
pick a smooth partition of unity on $\RR^3$, $\{\chi_k\}_{k=0}^K$,
such that $\chi_k\equiv1$ on $\ball{\vr/2}{\V{R}_k}$ and
$\supp(\chi_k)\subset \ball{\vr}{\V{R}_k}$, for $k=1,\ldots,K$,
and such that $\sum_{k=0}^K\chi_k^2=1$.
Then we have the following IMS type localization formula,
\begin{equation}\label{IMS-DA1}
|\DA|\,=\,
\sum_{k=0}^K\Big\{\,\chi_k\,|\DA|\,\chi_k\,+\,\frac{1}{2}\,
\big[\chi_k\,,\,[\chi_k,|\DA|\,]\,\big]\,\Big\}
\end{equation}
on $\sD$, for every $i\in\{1,\ldots,N\}$.
A direct calculation shows that
\begin{align}\label{IMS-DA2}
\big[\chi_k\,,\,[\chi_k,|\DA|\,]\,\big]\,&=\,
2\,i\valpha\cdot(\nabla\chi_k)\,[\chi_k,\SA]+
\DA\,\big[\chi_k\,,\,[\chi_k,\SA]\,\big]
\end{align}
on $\sD$. By virtue of \eqref{C*} and \eqref{tim1} we thus get
\begin{equation}\label{IMS-Da3}
\big\|\,\big[\chi_k\,,\,[\chi_k,|\DA|\,]\,\big]\,\big\|\,
\klg\,4\,\|\nabla\chi\|_\infty^2\,,
\end{equation}
for all $k\in\{0,\ldots,K\}$.
Since we are able to localize the kinetic energy terms
and since, by the choice of the partition of unity,
the functions $\RR^3\ni\V{x}\mapsto
|\V{x}-\V{R}_k|^{-1}\,\chi_\ell^2(\V{x})$
are bounded, for $k\in\{1,\ldots,K\}$,
$\ell\in\{0,\ldots,K\}$, $k\not=\ell$,
the bound \eqref{manuela-1} is now an immediate 
consequence of \eqref{maria} (with $\delta$ replaced by $\delta/N$).
(Here we also make use of the fact that the hypotheses on $\V{G}$
are translation invariant.) 
\end{proof}

\smallskip

\noindent
Next, we turn to the no-pair operator discussed
in Section~\ref{sec-np}.
The semi-boundedness of the molecular $N$-electron no-pair operator
is essentially a consequence of the following
inequality 
\cite[Equation~(2.14)]{MatteStockmeyer2009a},
valid for all
$\omega$ and $\V{G}$ fulfilling Hypothesis~\ref{hyp-G},
$\gamma\in(0,2/(2/\pi+\pi/2))$, and $\delta>0$,
\begin{equation}\label{lb-np}
\PA\,(\DA^{(i)}-\gamma/|\V{x}|+\delta\,\Hf)\,\PA
\,\grg\,
\PA\,(c(\gamma)\,|\DO|-C)
\,\PA\,,
\end{equation}
in the sense
of quadratic forms on $\PA\,\sD$.
Here $C\equiv C(\delta,\gamma,d_{-1},d_0,d_1)\in(0,\infty)$
and $c(\gamma)\in(0,\infty)$ depends only on $\gamma$.

\begin{proposition}\label{prop-sb-np}
Assume that $\omega$ and $\V{G}$ fulfill Hypothesis~\ref{hyp-G}
and let $N,K\in\NN$, $e>0$,
$\sZ=(Z_1,\ldots,Z_K)\in(0,Z_\np)^K$, 
and $\sR=\{\V{R}_1,\ldots,\V{R}_K\}\subset\RR^3$,
where $Z_\np$ is defined in \eqref{def-Znp}. 
Then the quadratic form associated with
the operator $\wt{H}_\np$ defined in \eqref{def-Hnp'}
is semi-bounded below,
$$
\wt{H}_\np\,\grg\,-C(N,K,\sZ,\sR,d_{-1},d_1,d_5)\, >\,-\infty\,,
$$
in the sense of quadratic forms on $\sD_N$.
\end{proposition}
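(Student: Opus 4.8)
The plan is to use \eqref{lb-np} as the basic building block, exactly as Proposition~\ref{prop-sb-srPF} rests on \eqref{maria}. First I would reduce $\wt{H}_\np$ to a sum of one-electron quantities up to a bounded error. Since $w_{ij}\grg0$, the electron-electron terms satisfy $\Pa{i,j}\,w_{ij}\,\Pa{i,j}\grg0$ and may be discarded; the radiation terms obey $\PAN\,\Hf\,\PAN+\PANb\,\Hf\,\PANb\grg\tfrac12\,\Hf$ by \eqref{gisela2}; and, since $\DA^{(i)}$ commutes with $\Pa{i}$ and $|\DA^{(i)}|\grg1$, one has $|\DA^{(i)}|=\Pa{i}\,|\DA^{(i)}|\,\Pa{i}+\Pam{i}\,|\DA^{(i)}|\,\Pam{i}\grg\Pa{i}\,|\DA^{(i)}|\,\Pa{i}$. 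Together these give $\wt{H}_\np\grg\sum_{i=1}^N\Pa{i}(|\DA^{(i)}|+v_i)\Pa{i}+\tfrac12\,\Hf$ on $\sD_N$.

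Next I would move a small amount of field energy inside the projections. By \eqref{clara1bb}, applied with $\kappa=0$, $\nu=\tfrac12$, $\ve=0$, i.e.\ $\|\HT^{1/2}\,\SA^{(i)}\,\HT^{-1/2}\|\klg C$, together with $\Hf\klg\HT$, one obtains $\Pa{i}\,\Hf\,\Pa{i}\klg C\,(\Hf+E)$ with a universal constant. Writing $\Pa{i}(|\DA^{(i)}|+v_i)\Pa{i}=\Pa{i}(|\DA^{(i)}|+v_i+\delta\,\Hf)\Pa{i}-\delta\,\Pa{i}\,\Hf\,\Pa{i}$ and choosing $\delta>0$ so small that $N\delta C\klg\tfrac12$, I would arrive at
\[
\wt{H}_\np\,\grg\,\sum_{i=1}^N\Pa{i}\big(|\DA^{(i)}|+v_i+\delta\,\Hf\big)\,\Pa{i}\,-\,C'
\]
on $\sD_N$, for some finite constant $C'$.

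It then remains to bound each $\Pa{i}(|\DA^{(i)}|+v_i+\delta\,\Hf)\Pa{i}$ from below, which is precisely the $K$-nucleus analogue of \eqref{lb-np}. For this I would localize around the nuclei exactly as in the proof of Proposition~\ref{prop-sb-srPF}: take the partition of unity $\{\chi_k\}_{k=0}^K$ on $\RR^3_{\V{x}_i}$ with $\sum_k\chi_k^2=1$, $\chi_k\equiv1$ near $\V{R}_k$ and $\supp\chi_k$ contained in a small ball about $\V{R}_k$ for $k\grg1$, so that on $\supp\chi_k$ the potential $v_i$ agrees with $-e^2Z_k/|\V{x}_i-\V{R}_k|$ up to a bounded function. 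Inserting $\sum_k\chi_k^2=1$ and using the IMS formula \eqref{IMS-DA1} for $|\DA^{(i)}|$, whose double-commutator error is controlled by \eqref{IMS-Da3}, together with $[\chi_k,\Pa{i}]=\tfrac12[\chi_k,\SA^{(i)}]$ and the commutator bounds \eqref{tim1}, reduces the matter, modulo bounded error terms, to the one-nucleus expressions $\chi_k\,\Pa{i}(|\DA^{(i)}|-e^2Z_k/|\V{x}_i-\V{R}_k|+\delta\,\Hf)\,\Pa{i}\,\chi_k$ for $k\grg1$, plus the patch $\supp\chi_0$, on which $v_i$ is bounded and $|\DA^{(i)}|+\delta\,\Hf\grg0$. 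Since $e^2Z_k<e^2Z_\np=2/(2/\pi+\pi/2)$, I would then invoke \eqref{lb-np}, embedded in $\HR_N$ with $\PA$ replaced by $\Pa{i}$ (the remaining $N-1$ electron variables being inert) and translated so that the Coulomb pole sits at $\V{R}_k$, which is legitimate because Hypothesis~\ref{hyp-G} is translation invariant, and discard the nonnegative term $\Pa{i}\,c(e^2Z_k)\,|\DO^{(i)}|\,\Pa{i}$ to conclude $\Pa{i}(|\DA^{(i)}|-e^2Z_k/|\V{x}_i-\V{R}_k|+\delta\,\Hf)\Pa{i}\grg-C$. Collecting the patches and summing over $i$ then yields the stated semi-boundedness.

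The main obstacle will be the localization bookkeeping in the last step: because $|\DA^{(i)}|$ is nonlocal, neither the IMS errors nor the commutators of the cut-off functions $\chi_k$ with the projections $\Pa{i}$ vanish, so one must verify that every cross term produced by inserting $\sum_k\chi_k^2=1$ is bounded, which is exactly what \eqref{tim1} and \eqref{IMS-Da3} provide, in the same manner as in Proposition~\ref{prop-sb-srPF}. The only genuinely new feature compared with that proposition is that the Coulomb terms here appear conjugated by the spectral projections $\Pa{i}$, which is why the Brown--Ravenhall--type estimate \eqref{lb-np} is needed in place of \eqref{maria}, and why the admissible nuclear charges are capped at $Z_\np$ rather than $2/\pi e^2$.
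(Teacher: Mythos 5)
Your overall plan — reduce to one-electron terms, localize around each nucleus, and invoke \eqref{lb-np} — is the right one and matches the paper's strategy in spirit. However, there is a genuine gap in the localization step that your proposal glosses over with the phrase ``modulo bounded error terms,'' and it is precisely the part that distinguishes the no-pair case from Proposition~\ref{prop-sb-srPF}.

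The issue is the nonlocality of $\Pa{i}$ in the sandwiched potential. After inserting $\sum_k\chi_k^2=1$, the localized term $\chi_k^{(i)}\,\Pa{i}\,v_i\,\Pa{i}\,\chi_k^{(i)}$ still contains the \emph{full} $v_i$, including the Coulomb singularities at the other nuclei $\V{R}_\ell$, $\ell\neq k$. Your claim that ``on $\supp\chi_k$ the potential $v_i$ agrees with $-e^2Z_k/|\V{x}_i-\V{R}_k|$ up to a bounded function'' is true for the multiplication operator $v_i$ alone, but $\Pa{i}$ couples the support of $\chi_k$ to the singularity at $\V{R}_\ell$. Concretely, after splitting $v_i$ with cut-offs $\zeta_\ell$ near each nucleus and using $\zeta_\ell\,\chi_k=0$ for $\ell\neq k$, the cross-term reduces to something like $\chi_k^{(i)}\,\Pa{i}\,\tfrac{e^2 Z_\ell\,\zeta_\ell^{(i)}}{|\V{x}_i-\V{R}_\ell|}\,\Pa{i}\,\chi_k^{(i)}$, and bounding this is \emph{not} a consequence of \eqref{tim1}: commuting $\chi_k$ through $\Pa{i}$ produces a bounded commutator multiplied by the \emph{unbounded} potential $\zeta_\ell/|\V{x}_i-\V{R}_\ell|$, and the product is not obviously bounded. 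This is exactly what Lemma~\ref{le-vica} (estimate \eqref{vica2}) is there to supply: for disjointly supported $\zeta$ and $\chi$ one has $\|\,\zeta|\V{x}|^{-1}\,\PA\,\chi\,\HT^{-1/2}\|<\infty$, and its proof requires the resolvent identity \eqref{eva2001} from Lemma~\ref{le-clara1b}, not merely \eqref{tim1}. In Proposition~\ref{prop-sb-srPF} the analogous cross-terms are bounded functions because no projections sandwich $\VC$; in the no-pair case that feature fails, so your argument cannot simply mimic the earlier proposition here.

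A second, smaller point: the paper applies the IMS formula directly to $\Pa{i}\,v_i\,\Pa{i}$ (rather than to $|\DA^{(i)}|+v_i$ inside the projections), which produces a double commutator $[\chi_k^{(i)},[\chi_k^{(i)},\Pa{i}\,v_i\,\Pa{i}]]$ that again contains the unbounded $v_i$ and must be controlled via $v_i\klg0$ together with the bound $\|\,|\V{x}|^{-1}[\chi,[\chi,\PA]]\,\HT^{-1/2}\|\klg\const$ from \cite{MatteStockmeyer2009a}. Your version, which localizes $|\DA^{(i)}|$ inside the sandwich, also produces commutators of the $\chi_k$ with $\Pa{i}$ that then multiply the unbounded potential, so the same problem resurfaces there. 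And when you write $\Hf$ in localized form you will also pick up commutators $[\Paf{i},\Hf]$, which the paper handles via Corollary~\ref{cor-clara1}. None of these are fatal flaws in your outline, but they are real steps that must be carried out, and Lemma~\ref{le-vica} in particular is an essential ingredient you omitted.
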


\begin{proof}
We again employ the parameter $\vr>0$ and the
partition of unity introduced in the
paragraph succeeding \eqref{tim1}.
Thanks to \cite[Lemma~3.4(ii)]{MatteStockmeyer2009a}
we know that $\PA$ maps $\dom(\DO\otimes\Hf^\nu)$
into itself, for every $\nu>0$.
The IMS localization formula thus yields
\begin{align*}
\Pa{i}\,v_i\,\Pa{i}&=
\sum_{k=0}^K
\Big\{\chi_k^{(i)}\,\Pa{i}\,v_i\,\Pa{i}\,\chi_k^{(i)}\,+\,\frac{1}{2}\,
\big[\chi_k^{(i)}\,,\,[\chi_k^{(i)}\,,\,\Pa{i}\,v_i\,\Pa{i}]\,\big]\Big\}
\end{align*}
on $\dom(\DO\otimes\id)$, where a superscript $(i)$ indicates
that $\chi_k=\chi_k^{(i)}$ depends on the variable $\V{x}_i$. 
Using $v_i\klg0$, we observe that
\begin{align}\nonumber
\big[\chi_k^{(i)}&\,,\,[\chi_k^{(i)}\,,\,\Pa{i}\,v_i\,\Pa{i}]\,\big]
\\
&=\,\nonumber
-2\,[\chi_k^{(i)}\,,\,\Pa{i}]\,v_i\,[\Pa{i}\,,\,\chi_k^{(i)}]
+2\,\Re\big\{\,\Pa{i}\,v_i\,\big[\chi_k^{(i)}\,,
\,[\chi_k^{(i)}\,,\,\Pa{i}]\,\big]\,\big\}
\\
&\grg\,\label{paola1}
2\,\Re\big\{\,\Pa{i}\,v_i\,\big[\chi_k^{(i)}\,,\,[\chi_k^{(i)}
\,,\,\Pa{i}]\,\big]\,\big\}
\,.
\end{align}
We recall the following estimate proven in
\cite[Lemma~3.6]{MatteStockmeyer2009a},
for every $\chi\in C^\infty(\RR^3_\V{x},[0,1])$,
$$
\big\|\tfrac{1}{|\V{x}|}\,\big[\chi\,,\,[\chi\,,\,\PA]\,\big]\,
\HT^{-1/2}\,\big\|\,\klg\,8^{3/2}\,\|\nabla\chi\|_\infty^2\,,
$$
where $\HT=\Hf+E$ with $E\grg1\vee(4d_1)^2$.
Together with \eqref{paola1} it implies 
\begin{align*}
\SPb{\Psi}{\big[\chi_k^{(i)}\,,\,[\chi_k^{(i)}\,,\,\Pa{i}\,
v_i\,\Pa{i}]\,\big]\,\Psi}
\grg-\delta\,\SPn{\Psi}{\HT\,\Psi}
-(8^3\,\|\nabla\chi_k\|_\infty^4/\delta)\,\|\Psi\|^2,
\end{align*}
for all $k\in\{0,\ldots, K\}$, $i\in\{1,\ldots,N\}$, $\delta>0$,
and $\Psi\in\dom(\DO\otimes\id)$.
Next, we pick cut-off functions, 
$\zeta_1,\ldots,\zeta_K\in C_0^\infty(\RR^3_\V{x},[0,1])$,
such that $\zeta_k=1$ in a neighborhood of $\V{R}_k$ and
$\supp(\zeta_k)\subset\ball{\vr/4}{\V{R}_k}$, for
$k\in\{1,\ldots,K\}$.
By construction, $\supp(\zeta_k)\cap\supp(\chi_\ell)=\varnothing$,
for all $k\in\{1,\ldots,K\}$ and $\ell\in\{0,\ldots,K\}$
with $k\not=\ell$.
Denoting $\ol{\zeta}_k:=1-\zeta_k$
and using the superscript $(i)$ to indicate that
$\zeta_k=\zeta_k^{(i)}$ is a function of the variable $\V{x}_i$,
we obtain
\begin{align}
\SPb{\Psi}{\chi_k^{(i)}\,\Pa{i}\,v_i\,\Pa{i}\,\chi_k^{(i)}\,\Psi}
\,&=\,\nonumber
-\SPB{\Psi}{\chi_k^{(i)}\,\Pa{i}\,
\frac{e^2\,Z_k}{|\V{x}_i-\V{R}_k|}\,\Pa{i}\,\chi_k^{(i)}
\,\Psi}
\\ \label{vica3}
&\quad-\sum_{{\ell=1\atop\ell\not=k}}^K
\SPB{\Psi}{\chi_k^{(i)}\,\Pa{i}\,
\frac{e^2\,Z_\ell\,\zeta_\ell^{(i)}}{|\V{x}_i-\V{R}_\ell|}\,\Pa{i}\,
\chi_k^{(i)}\,\Psi}
\\
&\quad-\sum_{{\ell=1\atop\ell\not=k}}^K\label{vica4}
\SPB{\Psi}{\chi_k^{(i)}\,\Pa{i}\,
\frac{e^2\,Z_\ell\,\ol{\zeta}_\ell^{(i)}}{|\V{x}_i-\V{R}_\ell|}
\,\Pa{i}\,\chi_k^{(i)}\,\Psi},
\end{align}
for all $\Psi\in\dom(\DO\otimes\id)$.
The operators appearing in the scalar products in \eqref{vica4}
are bounded by definition of $\ol{\zeta}_\ell$.
Their norms depend only on $\sR$ since $e^2\,Z_\ell<1$.
Furthermore, by virtue of Lemma~\ref{le-vica} below
the term in \eqref{vica3} is bounded from below 
by $-\delta\,\SPn{\Psi}{\Hf\,\Psi}-C_\delta\,\|\Psi\|^2$,
for all $\delta>0$ and some $C_\delta\equiv
C_\delta(\sR,d_1,d_4)\in(0,\infty)$;
see \eqref{vica2}.

Taking all the previous remarks into account,
using \eqref{IMS-DA1}--\eqref{IMS-Da3},
$w_{ij}\grg0$, $|\DA^{(i)}|\grg\Pa{i}\,\DA^{(i)}\,\Pa{i}$, and writing 
$$
\Hf\,=\,\frac{1}{N}\sum_{i=1}^N\sum_{k=0}^K\chi_k^{(i)}\,
(P_\V{A}^{+,(i)}+P_\V{A}^{-,(i)})\,\Hf\,\chi_k^{(i)}\,,
$$ 
we deduce that
\begin{align*}
\wt{H}_\np
\,&\grg
(1-3\delta)\,\PAN\,\Hf\,\PAN\,+\,(1-3\delta)\,\PANb\,\Hf\,\PANb
\\
&+\sum_{\sharp\in\{+,\bot\}}\sum_{k=0}^K\PANs\,\Big\{
\sum_{i=1}^N\chi_k^{(i)}\,
P_\V{A}^{+,(i)}\,\Big(\DA^{(i)}
-\frac{e^2\,Z_k}{|\V{x}_i-\V{R}_k|}
+\frac{\delta}{N}\,\Hf\Big)\,P_\V{A}^{+,(i)}\,\chi_k^{(i)}
\\
&+\,\frac{\delta}{N}\sum_{i=1}^N\Big(\,\chi_k^{(i)}\,
P_\V{A}^{-,(i)}\,\Hf\,P_\V{A}^{-,(i)}\,\chi_k^{(i)}
+\sum_{\flat=\pm}\chi_k^{(i)}\Paf{i}\,
[\Paf{i},\Hf]\,\chi_k^{(i)}\,\Big)\,\Big\}\,\PANs
\\
&-\,\const(N,\sR,d_1,d_4)
\end{align*}
on $\sD_N$,
for every $\delta>0$.
Thanks to Corollary~\ref{cor-clara1} (with $\ve=0$) we know that 
$[\Paf{i},\Hf]\,\HT^{-1/2}$ extends to an element of $\LO(\HR_N)$
whose norm is bounded by some constant depending only
on $d_1$ and $d_5$,
whence
\begin{align*}
\frac{\delta}{N}\sum_{i=1}^N\sum_{k=0}^K&\SPb{\chi_k^{(i)}\,\PANs\,\Psi}{
\Paf{i}\,[\Paf{i},\Hf]\,\chi_k^{(i)}\,\PANs\,\Psi}
\\
&\grg\,
-(\delta/2)\,\|\HT^{1/2}\,\PANs\,\Psi\|^2
-(\delta/2)\,\big\|[\Paf{i},\Hf]\HT^{-1/2}\big\|^2
\,\|\Psi\|^2,
\end{align*}
for every $\Psi\in\sD_N$, $\sharp\in\{+,\bot\}$, 
and $\flat=\pm$. 
For a sufficiently small choice of $\delta>0$,
the assertion now follows from the semi-boundedness
of $\Pa{i}\,(\DA^{(i)}-e^2\,Z_k/|\V{x}_i-\V{R}_k|+(\delta/N)\,\Hf)\,\Pa{i}$ 
ensured by
\eqref{lb-np} and the condition $Z_k<Z_\np$.
\end{proof}

\begin{lemma}\label{le-vica}
Let $\zeta\in C_0^\infty(\RR^3,[0,1])$, $\chi\in C^\infty(\RR^3,[0,1])$,
such that $0\in\supp(\zeta)$ and
$\supp(\zeta)\cap\supp(\chi)=\varnothing$.
Set $\HT:=\Hf+E$, where $E\grg k_1\vee d_1^2$.
Then 
\begin{align}\label{vica1}
\big\|\,\DA\,\Hf^{1/2}\,\zeta\,\PA\,\chi\,\HT^{-1/2}\,\big\|\,&\klg\,
C(\zeta,\chi,d_1,d_4)\,,
\\ \label{vica2}
\big\|\,\tfrac{\zeta}{|\V{x}|}\,\PA\,\chi\,\HT^{-1/2}\,\big\|\,&\klg\,
C'(\zeta,\chi,d_1,d_4)\,.
\end{align}
\end{lemma}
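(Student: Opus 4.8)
My plan rests on a single observation. Since $\supp(\zeta)$ is compact and disjoint from the closed set $\supp(\chi)$, I can pick an auxiliary cut-off $\wh{\zeta}\in C^\infty(\RR^3,[0,1])$ equal to $1$ on a neighborhood of $\supp(\zeta)$ and vanishing on $\supp(\chi)$. Then $\zeta\,\wh{\zeta}^{\,j}=\zeta$ and $\wh{\zeta}^{\,j}\chi=0$ for $j=1,2$, so expanding and cancelling in the double commutator gives
\begin{equation*}
\zeta\,\big[\wh{\zeta}\,,\,[\wh{\zeta}\,,\,\PA]\,\big]\,\chi\,=\,\zeta\,\PA\,\chi\,,
\end{equation*}
and, since $\PA=\tfrac12(\id+\SA)$ and $[\wh{\zeta},[\wh{\zeta},\id]]=0$, equivalently $\zeta\,\PA\,\chi=\tfrac12\,\zeta\,[\wh{\zeta}\,,\,[\wh{\zeta}\,,\,\SA]]\,\chi$. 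The point is that $\zeta\,\PA\,\chi$ can thus be traded for a double commutator of $\PA$, or $\SA$, against $\wh{\zeta}$, and such double commutators are regularizing; I shall also use freely that $\zeta$ and $\chi$, being functions of $\V{x}$, commute with $\Hf$ and with $\HT$.

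For \eqref{vica2} this is essentially the whole proof. Writing $\tfrac{\zeta}{|\V{x}|}=\zeta\cdot\tfrac{1}{|\V{x}|}$ and moving $\chi$ past $\HT^{-1/2}$, I get
\begin{equation*}
\frac{\zeta}{|\V{x}|}\,\PA\,\chi\,\HT^{-1/2}\,=\,\zeta\,\Big(\frac{1}{|\V{x}|}\,\big[\wh{\zeta}\,,\,[\wh{\zeta}\,,\,\PA]\,\big]\,\HT^{-1/2}\Big)\,\chi\,,
\end{equation*}
and then I would invoke, with $\chi$ replaced by $\wh{\zeta}$, the bound $\|\tfrac{1}{|\V{x}|}[\chi,[\chi,\PA]]\HT^{-1/2}\|\klg 8^{3/2}\|\nabla\chi\|_\infty^2$ of \cite[Lemma~3.6]{MatteStockmeyer2009a}, which was already recalled in the proof of Proposition~\ref{prop-sb-np} and applies since $E$ here lies above the required (only $d_1$-dependent) threshold. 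As $\|\zeta\|_\infty,\|\chi\|_\infty\klg1$ and $\|\nabla\wh{\zeta}\|_\infty$ depends only on $\zeta,\chi$ (through the distance of their supports), \eqref{vica2} follows.

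For \eqref{vica1} I would set $D:=[\wh{\zeta}\,,\,[\wh{\zeta}\,,\,\SA]]$, so $\DA\,\Hf^{1/2}\,\zeta\,\PA\,\chi\,\HT^{-1/2}=\tfrac12\,\DA\,\zeta\,\Hf^{1/2}\,D\,\HT^{-1/2}\,\chi$; commuting $\zeta$ through via $\DA\,\zeta=\zeta\,\DA-i\,\valpha\cdot(\nabla\zeta)$, and noting that $\Hf^{1/2}\,D\,\HT^{-1/2}=[\wh{\zeta}\,,\,[\wh{\zeta}\,,\,\Hf^{1/2}\SA\HT^{-1/2}]]$ is bounded (a double commutator of $\Hf^{1/2}\SA\HT^{-1/2}$, which is bounded by Corollary~\ref{cor-clara1}), the problem reduces to bounding $\DA\,\Hf^{1/2}\,D\,\HT^{-1/2}$. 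For that I would use the strongly convergent principal-value representation \eqref{for-sgn} and $[\wh{\zeta},\RA{iy}]=-i\,\RA{iy}\,(\valpha\cdot\nabla\wh{\zeta})\,\RA{iy}$ to write
\begin{equation*}
D\,=\,-\frac{2}{\pi}\int_\RR\RA{iy}\,(\valpha\cdot\nabla\wh{\zeta})\,\RA{iy}\,(\valpha\cdot\nabla\wh{\zeta})\,\RA{iy}\,dy
\end{equation*}
(norm-convergent), the decisive feature being the three Dirac resolvents, each worth a factor $(1+y^2)^{-1/2}$. Then, splitting $\DA\,\Hf^{1/2}\RA{iy}=\Hf^{1/2}+iy\,\Hf^{1/2}\RA{iy}+[\valpha\cdot\V{A},\Hf^{1/2}]\RA{iy}$ by means of $\DA\,\RA{iy}=\id+iy\,\RA{iy}$ and, in each resulting term, pushing the weights $\Hf^{1/2}$ and $\HT^{\pm1/2}$ through the resolvents with Lemma~\ref{le-clara1b} (which converts $\HT^{1/2}\RA{iy}$ into $\RA{iy}\,\Upsilon_{0,1/2}(iy)\,\HT^{1/2}$ with $\Upsilon_{0,1/2}(iy)$ uniformly bounded), using that $\Hf^{1/2}\HT^{-1/2}$ is bounded and that $[\valpha\cdot\V{A},\Hf^{1/2}]\HT^{-1/2}$ extends to a bounded operator (via the pull-through formula, exactly as in the proof of Lemma~\ref{le-clara1}), and using $\|\RA{iy}\|\klg(1+y^2)^{-1/2}$ and $\|\valpha\cdot\nabla\wh{\zeta}\|\klg\|\nabla\wh{\zeta}\|_\infty$, I expect each integrand to become a bounded operator of norm $O((1+y^2)^{-1})$; integrating in $y$ and restoring the bounded multiplication operators $\zeta$, $\chi$ and $\valpha\cdot(\nabla\zeta)$ then yields \eqref{vica1}.

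The routine-but-delicate part, and the main obstacle, is this last step of \eqref{vica1}: organizing the commutations so that moving $\DA$, $\Hf^{1/2}$ and $\HT^{-1/2}$ through the three Dirac resolvents leaves, in each case, a bounded operator decaying like $(1+y^2)^{-1}$, and in particular treating the $[\valpha\cdot\V{A},\Hf^{1/2}]$-term directly through the pull-through formula, since Lemma~\ref{le-clara1} is phrased for the regularized powers $f_\ve^\kappa(\Hf)$ rather than for $\Hf^{1/2}$ itself. By contrast \eqref{vica2} drops out at once from the double-commutator reduction together with an estimate already available from \cite{MatteStockmeyer2009a}.
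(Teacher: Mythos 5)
Your proposal is correct, and it takes a partly different route from the paper's. The identity $\zeta\,\PA\,\chi=\zeta\,[\wh{\zeta},[\wh{\zeta},\PA]]\,\chi$ is valid, and for \eqref{vica2} it reduces the claim in one line to the bound $\|\,|\V{x}|^{-1}\,[\wh{\zeta},[\wh{\zeta},\PA]]\,\HT^{-1/2}\|\klg 8^{3/2}\|\nabla\wh{\zeta}\|_\infty^2$ of \cite[Lemma~3.6]{MatteStockmeyer2009a}; the paper instead deduces \eqref{vica2} from \eqref{vica1} together with the Hardy-type inequality $\|\,|\V{x}|^{-1}\vp\|^2\klg4\|\DA\vp\|^2+4\|\HT^{1/2}\vp\|^2$, so your argument is shorter and decouples the two estimates. (Only caveat: that lemma is quoted under $E\grg1\vee(4d_1)^2$, while here only $E\grg k_1\vee d_1^2$ is assumed; this costs at most the factor $\|(\Hf+16d_1^2)^{1/2}\,\HT^{-1/2}\|\klg4$.) For \eqref{vica1} the two arguments are close relatives: both produce three Dirac resolvents from the disjoint supports — your double commutator against a single auxiliary cutoff $\wh{\zeta}$ versus the paper's single commutator $[\PA,\chi]$ combined with a second cutoff $\wt{\chi}$ equal to $1$ on $\supp(\nabla\chi)$ — and both then invoke the $\Upsilon_{0,1/2}$ machinery of Lemma~\ref{le-clara1b} and $\|\RA{iy}\|\klg(1+y^2)^{-1/2}$. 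The genuine difference is the treatment of $\DA$: you push it through $\Hf^{1/2}$ at the operator level via $\DA\,\RA{iy}=\id+iy\,\RA{iy}$, which obliges you to bound $[\valpha\cdot\V{A},\Hf^{1/2}]\,\HT^{-1/2}$ by pull-through. That does work — $(t+\omega)^{1/2}-t^{1/2}\klg\omega(k)^{1/2}$ yields a bound of order $d_0+d_1$ — but it imports $d_0$ (equivalently $d_{-1}$) into the constant, which the statement does not list; the paper sidesteps this commutator entirely by estimating the sesquilinear form tested against $\DA\vp$ and absorbing $\DA$ into the adjacent resolvent through $\|\DA\,\RA{iy}\|\klg1$. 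With your splitting, the three resulting integrands decay like $(1+y^2)^{-1}$, $|y|(1+y^2)^{-3/2}$, and $(1+y^2)^{-3/2}$ respectively, so the $y$-integral converges as you claim and the argument closes.
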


\begin{proof}
We pick some $\wt{\chi}\in C^\infty(\RR^3,[0,1])$
such that $\supp(\wt{\chi})\cap\supp(\zeta)=\varnothing$
and $\wt{\chi}\equiv1$ on $\supp(\nabla\chi)$.
Using $\zeta\,\chi=0=\zeta\,\wt{\chi}$
we infer that, for all $\vp,\psi\in\sD$,
\begin{align*}
\big|\SPb{&\DA\,\vp}{\Hf^{1/2}\,\zeta\,\PA\,\chi\,\HT^{-1/2}\,\psi}\big|
\,=\,
\big|\SPb{\DA\,\vp}{\Hf^{1/2}\,\zeta\,[\PA\,,\,\chi]\,\HT^{-1/2}\,\psi}\big|
\\
&\klg
\int_\RR\Big|\SPB{\DA\,\vp}{\Hf^{1/2}\,\zeta\,
[\RA{iy}\,,\,\wt{\chi}]\,
i\valpha\cdot\nabla\chi\,\RA{iy}\,\HT^{-1/2}\,\psi}\Big|
\frac{dy}{2\pi}
\\
&=\,
\int_\RR\Big|\SPB{\DA\,\vp}{\Hf^{1/2}\,\zeta\,
\RA{iy}\,i\valpha\cdot\nabla\wt{\chi}\,
\RA{iy}\,i\valpha\cdot\nabla\chi\,\RA{iy}\,\HT^{-1/2}\,\psi}\Big|
\frac{dy}{2\pi}
\\
&=\,
\int_\RR\Big|\SPB{\zeta\,\DA\,\vp}{
\RA{iy}\,\Upsilon_{0,1/2}(iy)\,i\valpha\cdot\nabla\wt{\chi}\,
\RA{iy}\,\Upsilon_{0,1/2}(iy)\,\times
\\
&\qquad\qquad\qquad\qquad\qquad\qquad
\times\,i\valpha\cdot\nabla\chi\,\RA{iy}\,\Upsilon_{0,1/2}(iy)\,\psi}\Big|
\frac{dy}{2\pi}\,.
\end{align*}
In the last step we repeatedly applied \eqref{eva2001}.
Commuting $\zeta$ and $\DA$ and using
$\|\DA\,\RA{iy}\|\klg1$, $\|\RA{iy}\|^2\klg(1+y^2)^{-1}$, and the fact that 
$\|\Upsilon_{0,1/2}(iy)\|$ is uniformly bounded in $y\in\RR$, 
we readily deduce that
$$
\big|\SPb{\DA\,\vp}{\Hf^{1/2}\,\zeta\,\PA\,\chi\,\HT^{-1/2}\,\psi}\big|
\,\klg\,C(\zeta,\chi,\wt{\chi},d_1,d_4)\,\|\vp\|\,\|\psi\|\,,
$$
which implies \eqref{vica1}. The bound
\eqref{vica2} follows from \eqref{vica1} and the inequality
$$
\|\,|\V{x}|^{-1}\,\vp\,\|^2\,\klg\,
4\,\|\,\DA\,\vp\,\|^2+4\,\|\HT^{1/2}\,\vp\|\,,
\qquad \vp\in\dom(\DO\otimes\Hf^{1/2})\,,
$$
which is a simple consequence of standard arguments
(see, e.g., \cite[Equation~(4.7)]{MatteStockmeyer2009a}).
\end{proof}


\bigskip

\noindent
{\bf Acknowledgement.}
It is a pleasure to thank Martin K\"onenberg and 
Edgardo Stockmeyer for interesting discussions and helpful remarks.

\def\cprime{$'$} \def\cprime{$'$} \def\cprime{$'$} \def\cprime{$'$}
  \def\cprime{$'$}

\end{document}